\newtheorem{lemma}{Lemma}
\newtheorem{thm}{Theorem}
\newtheorem{remrk}{Remark}
\newtheorem{prop}{Proposition}
\begin{document}

\title{Joint Beamforming for Multiaccess MIMO Systems with Finite Rate Feedback$^{*}$}

\author{Wei Dai$^{\dagger}$, Brian C. Rider$^{\dagger\dagger}$ and Youjian(Eugene)
Liu$^{\dagger}$\\
$^{\dagger}$Department of Electrical and Computer Engineering, University
of Colorado at Boulder\\
$^{\dagger\dagger}$Department of Mathematics, University of Colorado
at Boulder}

\maketitle
\begin{abstract}
This paper considers multiaccess multiple-input multiple-output (MIMO)
systems with finite rate feedback. The goal is to understand how to
efficiently employ the given finite feedback resource to maximize
the sum rate by characterizing the performance analytically. Towards
this, we propose a joint quantization and feedback strategy: the base
station selects the strongest users, jointly quantizes their strongest
eigen-channel vectors and broadcasts a common feedback to all the
users. This joint strategy is different from an individual strategy,
in which quantization and feedback are performed across users independently,
and it improves upon the individual strategy in the same way that
vector quantization improves upon scalar quantization. In our proposed
strategy, the effect of user selection is analyzed by extreme order
statistics, while the effect of joint quantization is quantified by
what we term {}``the composite Grassmann manifold''. The achievable
sum rate is then estimated by random matrix theory. Due to its simple
implementation and solid performance analysis, the proposed scheme
provides a benchmark for multiaccess MIMO systems with finite rate
feedback. 
\end{abstract}
\renewcommand{\thefootnote}{\fnsymbol{footnote}} \footnotetext[1]{This work is supported by NSF Grants CCF-0728955, ECCS-0725915, DMS-0505680 and  Thomson Inc. Part of content was presented in Allerton Conf. on Communication, Control, and Computing, 2005.} \renewcommand{\thefootnote}{\arabic{footnote}} \setcounter{footnote}{0}

\section{\label{sec:Introduction}Introduction}

This paper considers multiaccess systems, corresponding to the uplink
of cellular systems, where both the base station and the multiple
users are equipped with multiple antennas. Multiple antenna systems,
also known as multiple-input multiple-output (MIMO) systems, provide
significant benefit over single antenna systems in terms of increased
spectral efficiency and/or reliability. The full potential of MIMO
though requires perfect channel state information (CSI) at both the
transmitter and the receiver. While it is often reasonable to assume
that the receiver has perfect CSI through a pilot signal, assuming
perfect CSI at the transmitter (CSIT) is typically unrealistic. In
many practical systems, the transmitter obtains CSI through a finite
rate feedback from the receiver. Note that a wireless fading channel
may have infinitely many channel states, and a finite rate feedback
implies that CSIT is imperfect. One expects a performance degradation,
and here we focus on the quantitative effect of finite rate feedback
and the corresponding design.

Insight from single user MIMO systems with finite rate feedback proves
beneficial. Single user systems are similar to multiaccess systems
in the sense that there is only one receiver in both systems. The
receiver knows the channel states perfectly and helps transmitters
adapt their signals to maximize throughput. The essential difference
between these two types of systems lies in the modes of antenna cooperation.
In single user MIMO systems, all the transmit antennas are able to
cooperate in sending a given message. In multiaccess systems, different
users have independent messages, and transmit antennas belonging to
one user cannot aid the transmission of another user's message. Due
to this additional constraint, the analysis and design of multiaccess
systems becomes more complicated. Still, we will borrow insight from
single user systems to simplify the design of multiaccess systems.
For single user MIMO systems, strategies to maximize throughput with
perfect CSIT and without CSIT are derived and analyzed in \cite{Telatar_EuroTele99_Capacity_MIMO}.
When only finite rate feedback is available, the focus has moved toward
the development of suboptimal strategies as a simplification. The
dominant approach is based on power on/off strategy, in which a data
stream is either turned on with a pre-determined constant power or
turned off (zero power). Systems with only one stream are considered
in \cite{Sabharwal_IT03_Beamforming_MIMO,Love_IT03_Grassman_Beamforming_MIMO,Rao_SP07_Feedback_High_Resolution}.
Systems with multiple independent streams are investigated in \cite{Honig_Allerton03_Benefits_Limited_Feedback_Wireless_Channels,Rao_icc05_MIMO_spatial_multiplexing_limit_feedback,Love_IT2005_limited_feedback_unitary_precoding,Heath_ICASSP05_Quantization_Grassmann_Manifold,Dai_ISIT05_Power_onoff_strategy_design,Dai_05_Power_onoff_strategy_design_finite_rate_feedback,Dai_IT2008_Quantization_Grassmannian_manifold}.
It appears that power on/off strategy is near optimal compared to
the optimal power water-filling allocation \cite{Dai_05_Power_onoff_strategy_design_finite_rate_feedback}.

We aim to understand how to efficiently employ the given finite feedback
resource to maximize the sum rate by characterizing performance analytically.
The full multiaccess MIMO problem still appears behind reach mathematically
and is left for the future. In this paper, we propose a \emph{suboptimal}
strategy by borrowing insight and methods from single user systems.
Specifically, the base station selects the strongest users, jointly
quantizes their strongest eigen-channel vectors and broadcasts a common
feedback to all the users. Instead of designing a specific quantization
code book, we show that the performance of a random code book is optimal
in probability. After receiving feedback information, a selected on-user
employ power on/off strategy and transmit along the beamforming vector
selected by the feedback. Here, joint quantization and feedback are
employed based on the plain fact that vector quantization is better
than scalar quantization \cite[Ch. 13]{Cover_Elements_Information_Theory}.
(The precise gain will be verified empirically.) It is also worth
noting that, as we shall discuss in Section \ref{sec:Suboptimal-Feedback-Strategies}
and \ref{sec:Simulations-and-Discussion}, antenna selection can be
viewed as a simplified version of the proposed scheme.

This approach differs from the ongoing research for broadcast channels
(BC) with finite rate feedback. While there is a well known duality
between broadcast and multiaccess systems \cite{Vishwanath_IT2003_Duality_BC_MAC},
this duality requires full CSI at both the transmitters and the receivers
and is not available when only partial CSIT is provided. When CSIT
is available only through finite rate feedback, broadcast systems
suffer from the so called interference domination phenomenon \cite{Sharif_IT05_MIMO_BC_Feedback,Jindal_IT06_BC_Feedback}.
The major effort in research is to limit the interference among users.
Sharif and Hassibi select the near orthogonal channels when the number
of users is sufficiently large \cite{Sharif_IT05_MIMO_BC_Feedback,Jindal_IT06_BC_Feedback}.
As the number of users is comparable to the number of antennas at
the base station, Jindal shows that the feedback rate should be proportional
to signal-to-noise ratio (SNR) if the number of users turned on is
fixed \cite{Jindal_IT06_BC_Feedback}, while we show that the number
of users should be adapt to the SNR if the feedback rate is given
\cite{Dai_CISS2007_broadcast_multiaccess_channels_single_antenna_users}.
However, the interference domination phenomenon does not appear in
multiaccess systems. Note that the search of near orthogonal channels
suffers from exponential increasing complexity. Neither the results
nor the methods for broadcast systems can be directly applied to the
problem discussed in this paper.

Though the strategy in this paper is relatively simple, the corresponding
performance analysis is nontrivial. Our main analytical result is
an upper bound on the sum rate, which to our knowledge is the best
to date. The effect of user/antenna selection is analyzed by extreme
order statistics, and the effect of eigen-channel vectors joint quantization
is quantified via \emph{the composite Grassmann manifold}. Interestingly,
the complicated effect of imperfect CSIT and feedback is eventually
described by a single constant, which we term \emph{the power efficiency
factor}. Successful evaluation of the power efficiency factor enables
us characterize the upper bound on the sum rate. The anticipated goodness
of the upper bound is supported by simulation of several systems with
a large range of SNRs.

The rest of this paper is organized as follows. The general model
for multiaccess systems with finite rate feedback is described in
Section \ref{sec:System-Model}. The mathematical results developed
for performance analysis are assembled in Section \ref{sec:Mathematical-Results}.
The antenna selection strategy is analyzed in Section \ref{sub:Antenna-Selection}.
Then a suboptimal strategy is proposed and analyzed in Section \ref{sub:General-Beamforming-Strategy}.
In Section \ref{sec:Simulations-and-Discussion}, simulation results
are presented and discussed. Finally, Section \ref{sec:Conclusion}
summarizes the paper.

\section{\label{sec:System-Model}System Model}

Assume that there are $L_{R}$ antennas at the base station and $N$
users communicating with the base station. Assume that the user $i$%
\footnote{When a user joins the multiaccess system, a unique index is assigned
and keeps constant. A user in a multiaccess system is aware of the
corresponding index.%
} has $L_{T,i}$ transmit antennas $1\leq i\leq N$. Throughout we
will set $L_{T,1}=\cdots=L_{T,N}=L_{T}$. The signal transmission
model is \[
\mathbf{Y}=\sum_{i=1}^{N}\mathbf{H}_{i}\mathbf{T}_{i}+\mathbf{W},\]
 where $\mathbf{Y}\in\mathbb{C}^{L_{R}\times1}$ is the received signal
at the base station, $\mathbf{H}_{i}\in\mathbb{C}^{L_{R}\times L_{T}}$
is the channel state matrix for user $i$, $\mathbf{T}_{i}\in\mathbb{C}^{L_{T}\times1}$
is the transmitted Gaussian signal vector for user $i$ and $\mathbf{W}\in\mathbb{C}^{L_{R}\times1}$
is the additive Gaussian noise vector with zero mean and covariance
matrix $\mathbf{I}_{L_{R}}$. We assume the Rayleigh fading channel
model: the entries of $\mathbf{H}_{i}$'s are independent and identically
distributed (i.i.d.) circularly symmetric complex Gaussian variables
with zero mean and unit variance ($\mathcal{CN}\left(0,1\right)$),
and $\mathbf{H}_{i}$'s are independent across $i$.

We further assume that there exists a feedback link from the base
station to the users. At the beginning of each channel use, the channel
states $\mathbf{H}_{i}$'s are perfectly estimated at the receiver
(the base station). This assumption is valid in practice since most
communication standards allow the receiver to learn the channel states
from pilot signals. A common message, which is a function of the channel
states, is sent back to all users through the feedback link. We assume
that the feedback link is rate limited and error-free. The feedback
directs the users to choose their Gaussian signal covariance matrices.
In a multiaccess communication system, different users cannot cooperate
in terms of information message, leading to $\mathrm{E}\left[\mathbf{T}_{i}\mathbf{T}_{j}^{\dagger}\right]=\mathbf{0}$
for $i\ne j$. Let $\mathbf{T}=\left[\mathbf{T}_{1}^{\dagger}\cdots\mathbf{T}_{N}^{\dagger}\right]^{\dagger}$
be the overall transmitted Gaussian signal for all users and $\mathbf{\Sigma}\triangleq\mathrm{E}\left[\mathbf{T}\mathbf{T}^{\dagger}\right]$
be the overall signal covariance matrix. Then $\mathbf{\Sigma}$ is
an $NL_{T}\times NL_{T}$ block diagonal matrix whose $i^{\mathrm{th}}$
diagonal block is the $L_{T}\times L_{T}$ covariance matrix $\mathrm{E}\left[\mathbf{T}_{i}\mathbf{T}_{i}^{\dagger}\right]$.
Let $\mathbf{H}=\left[\mathbf{H}_{1}\mathbf{H}_{2}\cdots\mathbf{H}_{N}\right]$
be the overall channel state matrix. An extension of \cite{Lau_IT04_Capacity_Memoryless_Block_Fading}
shows that the optimal feedback strategy is to feedback the index
of an appropriate covariance matrix, which is a function of current
channel state $\mathbf{H}$. Last, assume that there is a covariance
matrix codebook $\mathcal{B}_{\mathbf{\Sigma}}=\left\{ \mathbf{\Sigma}_{1},\cdots,\mathbf{\Sigma}_{K_{\mathcal{B}}}\right\} $
(with finite size) declared to both the base station and the users,
where each $\mathbf{\Sigma}_{k}\in\mathcal{B}_{\mathbf{\Sigma}}$
is the overall signal covariance matrix with block diagonal structure
just described, and $K_{\mathcal{B}}$ is the size of the codebook.
The feedback function $\varphi$ is a map from $\left\{ \mathbf{H}\in\mathbb{C}^{L_{R}\times NL_{T}}\right\} $
onto the index set $\left\{ 1,\cdots,K_{\mathcal{B}}\right\} $. Subjected
to this finite rate feedback constraint\[
\left|\mathcal{B}_{\mathbf{\Sigma}}\right|=K_{\mathcal{B}}\]
 and the average total transmission power constraint\[
\mathrm{E}_{\mathbf{H}}\left[\mathrm{tr}\left(\mathbf{\Sigma}_{\varphi\left(\mathbf{H}\right)}\right)\right]\leq\rho,\]
 the sum rate of the optimal feedback strategy is given by \begin{equation}
\underset{\mathcal{B}_{\mathbf{\Sigma}}}{\sup}\;\underset{\varphi\left(\cdot\right)}{\sup}\;\mathrm{E}_{\mathbf{H}}\left[\log\left|\mathbf{I}_{L_{R}}+\mathbf{H}\mathbf{\Sigma}_{\varphi\left(\mathbf{H}\right)}\mathbf{H}^{\dagger}\right|\right].\label{eq:sum_rate_optimal}\end{equation}
 Here, since only symmetric systems are concerned, the total power
constraint $\rho$ is equivalent to individual power constraint $\rho/N$.
Note that the optimal strategy involves two coupled optimization problems.
It is difficult, if not impossible, to find its explicit form and
performance. Instead, we shall study two suboptimal strategies and
characterize their sum rates in Section \ref{sec:Suboptimal-Feedback-Strategies}.

\section{\label{sec:Mathematical-Results}Preliminaries}

This section assembles mathematical results required for later analysis.
The reader may proceed directly to Section \ref{sec:Suboptimal-Feedback-Strategies}
for the main engineering results.

\subsection{\label{sub:Extreme-Chi2}Order Statistics for Chi-Square Random Variables}

Define $X_{i}=\sum_{j=1}^{L}\left|h_{i,j}\right|^{2}$ where $h_{i,j}\;1\leq j\leq L,\;1\leq i\leq n$
are i.i.d. $\mathcal{CN}\left(0,1\right)$. Then each $X_{i}$ has
a Chi-square distribution with probability density functions (PDF)
\[
f_{X}\left(x\right)=\frac{1}{\left(L-1\right)!}x^{L-1}e^{-x}.\]
 Denote the corresponding cumulative distribution function (CDF) by
$F_{X}\left(x\right)$. Next introduce the order statistics for these
variables: that is the non-decreasing list $X_{\left(1:n\right)}\leq X_{\left(2:n\right)}\leq\cdots\leq X_{\left(n:n\right)}$
connected with each realization. Here, the subscript $\left(k:n\right)$
indicates that $X_{\left(k:n\right)}$ is the $k^{\mathrm{th}}$ minima.
(We follow the convention of \cite{JanosGalambos1987_extreme_order_statistics}.)
Note of course that ties occur with probability zero and can be broken
arbitrarily.

We will need the following, which is proved in Appendix \ref{sub:Proof-of-Extreme-Order-Statistics}.

\begin{lemma}
\label{lem:Expectation-extreme-chi2}With the notation set out above,
for any fixed positive integer $s$ it holds \begin{equation}
\underset{n\rightarrow+\infty}{\lim}\;\mathrm{E}\left[\frac{\sum_{k=1}^{s}X_{\left(n-k+1:n\right)}-sa_{n}}{b_{n}}\right]=s\left(\mu_{1}+1-\sum_{k=1}^{s}\frac{1}{k}\right),\label{eq:Expectation-extreme-chi2}\end{equation}
 where \[
a_{n}=\inf\left\{ x:\;1-F_{X}\left(x\right)\leq\frac{1}{n}\right\} ,\]
 \[
b_{n}=\frac{\sum_{i=0}^{L-1}\frac{L-i}{i!}a_{n}^{i}}{\sum_{i=0}^{L-1}\frac{1}{i!}a_{n}^{i}},\]
 and $\mu_{1}=\int_{-\infty}^{+\infty}xde^{-e^{-x}}$ may be computed
numerically. %=0.577216\dots.$

\end{lemma}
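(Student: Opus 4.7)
The approach I would take rests on classical extreme-value theory together with a uniform-integrability argument. Since the survival function $F_X^c(x)=e^{-x}\sum_{i=0}^{L-1}x^i/i!$ decays essentially like $e^{-x}$, the chi-square law lies in the Gumbel max-domain of attraction, and the constants $a_n,b_n$ in the statement are valid norming sequences: $a_n$ is by construction the $(1-1/n)$-quantile of $F_X$, while a direct computation (using the recursion $\int_{a_n}^{\infty}e^{-x}x^i\,dx=i!\,e^{-a_n}\sum_{j=0}^{i}a_n^j/j!$ and interchanging the order of summation) identifies $b_n$ with the mean-residual-life $\mathrm{E}\!\left[X-a_n\mid X>a_n\right]$, which is a classical von~Mises norming constant.

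My first step would be to invoke the point-process representation of extremes (see, e.g., Resnick, \emph{Extreme Values, Regular Variation and Point Processes}): the empirical measure $\xi_n=\sum_{i=1}^n\delta_{(X_i-a_n)/b_n}$ converges weakly on $\mathbb{R}$ to a Poisson point process of intensity $e^{-x}\,dx$. By a continuous-mapping argument applied to the top-$s$ functional this yields
$$
\left(\frac{X_{(n:n)}-a_n}{b_n},\ldots,\frac{X_{(n-s+1:n)}-a_n}{b_n}\right)\;\stackrel{d}{\longrightarrow}\;(Z_1,\ldots,Z_s),
$$
where, for i.i.d.\ unit-mean exponentials $E_1,E_2,\ldots$ with partial sums $\Gamma_k=E_1+\cdots+E_k$, one has $Z_k=-\log\Gamma_k$. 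In particular $Z_1$ has the standard Gumbel law with CDF $e^{-e^{-x}}$ and mean $\mu_1$.

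Computing the limiting expectation is then straightforward: since $\Gamma_k\sim\mathrm{Gamma}(k,1)$, $\mathrm{E}[\log\Gamma_k]=\psi(k)=-\mu_1+H_{k-1}$, where $\psi$ is the digamma function and $H_k=\sum_{j=1}^k 1/j$ with $H_0=0$. Hence $\mathrm{E}[Z_k]=\mu_1-H_{k-1}$, and a short manipulation using $\sum_{k=1}^{s-1}H_k=sH_{s-1}-(s-1)$ yields
$$
\sum_{k=1}^{s}\mathrm{E}[Z_k]=s\mu_1+(s-1)-sH_{s-1}=s\!\left(\mu_1+1-\sum_{k=1}^{s}\tfrac{1}{k}\right),
$$
matching the asserted right-hand side.

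The main obstacle is upgrading distributional convergence to convergence of expectations by establishing uniform integrability of $(X_{(n-k+1:n)}-a_n)/b_n$ for each $k\in\{1,\ldots,s\}$. For the upper tail, $\{X_{(n-k+1:n)}>a_n+b_nt\}\subset\{X_{(n:n)}>a_n+b_nt\}$, and combining the elementary inequality $1-(1-x)^n\le nx$ with a ratio estimate $F_X^c(a_n+b_nt)/F_X^c(a_n)\le C e^{-t}$, valid uniformly for $t\ge 0$ and $n$ large, gives a geometric tail. For the lower tail, $\{X_{(n-k+1:n)}<a_n-b_nt\}$ forces at most $k-1$ of the $X_i$ to exceed $a_n-b_nt$, and a Chernoff-type bound on that binomial tail, using that $nF_X^c(a_n-b_nt)$ grows super-exponentially in $t$, produces a doubly-exponential tail. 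The delicate piece is the upper-tail ratio bound: since $b_n\to 1$ while $a_n\to\infty$, it reduces to a polynomial-versus-exponential comparison of $\sum_{i=0}^{L-1}(a_n+b_nt)^i/i!$ against $\sum_{i=0}^{L-1}a_n^i/i!$, controllable by a Taylor-expansion argument. Once uniform $L^{1+\delta}$-boundedness is in hand, the three steps combine to prove the lemma.
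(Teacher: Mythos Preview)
Your proof is correct and takes a genuinely different route from the paper's. The paper proceeds more elementarily and marginally: it first verifies directly that $(X_{(n:n)}-a_n)/b_n\Rightarrow$ Gumbel, then obtains the limiting CDF of each $k$-th maximum separately via the binomial identity $\Pr(X_{(n-k+1:n)}\le z)=\sum_{t=0}^{k-1}\binom{n}{t}(1-F_X(z))^tF_X(z)^{n-t}$, arriving at the density $h_k(x)=e^{-e^{-x}}e^{-kx}/(k-1)!$, and extracts the recursion $\mu_k=\mu_{k-1}-1/(k-1)$ by a bare-hands integration by parts on $\int x\,h_k(x)\,dx$. Your point-process representation $Z_k=-\log\Gamma_k$ and the digamma identity $\psi(k)=-\mu_1+H_{k-1}$ give the same $\mu_k=\mu_1-H_{k-1}$ more conceptually, and deliver joint convergence of the top $s$ order statistics for free, at the cost of invoking heavier machinery. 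On the passage from distributional to $L^1$ convergence you are in fact more careful than the paper, which simply asserts that ``the tail is of sufficient decay''; one small correction is that the survival-ratio bound should read $F_X^c(a_n+b_nt)/F_X^c(a_n)\le C(1+t)^{L-1}e^{-b_nt}$ rather than $Ce^{-t}$, but this polynomial factor is of course harmless for the $L^{1+\delta}$ control you need.
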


The limiting result in expectation immediately provides the following
approximation for a fixed $s$: \begin{equation}
\mathrm{E}\left[\sum_{k=1}^{s}X_{\left(n-k+1:n\right)}-sa_{n}\right]=sb_{n}\left(\mu_{1}+1-\sum_{i=1}^{s}\frac{1}{i}\right)\left(1+o\left(1\right)\right).\label{eq:approx-order-statistics}\end{equation}
 The shape of $F_{X}$ guarantees that $a_{n}$ and so $b_{n}$ are
finite for any fixed $n$ but tend to infinity and one respectively
with this parameter.

\subsection{\label{sub:Conditioned-Eigen}Conditioned Eigenvalues of the Wishart
Matrix}

Let $\mathbf{H}\in\mathbb{L}^{n\times m}$ be a random $n\times m$
matrix whose entries are i.i.d. Gaussian random variables with zero
mean and unit variance, where $\mathbb{L}$ is either $\mathbb{R}$
or $\mathbb{C}$. Throughout, we refer to $\mathbf{H}$ as the standard
Gaussian random matrix. Let $\lambda_{1}\ge\lambda_{2}\ge\cdots\ge\lambda_{n}$
be the ordered eigenvalues of $\mathbf{W}=\mathbf{H}\mathbf{H}^{\dagger}$
($\mathbf{W}$ is Wishart distributed \cite{Muirhead_book82_multivariate_statistics}).

This subsection takes up an estimate of $\mathrm{E}\left[\left.\lambda_{1}\right|\mathrm{tr}\left(\mathbf{W}\right)\right]$,
where $\mathrm{tr}\left(\cdot\right)$ is the usual matrix trace.
In particular, while a closed formula for this object would be rather
involved, we may use random matrix theory to obtain an approximation.
The first ingredient is the following.

\begin{lemma}
\label{lem:conditional-expectation-Wishart}Let $\mathbf{H}\in\mathbb{L}^{n\times m}$
(w.l.o.g. $n\le m$)%
\footnote{If $n>m$, $\mathrm{E}\left[\lambda_{i}|\mathrm{tr}\left(\mathbf{HH}^{\dagger}\right)=c\right]=0$
for $i>m$ and $\mathrm{E}\left[\lambda_{i}|\mathrm{tr}\left(\mathbf{HH}^{\dagger}\right)=c\right]=\zeta_{i}^{\prime}c$
for $i\le m$, where $\zeta_{i}^{\prime}:=\frac{1}{c}\mathrm{E}\left[\lambda_{i}|\mathrm{tr}\left(\mathbf{H}^{\dagger}\mathbf{H}\right)=c\right]$.
The calculation of $\zeta_{i}^{\prime}$ for $i\le m$ is included
in Theorem \ref{lem:conditional-expectation-Wishart} as well.%
} be a standard random Gaussian matrix. Let $\lambda_{1}\ge\lambda_{2}\ge\cdots\ge\lambda_{n}$
be the ordered eigenvalues of $\mathbf{W}=\mathbf{H}\mathbf{H}^{\dagger}$.
Then\[
\mathrm{E}\left[\lambda_{i}|\mathrm{tr}\left(\mathbf{W}\right)=c\right]=\zeta_{i}c,\]
 where\begin{equation}
\zeta_{i}=\mathrm{E}\left[\lambda_{i}|\mathrm{tr}\left(\mathbf{W}\right)=1\right].\label{eq:zeta-definition}\end{equation}
 \begin{equation}
,\label{eq:zeta-definition}\end{equation}
 $\beta=1$ if $\mathbb{L}=\mathbb{R}$ or $\beta=2$ if $\mathbb{L}=\mathbb{C}$,
and $\left|\Delta_{n}\left(\bm{\lambda}\right)\right|=\prod_{i<j}^{n}\left(\lambda_{i}-\lambda_{j}\right)$.
\end{lemma}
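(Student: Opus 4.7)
The plan is to exploit a homogeneity property of the joint density of the ordered eigenvalues of $\mathbf{W}$. On the region $\lambda_1 \ge \cdots \ge \lambda_n \ge 0$, this density has the form
\[
f(\lambda_1,\dots,\lambda_n) = C_{n,m,\beta}\,\prod_{i=1}^{n}\lambda_i^{\beta(m-n+1)/2-1}\,e^{-\beta\lambda_i/2}\,|\Delta_n(\bm{\lambda})|^{\beta},
\]
with $\beta=1$ or $2$ according to whether $\mathbb{L}=\mathbb{R}$ or $\mathbb{C}$. The only factor in $f$ that is \emph{not} positively homogeneous in $\bm{\lambda}$ is the exponential, and this is the structural feature the proof will exploit.

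First I would perform a polar-type change of variables $(\lambda_1,\dots,\lambda_n)\mapsto(r,\bm{\mu})$, where $r=\sum_{i}\lambda_i=\mathrm{tr}(\mathbf{W})$ and $\mu_i=\lambda_i/r$ subject to $\mu_1+\cdots+\mu_n=1$. The Jacobian is $r^{n-1}$, and tallying the powers of $r$ coming from the Vandermonde $|\Delta_n|^\beta$, the $\lambda_i^{\beta(m-n+1)/2-1}$ factors, and the Jacobian produces the clean factorization
\[
f(r,\bm{\mu})=C_{n,m,\beta}\,r^{\beta nm/2 - 1}\,e^{-\beta r/2}\,h(\bm{\mu}),
\]
where $h(\bm{\mu})=\prod_{i=1}^n\mu_i^{\beta(m-n+1)/2-1}\,|\Delta_n(\bm{\mu})|^{\beta}$ is supported on the ordered unit simplex and depends on $\bm{\mu}$ alone.

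From this factorization two facts follow at once. First, $r$ and $\bm{\mu}$ are independent, with $r$ Gamma-distributed (shape $\beta nm/2$, rate $\beta/2$) and $\bm{\mu}$ having density proportional to $h$. Second, since $\lambda_i=r\mu_i$, the independence yields
\[
\mathrm{E}[\lambda_i\,|\,\mathrm{tr}(\mathbf{W})=c] = c\,\mathrm{E}[\mu_i],
\]
which in particular, at $c=1$, identifies $\mathrm{E}[\mu_i]$ with $\zeta_i=\mathrm{E}[\lambda_i\,|\,\mathrm{tr}(\mathbf{W})=1]$. Combining, $\mathrm{E}[\lambda_i\,|\,\mathrm{tr}(\mathbf{W})=c]=\zeta_i c$, as claimed. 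The explicit integral representation of $\zeta_i$ then follows by writing $\mathrm{E}[\mu_i]$ as the integral of $\mu_i\,h(\bm{\mu})$ over the ordered simplex, divided by the corresponding normalization of $h$.

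The main obstacle is essentially bookkeeping: one must correctly collect the degrees of homogeneity of the eigenvalue density together with the $r^{n-1}$ Jacobian so that \emph{all} $r$-dependence is gathered into a single Gamma factor independent of $\bm{\mu}$. Once this algebra is verified, both the scaling conclusion and the independence that underlies it are immediate, and the footnote case $n>m$ reduces to the present one simply by working with $\mathbf{H}^\dagger\mathbf{H}$ instead of $\mathbf{H}\mathbf{H}^\dagger$.
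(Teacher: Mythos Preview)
Your argument is correct and follows essentially the same route as the paper: both exploit the homogeneity of the non-exponential part of the ordered-eigenvalue density via the rescaling $\lambda_i\mapsto\lambda_i/c$. The paper simply writes out the conditional expectation as a ratio of integrals over $\{\sum\lambda_j=c\}$ and changes variables to $\{\sum\lambda_j'=1\}$, whereas you go one step further and read off the full independence of $r=\mathrm{tr}(\mathbf{W})$ and $\bm{\mu}=\bm{\lambda}/r$ from the product form of the density; this is a slightly stronger (and cleaner) statement, but the underlying mechanism is identical.
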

\begin{proof}
The joint density of the ordered eigenvalues of $\mathbf{W}$ is known
to be \[
K_{m,n,\beta}^{-1}e^{-\frac{\beta}{2}\sum_{i}\lambda_{i}}\prod_{i=1}^{m}\lambda_{i}^{\frac{\beta}{2}\left(n-m+1\right)-1}\left|\Delta_{n}\left(\bm{\lambda}\right)\right|^{\beta},\]
where $\lambda_{1}\geq\cdots\geq\lambda_{m}\geq0$, $\left|\Delta_{n}\left(\bm{\lambda}\right)\right|=\prod_{i<j}^{n}\left(\lambda_{i}-\lambda_{j}\right)$,
\[
\beta=\begin{cases}
1 & \mathrm{if}\;\mathbb{L}=\mathbb{R}\\
2 & \mathrm{if}\;\mathbb{L}=\mathbb{C}\end{cases},\]
and $K_{m,n,\beta}$ is a normalizing factor (\cite[pg. 107]{Muirhead_book82_multivariate_statistics}
for the real case and \cite{Telatar_EuroTele99_Capacity_MIMO} for
the complex case). Write out the formula for $\mathrm{E}\left[\lambda_{i}|\sum_{i=1}^{m}\lambda_{i}=c\right]$
and use the variable change $\lambda_{i}'=\frac{\lambda_{i}}{c}$.
After some elementary calculations, it can be shown that \[
\zeta_{i}=\frac{\int_{\underset{\lambda_{1}\geq\cdots\geq\lambda_{n}}{\sum\lambda_{j}=1}}\lambda_{i}\prod_{j=1}^{n}\lambda_{j}^{\frac{\beta}{2}\left(m-n+1\right)-1}\left|\Delta_{n}\left(\bm{\lambda}\right)\right|^{\beta}\prod_{j=1}^{n}d\lambda_{j}}{\int_{\underset{\lambda_{1}\geq\cdots\geq\lambda_{n}}{\sum\lambda_{j}=1}}\prod_{j=1}^{n}\lambda_{j}^{\frac{\beta}{2}\left(m-n+1\right)-1}\left|\Delta_{n}\left(\bm{\lambda}\right)\right|^{\beta}\prod_{j=1}^{n}d\lambda_{j}}=\mathrm{E}\left[\lambda_{i}|\mathrm{tr}\left(\mathbf{W}\right)=1\right].\]

\end{proof}

Given the preceding observation, we require an estimate for $\zeta_{1}$
in (\ref{eq:zeta-definition}). For this we turn to the asymptotic
behavior of the ordered eigenvalues.

\begin{lemma}
\label{lem:condition-expectation-asymptotics}Let $\lambda_{1}\ge\lambda_{2}\ge\cdots\ge\lambda_{n}$
be the ordered eigenvalues of $\frac{1}{m}\mathbf{H}\mathbf{H}^{\dagger}$,
where $\mathbf{H}\in\mathbb{L}^{n\times m}$ $\left(\mathbb{L}=\mathbb{R}\;\mathrm{or}\;\mathbb{C}\right)$
is a standard random Gaussian matrix. As $n,m\rightarrow\infty$ with
$\frac{m}{n}\rightarrow\bar{m}\in\mathbb{R}^{+}$, for a given $\tau\in\left(0,\min\left(1,\bar{m}\right)\right)$,
\begin{align*}
\bar{\zeta}_{\tau} & :=\underset{\left(n,m\right)\rightarrow\infty}{\lim}\mathrm{E}\left[\frac{1}{n}\left(\sum_{1\le i\le n\tau}\lambda_{i}\right)\right]\\
 & =\frac{\bar{m}}{2\pi}\left[\frac{1+\frac{1}{\bar{m}}-a}{2}\sqrt{\left(\lambda^{+}-a\right)\left(a-\lambda^{-}\right)}+\frac{2}{\bar{m}}\left(\frac{\pi}{2}+\sin^{-1}\frac{\sqrt{\bar{m}}\left(1+\frac{1}{\bar{m}}-a\right)}{2}\right)\right],\end{align*}
 where $\lambda^{+}=\left(1+\sqrt{\frac{1}{\bar{m}}}\right)^{2}$,
$\lambda^{-}=\left(1-\sqrt{\frac{1}{\bar{m}}}\right)^{2}$, and $a\in\left(\lambda^{-},\lambda^{+}\right)$
satisfies \begin{align*}
\tau & =\frac{\bar{m}}{2\pi}\left[-\sqrt{\left(\lambda^{+}-a\right)\left(a-\lambda^{-}\right)}+\frac{1+\bar{m}}{\bar{m}}\left(\frac{\pi}{2}+\sin^{-1}\frac{\sqrt{\bar{m}}\left(1+\frac{1}{\bar{m}}-a\right)}{2}\right)\right.\\
 & \quad\quad\quad\left.-\frac{\left|\bar{m}-1\right|}{\bar{m}}\left(\frac{\pi}{2}-\sin^{-1}\frac{\sqrt{\bar{m}}}{2}\frac{\left(1+\frac{1}{\bar{m}}\right)a-\left(1-\frac{1}{\bar{m}}\right)^{2}}{a}\right)\right].\end{align*}

\end{lemma}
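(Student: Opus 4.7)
The plan is to recognize the left-hand side as a linear statistic of the empirical spectral measure of $\tfrac{1}{m}\mathbf{H}\mathbf{H}^{\dagger}$ and apply the Marchenko--Pastur (MP) theorem. As $(n,m)\to\infty$ with $m/n\to\bar{m}$, the empirical spectral measure converges weakly almost surely to the MP distribution with parameter $1/\bar m$, whose absolutely continuous part has density
\[
f_{MP}(\lambda) \;=\; \frac{\bar m}{2\pi}\,\frac{\sqrt{(\lambda^{+}-\lambda)(\lambda-\lambda^{-})}}{\lambda}, \qquad \lambda\in[\lambda^{-},\lambda^{+}],
\]
with $\lambda^{\pm}$ as in the statement, and (only when $\bar m<1$) a point mass $1-\bar m$ at $0$. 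Since we are looking at the top $\lfloor n\tau\rfloor$ eigenvalues with $\tau<\min(1,\bar m)$, the mass at $0$ is irrelevant: the quantile $a\in(\lambda^{-},\lambda^{+})$ is determined by $\int_{a}^{\lambda^{+}} f_{MP}(\lambda)\,d\lambda=\tau$, and the sought limit is $\bar\zeta_{\tau}=\int_{a}^{\lambda^{+}}\lambda f_{MP}(\lambda)\,d\lambda$.

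To promote weak convergence to convergence in expectation of $\frac{1}{n}\sum_{i\le n\tau}\lambda_{i}$, I would combine two standard facts: Bai--Yin gives $\lambda_{1}\to\lambda^{+}$ a.s., so the truncated sum is a.s.\ bounded by $\tau\lambda^{+}+o(1)$; and concentration for Lipschitz functionals of Gaussian matrices (or just second-moment bounds, using $\mathrm{E}[\lambda_{1}^{2}]$ bounded by $\mathrm{E}[\mathrm{tr}(\mathbf{W}^{2})]/m^{2}$) gives uniform integrability. Thus $\mathrm{E}[\tfrac{1}{n}\sum_{i\le n\tau}\lambda_{i}]\to \bar\zeta_{\tau}$.

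It then remains to evaluate the two integrals explicitly. Let $A=1+\tfrac1{\bar m}$ and $B=\tfrac{2}{\sqrt{\bar m}}$, so $\lambda^{\pm}=A\pm B$, $\lambda^{+}\lambda^{-}=(1-\tfrac{1}{\bar m})^{2}$, and $A^{2}-B^{2}=(1-\tfrac{1}{\bar m})^{2}$. I would use the trigonometric substitution $\lambda = A-B\sin\theta$, under which $\sqrt{(\lambda^{+}-\lambda)(\lambda-\lambda^{-})}=B\cos\theta$ and $d\lambda=-B\cos\theta\,d\theta$; the endpoints become $\theta=-\pi/2$ at $\lambda=\lambda^{+}$ and $\theta_{a}=\sin^{-1}\!\bigl(\tfrac{\sqrt{\bar m}(1+1/\bar m-a)}{2}\bigr)$ at $\lambda=a$. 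For $\bar\zeta_{\tau}$ this reduces the numerator integral to $\int \cos^{2}\theta\,d\theta$, yielding exactly the expression in the statement after substituting $\sin\theta_{a}\cos\theta_{a}$ back in terms of $a$ and $\sqrt{(\lambda^{+}-a)(a-\lambda^{-})}$.

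The harder piece, and the main obstacle, is the $\tau$-equation, because division by $\lambda$ produces $\int \cos^{2}\theta/(A-B\sin\theta)\,d\theta$. I would perform polynomial division in $\sin\theta$:
\[
\frac{\cos^{2}\theta}{A-B\sin\theta} \;=\; \frac{\sin\theta}{B}+\frac{A}{B^{2}}+\frac{B^{2}-A^{2}}{B^{2}(A-B\sin\theta)},
\]
so the first two pieces integrate to the elementary terms $\sqrt{(\lambda^{+}-a)(a-\lambda^{-})}$ and $\tfrac{(1+\bar m)}{\bar m}(\tfrac{\pi}{2}+\theta_{a})$ appearing in the first two summands of the formula. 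The residual integral $\int d\theta/(A-B\sin\theta)$ is handled by the Weierstrass substitution $t=\tan(\theta/2)$ and yields an $\arctan$ whose argument, after back-substitution and the identity $\arctan x=\tfrac{\pi}{2}-\sin^{-1}\!\tfrac{1}{\sqrt{1+x^{2}}}\operatorname{sgn} x$, can be rewritten as $\sin^{-1}\!\bigl(\tfrac{\sqrt{\bar m}}{2}\cdot\tfrac{(1+1/\bar m)a-(1-1/\bar m)^{2}}{a}\bigr)$; the prefactor $\sqrt{A^{2}-B^{2}}=|1-1/\bar m|$ explains the $|\bar m-1|/\bar m$ in the final expression, uniformly for $\bar m\gtrless 1$. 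Collecting and multiplying by the common factor $\bar m/(2\pi)$ recovers the stated defining equation for $a$.
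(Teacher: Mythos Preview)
Your proposal is correct and follows essentially the same route as the paper. The paper states that the lemma is ``an extension of Theorem~\ref{thm:truncate-function-RMT} \ldots\ with explicit evaluation of the integrals,'' and that theorem is proved in the appendix via the Mar\u{c}enko--Pastur law together with Bai--Yin for $\lambda_{1}$ and a squeeze argument on the events $\{\,|\{\lambda_i\ge a\pm\epsilon\}|/n \gtrless \tau\,\}$; your MP\,+\,Bai--Yin\,+\,uniform-integrability sketch is the same idea, and you supply the trigonometric/Weierstrass integral computations that the paper leaves implicit.
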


This lemma is an extension of Theorem \ref{thm:truncate-function-RMT}
in Appendix \ref{sub:Random-Matrix-Theory} with explicit evaluation
of the integrals appearing in that statement. 

Motivated by the observation that the expectation of a fixed fraction
of the ordered eigenvalues converges to its limit quickly\cite{Dai_05_Power_onoff_strategy_design_finite_rate_feedback},
we approximate $\zeta_{1}$ by $\bar{\zeta}_{\frac{1}{n}}$ for fixed
finite $n$ and $m$.

\subsection{\label{sub:CGManifold}The Grassmann Manifold and the Composite Grassmann
Manifold}

As demonstrated in \cite{Dai_ISIT05_Power_onoff_strategy_design,Dai_05_Power_onoff_strategy_design_finite_rate_feedback},
the Grassmann manifold is closely related to eigen-channel vector
quantization, and here we introduce the composite Grassmann manifold.
The results developed here help quantify the effect of eigen-channel
vector quantization in multiaccess systems (see Section \ref{sub:General-Beamforming-Strategy}
for details).

The \emph{Grassmann manifold} $\mathcal{G}_{n,p}\left(\mathbb{L}\right)$
is the set of all $p$-dimensional planes (through the origin) in
the $n$-dimensional Euclidean space $\mathbb{L}^{n}$, where $\mathbb{L}$
is either $\mathbb{R}$ or $\mathbb{C}$. A generator matrix $\mathbf{P}\in\mathbb{L}^{n\times p}$
for a plane $P\in\mathcal{G}_{n,p}\left(\mathbb{L}\right)$ is a matrix
whose columns are orthonormal and span $P$. For a given $P\in\mathcal{G}_{n,p}\left(\mathbb{L}\right)$,
its generator matrix is not unique: if $\mathbf{P}$ generates $P$
then $\mathbf{PU}$ also generates $P$ for any $p\times p$ orthogonal/unitary
matrix $\mathbf{U}$ (with respect to $\mathbb{L}=\mathbb{R}/\mathbb{C}$
respectively) \cite{Conway_96_PackingLinesPlanes}. The chordal distance
between two planes $P_{1},P_{2}\in\mathcal{G}_{n,p}\left(\mathbb{L}\right)$
can be defined by their generator matrices $\mathbf{P}_{1}$ and $\mathbf{P}_{2}$
via \[
d_{c}\left(P_{1},P_{2}\right)=\sqrt{p-\mathrm{tr}\left(\mathbf{P}_{1}^{\dagger}\mathbf{P}_{2}\mathbf{P}_{2}^{\dagger}\mathbf{P}_{1}\right)}.\]
 The isotropic measure $\mu$ on $\mathcal{G}_{n,p}\left(\mathbb{L}\right)$
is the Haar measure on $\mathcal{G}_{n,p}\left(\mathbb{L}\right)$%
\footnote{The Haar measure is well defined for locally compact topological groups
\cite{Haar_1933_Haar_Measure,Muirhead_book82_multivariate_statistics},
and therefore for the Grassmann manifold, the composite Grassmann
manifold and the composite Grassmann matrices. Here, the group right
and left operations are clear from context. %
}. Let $O\left(n\right)$ and $U\left(n\right)$ be the sets of $n\times n$
orthogonal and unitary matrices respectively. Let $\mathbf{A}\in O\left(n\right)$
when $\mathbb{L}=\mathbb{R}$, or $\mathbf{A}\in U\left(n\right)$
when $\mathbb{L}=\mathbb{C}$. For any measurable set $\mathcal{M}\subset\mathcal{G}_{n,p}\left(\mathbb{L}\right)$
and arbitrary $\mathbf{A}$, $\mu$ satisfies \[
\mu\left(\mathbf{A}\mathcal{M}\right)=\mu\left(\mathcal{M}\right).\]

Given above definitions, the distortion rate tradeoff on the Grassmann
manifold is quantified in \cite{Dai_Globecom05_Quantization_bounds_Grassmann_manifold,Dai_IT2008_Quantization_Grassmannian_manifold}.
A quantization $\mathfrak{q}$ is a mapping from $\mathcal{G}_{n,p}\left(\mathbb{L}\right)$
to a discrete subset of $\mathcal{G}_{n,p}\left(\mathbb{L}\right)$,
which is typically called a code $\mathcal{C}$. For the sake of application,
the quantization \begin{align*}
\mathfrak{q}:\;\mathcal{G}_{n,p}\left(\mathbb{L}\right) & \rightarrow\mathcal{C}\\
Q & \mapsto\mathfrak{q}\left(Q\right)=\arg\;\underset{P\in\mathcal{C}}{\min}\; d_{c}\left(P,Q\right)\end{align*}
 is of particular interest. Define the distortion metric on $\mathcal{G}_{n,p}\left(\mathbb{L}\right)$
as the squared chordal distance. Let $Q\in\mathcal{G}_{n,p}\left(\mathbb{L}\right)$
be isotropically distributed (the probability measure is the isotropic
measure). For a given code $\mathcal{C}$, the distortion associated
with this codebook is defined as \[
D\left(\mathcal{C}\right)=\mathrm{E}_{Q}\left[\underset{P\in\mathcal{C}}{\min}\; d_{c}^{2}\left(P,Q\right)\right].\]
 For a given code size $K$ where $K$ is a positive integer, the
distortion rate function is \[
D^{*}\left(K\right)=\underset{\mathcal{C}:\left|\mathcal{C}\right|=K}{\inf}\; D\left(\mathcal{C}\right).\]
 In \cite{Dai_Globecom05_Quantization_bounds_Grassmann_manifold,Dai_IT2008_Quantization_Grassmannian_manifold},
we quantify the distortion rate function by constructing tight lower
and upper bounds. The results are summarized as follows.

\begin{lemma}
\label{lem:DRF-GM}Consider the distortion rate function on $\mathcal{G}_{n,p}\left(\mathbb{L}\right)$.
Let $t=\beta p\left(n-p\right)$,\[
\beta=\left\{ \begin{array}{ll}
1 & \mathrm{if}\;\mathbb{L}=\mathbb{R}\\
2 & \mathrm{if}\;\mathbb{L}=\mathbb{C}\end{array}\right.,\]
\[
c_{n,p,p,\beta}=\begin{cases}
\frac{1}{\Gamma\left(\frac{t}{2}+1\right)}\prod_{i=1}^{p}\frac{\Gamma\left(\frac{\beta}{2}\left(n-i+1\right)\right)}{\Gamma\left(\frac{\beta}{2}\left(p-i+1\right)\right)} & \mathrm{if}\; p\le\frac{n}{2}\\
\frac{1}{\Gamma\left(\frac{t}{2}+1\right)}\prod_{i=1}^{n-p}\frac{\Gamma\left(\frac{\beta}{2}\left(n-i+1\right)\right)}{\Gamma\left(\frac{\beta}{2}\left(n-p-i+1\right)\right)} & \mathrm{otherwise}\end{cases}.\]
When $K$ is sufficiently large ($c_{n,p,p,\beta}^{-\frac{2}{t}}2^{-\frac{2\log_{2}K}{t}}\le1$
necessarily), \begin{align}
 & \frac{t}{t+2}c_{n,p,p,\beta}^{-\frac{2}{t}}2^{-\frac{2\log_{2}K}{t}}\left(1+o\left(1\right)\right)\leq D^{*}\left(K\right)\nonumber \\
 & \quad\quad\quad\leq\frac{2}{t}\Gamma\left(\frac{2}{t}\right)c_{n,p,p,\beta}^{-\frac{2}{t}}2^{-\frac{2\log_{2}K}{t}}\left(1+o\left(1\right)\right).\label{eq:DRF_bounds_GM}\end{align}

\end{lemma}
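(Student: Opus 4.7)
The plan is to establish matching-order bounds $\Theta(K^{-2/t})$ by first computing the asymptotic $\mu$-measure of small chordal balls on $\mathcal{G}_{n,p}(\mathbb{L})$, then feeding that estimate into two standard high-resolution quantization arguments. Let $B(P,\epsilon) = \{Q \in \mathcal{G}_{n,p}(\mathbb{L}) : d_c(P,Q) \leq \epsilon\}$. By isotropy of $\mu$, $\mu(B(P,\epsilon))$ is independent of $P$, and the central technical claim is
\begin{equation*}
\mu(B(P,\epsilon)) = c_{n,p,p,\beta}\,\epsilon^{t}(1+o(1)) \qquad \text{as } \epsilon \to 0.
\end{equation*}
To see this, choose $P = \mathrm{span}(\mathbf{e}_1,\ldots,\mathbf{e}_p)$ and parametrize a neighborhood of $P$ by the principal angles $(\theta_1,\ldots,\theta_{\min(p,n-p)})$ between $Q$ and $P$ together with the Stiefel frames that realize them. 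Since $d_c^2(P,Q)=\sum_i \sin^2\theta_i$, to leading order $B(P,\epsilon)$ is a Euclidean ball in the $\theta_i$, and the density of $\mu$ in this chart is a product of $\sin^{a}\theta_i\cos^{b}\theta_i$ factors. Expanding this Jacobian as $\theta_i\to 0$ yields the stated $\Gamma$-ratio constant, with the case split at $p \leq n/2$ vs.\ $p > n/2$ reflecting whether the number of nontrivial principal angles is $p$ or $n-p$.

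Once the small-ball asymptotic is secured, the \emph{lower bound} is a sphere-packing calculation. For any code $\mathcal{C}=\{P_1,\ldots,P_K\}$,
\begin{equation*}
D(\mathcal{C}) = \int_0^{p} \Pr\bigl[\min_i d_c^2(P_i,Q) > r\bigr]\,dr,
\end{equation*}
and the union bound gives $\Pr[\min_i d_c^2(P_i,Q)\leq r] \leq K\,c_{n,p,p,\beta}\,r^{t/2}(1+o(1))$. Setting $r^{*}=(K\,c_{n,p,p,\beta})^{-2/t}$ (which is $\leq 1$ by the hypothesis on $K$) and integrating the complementary inequality on $[0,r^{*}]$ gives
\begin{equation*}
D(\mathcal{C}) \geq r^{*} - K\,c_{n,p,p,\beta}\,\frac{(r^{*})^{t/2+1}}{t/2+1} = \frac{t}{t+2}\,(K\,c_{n,p,p,\beta})^{-2/t}(1+o(1)),
\end{equation*}
which bounds $D^{*}(K)$ from below since $\mathcal{C}$ was arbitrary.

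For the \emph{upper bound}, draw $K$ codewords i.i.d.\ from $\mu$ and compute the expected distortion. By isotropy of $\mu$ and $Q$,
\begin{equation*}
\mathrm{E}[D(\mathcal{C})] = \int_0^{p} \bigl(1 - \mu(B(Q,\sqrt{r}))\bigr)^{K}\,dr.
\end{equation*}
The integrand concentrates at $r \sim K^{-2/t}$, where $(1 - c_{n,p,p,\beta}\,r^{t/2})^{K} \to \exp(-K\,c_{n,p,p,\beta}\,r^{t/2})$. The substitution $s = K\,c_{n,p,p,\beta}\,r^{t/2}$ then evaluates the leading contribution to $\tfrac{2}{t}\Gamma(\tfrac{2}{t})(K\,c_{n,p,p,\beta})^{-2/t}$, and since $D^{*}(K) \leq \mathrm{E}[D(\mathcal{C})]$ this gives the upper bound.

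The main obstacle is Step~1: one needs the correct normalization of $\mu$ and a careful chart at $P$ in principal-angle coordinates to extract the precise coefficient $c_{n,p,p,\beta}$, including the behavior across $p=n/2$. Steps~2 and~3 are then standard Zador-type arguments; controlling the tail contributions outside $r \sim K^{-2/t}$ well enough to justify replacing the union bound by $K\mu(B)$ (lower bound) and $(1-\mu(B))^K$ by its exponential (upper bound) is what produces the $1+o(1)$ factors.
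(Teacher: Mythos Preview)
Your proposal is correct and follows essentially the same route the paper relies on. In this paper Lemma~\ref{lem:DRF-GM} is quoted from the authors' earlier work without proof, but the proof of the composite analogue (Theorem~\ref{thm:DRF-CGM}) in Appendix~\ref{sub:proof_DRF_composite_GM} makes the structure explicit: first establish the small-ball volume $\mu^{(1)}(d_c^{2}\le x)=c_{n,p,p,\beta}\,x^{t/2}(1+O(x))$, then obtain the upper bound by computing the average distortion of an i.i.d.\ isotropic random code and the lower bound by a sphere packing/covering argument---exactly your Steps~1--3.
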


To analyze the joint quantization problem arising in multiaccess MIMO
systems (see Section \ref{sub:General-Beamforming-Strategy} for details),
we introduce the \emph{composite Grassmann manifold}. The $m$-composite
Grassmann manifold $\mathcal{G}_{n,p}^{\left(m\right)}\left(\mathbb{L}\right)$
is a Cartesian product of $m$ many $\mathcal{G}_{n,p}\left(\mathbb{L}\right)$'s.
Denote $P^{\left(m\right)}$ an element in $\mathcal{G}_{n,p}^{\left(m\right)}\left(\mathbb{L}\right)$.
Then $P^{\left(m\right)}$ can be written as $\left(P_{1},\cdots,P_{m}\right)$
where $P_{i}\in\mathcal{G}_{n,p}\left(\mathbb{L}\right)$ $1\leq i\leq m$.
For any $P_{1}^{\left(m\right)},P_{2}^{\left(m\right)}\in\mathcal{G}_{n,p}^{\left(m\right)}\left(\mathbb{L}\right)$,
the chordal distance between them is well defined by \[
d_{c}\left(P_{1}^{\left(m\right)},P_{2}^{\left(m\right)}\right):=\sqrt{\sum_{j=1}^{m}d_{c}^{2}\left(P_{1,j},P_{2,j}\right)},\]
 where $P_{1}^{\left(m\right)}=\left(P_{1,1},\cdots,P_{1,m}\right)$,
$P_{2}^{\left(m\right)}=\left(P_{2,1},\cdots,P_{2,m}\right)$ and
$P_{i,j}\in\mathcal{G}_{n,p}\left(\mathbb{L}\right)$ ($i=1,2$ and
$j=1,2,\cdots,m$). The isotropic measure on $\mathcal{G}_{n,p}^{\left(m\right)}\left(\mathbb{L}\right)$
can be induced from the isotropic measure on $\mathcal{G}_{n,p}\left(\mathbb{L}\right)$:
it is just the product of the isotropic measures on the composed copies
of $\mathcal{G}_{n,p}\left(\mathbb{L}\right)$.

One goal will be to characterize the distortion rate function on $\mathcal{G}_{n,p}^{\left(m\right)}\left(\mathbb{L}\right)$.
By analogy with the above discussion let a code $\mathcal{C}$ be
any discrete subset of $\mathcal{G}_{n,p}^{\left(m\right)}\left(\mathbb{L}\right)$,
and consider the quantization function \begin{equation}
\mathfrak{q}\left(Q^{\left(m\right)}\right)=\underset{P^{\left(m\right)}\in\mathcal{C}}{\arg\;\min}\; d_{c}\left(P^{\left(m\right)},Q^{\left(m\right)}\right).\label{eq:quantization-fn-CGM}\end{equation}
 Let the distortion metric on $\mathcal{G}_{n,p}^{\left(m\right)}\left(\mathbb{L}\right)$
be the squared chordal distance. The distortion associated with $\mathcal{C}$
is given by \[
D\left(\mathcal{C}\right)=\mathrm{E}_{Q^{\left(m\right)}}\left[\underset{P^{\left(m\right)}\in\mathcal{C}}{\min}\; d_{c}^{2}\left(P^{\left(m\right)},Q^{\left(m\right)}\right)\right],\]
 where $Q^{\left(m\right)}\in\mathcal{G}_{n,p}^{\left(m\right)}\left(\mathbb{L}\right)$
is isotropically distributed. For all $K\in\mathbb{Z}^{+}$, the distortion
rate function is defined as \[
D^{*}\left(K\right)=\underset{\mathcal{C}:\left|\mathcal{C}\right|=K}{\inf}\; D\left(\mathcal{C}\right).\]
 The following theorem characterizes $D^{*}\left(K\right)$ on $\mathcal{G}_{n,p}^{\left(m\right)}\left(\mathbb{L}\right)$.

\begin{thm}
\label{thm:DRF-CGM}Consider the distortion rate function on $\mathcal{G}_{n,p}^{\left(m\right)}\left(\mathbb{L}\right)$.
Let $t$, $c_{n,p,p,\beta}$ and $\beta$ be defined as in Lemma \ref{lem:DRF-GM}.
When $K$ is sufficiently large ($\frac{\Gamma^{\frac{2}{mt}}\left(mt+1\right)}{\Gamma^{\frac{2}{t}}\left(t+1\right)}c_{n,p,p,\beta}^{-\frac{2}{t}}2^{-\frac{2\log_{2}K}{mt}}\le1$
necessarily), \begin{align}
 & \frac{mt}{mt+2}\left(\frac{\Gamma^{\frac{2}{mt}}\left(m\frac{t}{2}+1\right)}{\Gamma^{\frac{2}{t}}\left(\frac{t}{2}+1\right)}c_{n,p,p,\beta}^{-\frac{2}{t}}2^{-\frac{2\log_{2}K}{mt}}\right)\left(1+o\left(1\right)\right)\le D^{*}\left(K\right)\nonumber \\
 & \quad\quad\quad\le\frac{2}{mt}\Gamma\left(\frac{2}{mt}\right)\left(\frac{\Gamma^{\frac{2}{mt}}\left(m\frac{t}{2}+1\right)}{\Gamma^{\frac{2}{t}}\left(\frac{t}{2}+1\right)}c_{n,p,p,\beta}^{-\frac{2}{t}}2^{-\frac{2\log_{2}K}{mt}}\right)\left(1+o\left(1\right)\right).\label{eq:DRF-CGM}\end{align}

\end{thm}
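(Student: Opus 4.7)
The plan is to reduce the theorem to the derivation of the small-ball volume formula on the composite Grassmann manifold, after which both bounds follow from the same ball-packing and random-coding arguments used to prove Lemma \ref{lem:DRF-GM} in \cite{Dai_IT2008_Quantization_Grassmannian_manifold}. The key observation is that those two arguments depend on the metric measure space only through its small-ball asymptotic $\mu(B(P, r)) = c\, r^{d}(1+o(1))$: the packing lower bound scales as $\tfrac{d}{d+2} c^{-2/d} K^{-2/d}$ and the random-coding upper bound as $\tfrac{2}{d}\Gamma(2/d) c^{-2/d} K^{-2/d}$. So all that is really new is identifying the correct effective dimension $d$ and ball constant $c$ for $\mathcal{G}_{n,p}^{(m)}(\mathbb{L})$.

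First I would establish the product small-ball formula
\[
\mu^{(m)}\bigl(B(P^{(m)}, \delta)\bigr) = c^{(m)}\, \delta^{mt}\, (1+o(1)), \qquad \delta \to 0,
\]
with $c^{(m)} := c_{n,p,p,\beta}^{m}\, \Gamma^{m}(t/2+1)/\Gamma(mt/2+1)$. Starting from the single-factor asymptotic in Lemma \ref{lem:DRF-GM}, Fubini gives
\[
\mu^{(m)}\bigl(B(P^{(m)}, \delta)\bigr) = \int_{\sum r_{j}^{2} \le \delta^{2},\; r_{j} \ge 0} \prod_{j=1}^{m} c_{n,p,p,\beta}\, t\, r_{j}^{t-1}\, dr_{j}\; \cdot (1+o(1)),
\]
and the change of variables $u_{j} = r_{j}^{2}/\delta^{2}$ reduces this to a Dirichlet integral over the standard simplex, which evaluates in closed form to the stated constant. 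A direct check shows that $(c^{(m)})^{-2/(mt)}$ coincides with the bracketed factor $\Gamma^{2/(mt)}(mt/2+1)/\Gamma^{2/t}(t/2+1)\cdot c_{n,p,p,\beta}^{-2/t}$ that appears in (\ref{eq:DRF-CGM}).

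For the lower bound I would apply the Gersho-type packing argument: for any code $\mathcal{C}$ of size $K$ and any $u > 0$, a union bound on the $K$ quantization cells gives
\[
\Pr_{Q^{(m)}}\!\bigl[d_{c}^{2}(Q^{(m)}, \mathfrak{q}(Q^{(m)})) \le u\bigr] \le K\, c^{(m)}\, u^{mt/2}\, (1+o(1)),
\]
and the tail identity $\mathrm{E}[X] = \int_{0}^{\infty}\Pr[X > u]\,du$ integrated up to the natural cutoff $u_{*} = (K c^{(m)})^{-2/(mt)}$ produces the claimed $\tfrac{mt}{mt+2}(c^{(m)})^{-2/(mt)} K^{-2/(mt)}(1+o(1))$. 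For the upper bound I would take $K$ codewords i.i.d.\ uniform in $\mathcal{G}_{n,p}^{(m)}(\mathbb{L})$ and compute
\[
\mathrm{E}\bigl[\min_{i} d_{c}^{2}(Q^{(m)}, P_{i}^{(m)})\bigr] = \int_{0}^{\infty}\bigl(1 - c^{(m)} u^{mt/2}\bigr)^{K} du\cdot (1+o(1));
\]
the substitution $v = K c^{(m)} u^{mt/2}$ yields the Gamma-function expression $\tfrac{2}{mt}\Gamma(2/(mt))(c^{(m)})^{-2/(mt)} K^{-2/(mt)}(1+o(1))$. Since this is the expected distortion of a random code, at least one realization attains it, proving the upper bound on $D^{*}(K)$.

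The main obstacle is controlling the $(1+o(1))$ error uniformly when combining the $m$ single-manifold asymptotics under the simplex integration: the single-factor error depends on the magnitude of each $r_{j}$, and one must verify that these errors remain $o(1)$ after aggregation over the simplex $\sum r_{j}^{2} \le \delta^{2}$. This is a technically delicate but routine extension of the volume analysis in \cite{Dai_IT2008_Quantization_Grassmannian_manifold}, and it is the only step in which the product structure of $\mathcal{G}_{n,p}^{(m)}(\mathbb{L})$ demands genuinely new estimates.
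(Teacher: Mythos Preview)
Your proposal is correct and follows essentially the same approach as the paper: establish the small-ball volume asymptotic $\mu^{(m)}(B(\delta))=c^{(m)}\delta^{mt}(1+o(1))$ with the same constant $c^{(m)}=c_{n,p,p,\beta}^{m}\Gamma^{m}(t/2+1)/\Gamma(mt/2+1)$, then invoke the sphere-packing lower bound and random-code upper bound from \cite{Dai_IT2008_Quantization_Grassmannian_manifold}. The only cosmetic difference is that the paper computes the volume constant by convolving the single-factor density with itself and inducting on $m$, whereas you integrate all $m$ factors at once via a Dirichlet integral over the simplex; these are equivalent evaluations of the same integral.
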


The detailed proof is given in Appendix \ref{sub:proof_DRF_composite_GM},
but we mention here that the upper bound is derived by calculating
the average distortion of random codes, which turn out to be asymptotically
optimal in probability. Further, the lower and upper bounds differ
only in the constant factors: $\frac{mt}{mt+2}$ for the lower bound
and $\frac{2}{mt}\Gamma\left(\frac{2}{mt}\right)$ for the upper bound.
As $n,K\rightarrow\infty$ with $\frac{\log_{2}K}{n}\rightarrow r$,
this discrepency vanishes and we precisely characterize the asymptotic
distortion rate function.

\begin{thm}
\label{thm:DRF-CGM-Asymptotics}Fix $p$ and $m$. Let $n,K\rightarrow\infty$
with $\frac{\log_{2}K}{n}\rightarrow r$. If $r$ is sufficiently
large ($mp2^{-\frac{2}{\beta mp}r}\le1$ necessarily), then \[
\underset{\left(n,K\right)\rightarrow\infty}{\lim}D^{*}\left(K\right)=mp2^{-\frac{2}{\beta mp}r},\]
where $\beta=1$ if $\mathbb{L}=\mathbb{R}$, and $\beta=2$ if $\mathbb{L}=\mathbb{C}$.
Furthermore, let $\mathcal{C}_{\mathrm{rand}}\subset\mathcal{G}_{n,p}^{\left(m\right)}\left(\mathbb{L}\right)$
be a code random generated from the isotropic distribution and with
size $K$. Then for $\forall\epsilon>0$, \[
\underset{\left(n,K\right)\rightarrow\infty}{\lim}\Pr\left(D\left(\mathcal{C}_{\mathrm{rand}}\right)>mp2^{-\frac{2}{\beta mp}r}+\epsilon\right)=0.\]

\end{thm}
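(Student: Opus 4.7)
The plan is to derive Theorem \ref{thm:DRF-CGM-Asymptotics} as an asymptotic consequence of the non-asymptotic bounds in Theorem \ref{thm:DRF-CGM} via a sandwich argument. Both bounds in (\ref{eq:DRF-CGM}) have the form $\gamma_{n,K}\cdot C(n,K)\cdot(1+o(1))$, where
\[
C(n,K):=\frac{\Gamma^{2/(mt)}\!\left(\tfrac{mt}{2}+1\right)}{\Gamma^{2/t}\!\left(\tfrac{t}{2}+1\right)}\,c_{n,p,p,\beta}^{-2/t}\,2^{-\frac{2\log_{2}K}{mt}}
\]
is common to both, and the scalar prefactor $\gamma_{n,K}$ equals $\frac{mt}{mt+2}$ (lower bound) or $\frac{2}{mt}\Gamma(\tfrac{2}{mt})$ (upper bound). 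Since $t=\beta p(n-p)\to\infty$, both prefactors tend to $1$: the lower one trivially, and the upper one via $x\Gamma(x)=\Gamma(x+1)\to 1$ as $x\to 0^{+}$. The whole problem therefore reduces to showing $C(n,K)\to mp\cdot 2^{-2r/(\beta mp)}$.

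Next, I would expand $\log C(n,K)$ by Stirling's formula $\log\Gamma(x)=x\log x-x+O(\log x)$. Since $\log_{2}K/n\to r$ and $mt\sim\beta mpn$, the feedback-rate factor contributes $-\tfrac{2r}{\beta mp}\log 2$ in the limit. For the Gamma ratio I would substitute the $p\le n/2$ branch of $c_{n,p,p,\beta}$, which is valid for $n$ large with $p$ fixed; the two appearances of $\Gamma(\tfrac{t}{2}+1)$ cancel, leaving (up to $o(1)$) the difference $\tfrac{2}{mt}\log\Gamma(\tfrac{mt}{2}+1)-\tfrac{2}{t}\sum_{i=1}^{p}\log\Gamma(\tfrac{\beta(n-i+1)}{2})$. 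Each of these terms diverges like $\log n$, but the divergences must cancel: Stirling gives the first $\to\log(mt/(2e))$, and, using $\tfrac{2}{t}\sum_{i=1}^{p}\tfrac{\beta(n-i+1)}{2}\to 1$, the second $\to\log(\beta n/(2e))$, so the difference tends to $\log(mt/(\beta n))=\log(mp(n-p)/n)\to\log(mp)$. This is the main technical obstacle: keeping enough of Stirling's expansion to be sure that the powers of $n$ really do cancel and that no stray $O(\log n)$ or $O(1)$ correction is lost, so that the constant $mp$ emerges cleanly.

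For the random-code statement I would exploit the fact, noted in the discussion following Theorem \ref{thm:DRF-CGM}, that the upper bound there is obtained as $\mathrm{E}[D(\mathcal{C}_{\mathrm{rand}})]$. Combined with the a.s.\ inequality $D(\mathcal{C}_{\mathrm{rand}})\ge D^{*}(K)$, this produces a nonnegative variable $Y_{n}:=D(\mathcal{C}_{\mathrm{rand}})-D^{*}(K)$ whose expectation, being sandwiched between the two bounds of Theorem \ref{thm:DRF-CGM}, tends to zero by the first part of the present theorem. Markov's inequality then yields, for any fixed $\epsilon>0$ and all $n$ large enough that $D^{*}(K)\ge mp\cdot 2^{-2r/(\beta mp)}-\tfrac{\epsilon}{2}$,
\[
\Pr\!\left(D(\mathcal{C}_{\mathrm{rand}})>mp\cdot 2^{-2r/(\beta mp)}+\epsilon\right)\le\Pr\!\left(Y_{n}>\tfrac{\epsilon}{2}\right)\le\tfrac{2}{\epsilon}\,\mathrm{E}[Y_{n}]\to 0,
\]
which is the claim.
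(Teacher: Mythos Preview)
Your proposal is correct and follows essentially the same route the paper intends: the paper omits the proof and points to \cite[Theorem~3]{Dai_IT2008_Quantization_Grassmannian_manifold}, whose method is precisely the sandwich based on Theorem~\ref{thm:DRF-CGM} together with the Stirling asymptotics $c_{n,p,p,\beta}^{-2/t}\to p$ and $\Gamma^{2/(mt)}(mt/2+1)/\Gamma^{2/t}(t/2+1)\to m$ (these two limits appear verbatim in the paper's commented-out sketch just below the theorem statement). Your Markov-inequality argument for the random-code claim, using that the upper bound of Theorem~\ref{thm:DRF-CGM} is exactly $\mathrm{E}[D(\mathcal{C}_{\mathrm{rand}})]$, is the standard way to pass from convergence of the mean to convergence in probability and matches the referenced proof.
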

%%
\begin{comment}
First, $c_{n,p,p,\beta}^{-\frac{2}{t}}\rightarrow p$: \begin{align*}
-\frac{2}{\beta pn}\log c_{n,p,p,\beta} & \approx-\frac{2}{\beta pn}\left[-\log\Gamma\left(\frac{\beta pn}{2}\right)+p\log\Gamma\left(\frac{\beta}{2}n\right)\right]\\
 & \approx\frac{2}{\beta pn}\left[\frac{\beta pn}{2}\log\frac{\beta pn}{2}-\frac{\beta pn}{2}-\left(p\frac{\beta n}{2}\log\frac{\beta n}{2}-p\frac{\beta n}{2}\right)\right]\\
 & =\log p.\end{align*}

Second, $\frac{2}{mt}\log\frac{\Gamma\left(m\frac{t}{2}+1\right)}{\Gamma^{m}\left(\frac{t}{2}+1\right)}\rightarrow m$
by similar argument. 
\end{comment}
{}

The proof of this theorem follows from those in \cite[Theorem 3]{Dai_IT2008_Quantization_Grassmannian_manifold}
and is omitted here.

This theorem provides a formula for the distortion rate function at
finite $n$ and $K$: \begin{equation}
D^{*}\left(K\right)=\frac{2}{mt}\Gamma\left(\frac{2}{mt}\right)\left(\frac{\Gamma^{\frac{2}{mt}}\left(m\frac{t}{2}+1\right)}{\Gamma^{\frac{2}{t}}\left(\frac{t}{2}+1\right)}c_{n,p,p,\beta}^{-\frac{2}{t}}2^{-\frac{2\log_{2}K}{mt}}\right)\left(1+o\left(1\right)\right).\label{eq:DRF-CGM-approx}\end{equation}
 By the asymptotic optimality of random codes, we have employed random
codes for our analysis, and approximate the corresponding distortion
rate function by ignoring the higher order terms behind (\ref{eq:DRF-CGM-approx}).

\subsection{\label{sub:CGMatrix}Calculations Related to Composite Grassmann
Matrices}

A \emph{composite Grassmann matrix} $\mathbf{P}^{\left(m\right)}$
is a generator matrix generating $P^{\left(m\right)}\in\mathcal{G}_{n,p}^{\left(m\right)}\left(\mathbb{L}\right)$,
and we denote the set of composite Grassmann matrices by $\mathcal{M}_{n,p}^{\left(m\right)}\left(\mathbb{L}\right)$.
A composite Grassmann matrix $\mathbf{P}^{\left(m\right)}=\left[\mathbf{P}_{1}\cdots\mathbf{P}_{m}\right]\in\mathcal{M}_{n,p}^{\left(m\right)}\left(\mathbb{L}\right)$
generates a plane $P^{\left(m\right)}=\left(P_{1},\cdots,P_{m}\right)\in\mathcal{G}_{n,p}^{\left(m\right)}\left(\mathbb{L}\right)$,
where $\mathbf{P}_{1},\cdots,\mathbf{P}_{m}$ are the generator matrices
for $P_{1},\cdots,P_{m}$ respectively. Note that the generator matrix
$\mathbf{P}_{i}$ for a plane $P_{i}\in\mathcal{G}_{n,p}\left(\mathbb{L}\right)$
is not unique. The composite Grassmann matrix $\mathbf{P}^{\left(m\right)}\in\mathcal{M}_{n,p}^{\left(m\right)}\left(\mathbb{L}\right)$
generating $P^{\left(m\right)}\in\mathcal{G}_{n,p}^{\left(m\right)}\left(\mathbb{L}\right)$
is not unique either: let $\mathbf{U}^{\left(m\right)}$ is an arbitrary
$pm\times pm$ block diagonal matrix whose $i^{\mathrm{th}}$ ($1\le i\le m$)
diagonal block is a $p\times p$ orthogonal/unitary matrix (w.r.t.
$\mathbb{L}=\mathbb{R}/\mathbb{C}$ respectively); if $\mathbf{P}^{\left(m\right)}$
generates $P^{\left(m\right)}$, then $\mathbf{P}^{\left(m\right)}\mathbf{U}^{\left(m\right)}$
generates $P^{\left(m\right)}$ as well. View $\mathcal{M}_{n,p}^{\left(m\right)}\left(\mathbb{L}\right)$
as a Cartesian product of $m$ many $\mathcal{M}_{n,p}^{\left(1\right)}\left(\mathbb{L}\right)$.
Then the isotropic measure $\mu$ on $\mathcal{M}_{n,p}^{\left(m\right)}\left(\mathbb{L}\right)$
is simply the product of Haar measure on each composed $\mathcal{M}_{n,p}^{\left(1\right)}\left(\mathbb{L}\right)$'s.
We say a $\mathbf{P}^{\left(m\right)}\in\mathcal{M}_{n,p}^{\left(m\right)}\left(\mathbb{L}\right)$
is isotropically distributed if the corresponding probability measure
is the isotropic measure $\mu$.

Note now that we are interested in quantifying $\mathrm{E}\left[\log\left|\mathbf{I}+c\mathbf{P}^{\left(m\right)}\mathbf{P}^{\left(m\right)\dagger}\right|\right]$,
for a constant $c\in\mathbb{R}^{+}$ and isotropically distributed
$\mathbf{P}^{\left(m\right)}\in\mathcal{M}_{n,1}^{\left(m\right)}\left(\mathbb{C}\right)$.
The asymptotic behavior of this quantify is derived by random matrix
theory techniques.

\begin{thm}
\label{thm:bds_CGMatrix}Let $\mathbf{P}^{\left(m\right)}\in\mathcal{M}_{n,1}^{\left(m\right)}\left(\mathbb{C}\right)$
be isotropically distributed. For all positive real numbers $c$,
as $n,m\rightarrow\infty$ with $\frac{m}{n}\rightarrow\bar{m}\in\mathbb{R}^{+}$,\begin{align}
 & \underset{\left(n,m\right)\rightarrow\infty}{\lim}\frac{1}{n}\mathrm{E}\left[\log\left|\mathbf{I}+c\mathbf{P}^{\left(m\right)}\mathbf{P}^{\left(m\right)\dagger}\right|\right]\nonumber \\
 & =\log\left(1+c\bar{m}-\frac{1}{4}\mathcal{F}\left(c,\bar{m}\right)\right)+\bar{m}\log\left(1+c-\frac{1}{4}\mathcal{F}\left(c,\bar{m}\right)\right)-\frac{\mathcal{F}\left(c,\bar{m}\right)}{4c},\label{eq:CGM-Shannon-Transform}\end{align}
where \[
\mathcal{F}\left(z,\bar{m}\right)=\left(\left(1+\lambda^{-}z\right)^{1/2}-\left(1+\lambda^{+}z\right)^{1/2}\right)^{2},\]
$\lambda^{+}=\left(1+\sqrt{1/\bar{m}}\right)^{2}$ and $\lambda^{-}=\left(1-\sqrt{1/\bar{m}}\right)^{2}$. 
\end{thm}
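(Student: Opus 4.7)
The strategy has three parts: couple the composite Grassmann matrix $\mathbf{P}^{(m)}$ to a Gaussian Wishart matrix, invoke the random-matrix machinery in Theorem~\ref{thm:truncate-function-RMT} to turn the normalized expected log-determinant into an integral against the Marchenko--Pastur density, and finally evaluate that integral in closed form.

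\emph{Gaussian coupling and limiting spectrum.} Realize each column of $\mathbf{P}^{(m)}$ as $\mathbf{P}_i=\mathbf{h}_i/\|\mathbf{h}_i\|$ with $\mathbf{h}_1,\ldots,\mathbf{h}_m$ i.i.d. $\mathcal{CN}(\mathbf{0},\mathbf{I}_n)$. Collecting $\mathbf{H}=[\mathbf{h}_1\cdots\mathbf{h}_m]$ and $\mathbf{D}=\mathrm{diag}(\|\mathbf{h}_i\|)$ gives $\mathbf{P}^{(m)}\mathbf{P}^{(m)\dagger}=\mathbf{H}\mathbf{D}^{-2}\mathbf{H}^{\dagger}$. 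Gaussian concentration of each $\|\mathbf{h}_i\|^2$ around $n$, a union bound over $m=O(n)$ columns, and the Bai--Yin control $\|\mathbf{H}\|_{\mathrm{op}}^2=O(n)$ together force $\|\mathbf{P}^{(m)}\mathbf{P}^{(m)\dagger}-\tfrac{1}{n}\mathbf{H}\mathbf{H}^{\dagger}\|_{\mathrm{op}}\to 0$ in probability. Since the eigenvalues of $\tfrac{1}{n}\mathbf{H}\mathbf{H}^{\dagger}$ are exactly $\bar{m}$ times those of $\tfrac{1}{m}\mathbf{H}\mathbf{H}^{\dagger}$, whose limit law is the Marchenko--Pastur distribution on $[\lambda^-,\lambda^+]$ used in Lemma~\ref{lem:condition-expectation-asymptotics} (with density $f(x)=\tfrac{\bar{m}}{2\pi x}\sqrt{(\lambda^+-x)(x-\lambda^-)}$, plus a zero-atom of mass $(1-\bar{m})^{+}$), the limiting spectral distribution of $\mathbf{P}^{(m)}\mathbf{P}^{(m)\dagger}$ is simply its pushforward under $x\mapsto\bar{m}x$. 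The zero-atom, if present, is harmless because $\log(1+c\cdot 0)=0$.

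\emph{Expectation to integral.} Writing $\log|\mathbf{I}+c\mathbf{P}^{(m)}\mathbf{P}^{(m)\dagger}|=\sum_i\log(1+c\bar{m}\lambda_i)+o(n)$, where $\lambda_i$ are the eigenvalues of $\tfrac{1}{m}\mathbf{H}\mathbf{H}^{\dagger}$, and applying Theorem~\ref{thm:truncate-function-RMT} with test function $\phi(x)=\log(1+c\bar{m}x)$ (which is bounded and continuous on $[\lambda^-,\lambda^+]$), together with uniform integrability from the Bai--Yin bound, yields
\begin{equation*}
\lim_{(n,m)\to\infty}\frac{1}{n}\mathrm{E}\!\left[\log\left|\mathbf{I}+c\mathbf{P}^{(m)}\mathbf{P}^{(m)\dagger}\right|\right]=\int_{\lambda^-}^{\lambda^+}\log(1+c\bar{m}x)\,\frac{\bar{m}}{2\pi x}\sqrt{(\lambda^+-x)(x-\lambda^-)}\,dx.
\end{equation*}

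\emph{Closed-form evaluation.} The remaining integral is the classical Shannon transform of the Marchenko--Pastur law. Differentiating in $c$ and substituting $x=(\lambda^++\lambda^-)/2+((\lambda^+-\lambda^-)/2)\cos\theta$ reduces the derivative to a rational trigonometric integral, evaluable by the residue theorem, whose value is an algebraic function of $c$ naturally packaged through $\mathcal{F}(c,\bar{m})$. Integrating back in $c$ with the initial value $0$ at $c=0$ recovers the three-term decomposition (\ref{eq:CGM-Shannon-Transform}).

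\emph{Main obstacle.} Steps 1--2 are bookkeeping on top of existing RMT; the real work lies in the closed-form evaluation. The algebra is sensitive to the regime $\bar{m}\gtrless 1$, which governs the sign of $1-\sqrt{1/\bar{m}}$ and the presence of the zero-atom, and it is the careful threading that produces a single unified expression valid across both regimes that is delicate. A shortcut, if one is willing to cite it, is to invoke the known Shannon transform of the Marchenko--Pastur law (Tulino--Verd\'u) and then verify the rescaling $x\mapsto\bar{m}x$ reproduces the stated form.
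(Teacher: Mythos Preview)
Your proposal is correct and follows essentially the same route as the paper: couple $\mathbf{P}^{(m)}$ to a standard Gaussian $\mathbf{H}$ via column normalization, argue that $\mathbf{P}^{(m)}\mathbf{P}^{(m)\dagger}$ and $\tfrac{1}{n}\mathbf{H}\mathbf{H}^{\dagger}=\tfrac{m}{n}\cdot\tfrac{1}{m}\mathbf{H}\mathbf{H}^{\dagger}$ have the same limiting spectrum, pass to the Mar\u{c}enko--Pastur integral $\int\log(1+c\bar{m}\lambda)\,d\mu_{\lambda}$, and then evaluate the Shannon transform (the paper simply cites Verd\'u for this last step, which is exactly your ``shortcut''). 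The only cosmetic differences are that the paper phrases the coupling via the Rayleigh--Ritz criterion on column norms rather than an operator-norm bound, and invokes the basic linear-statistics convergence for Lipschitz test functions directly rather than Theorem~\ref{thm:truncate-function-RMT}.
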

\begin{proof}
Let $\mathbf{H}\in\mathbb{C}^{n\times m}$ be a standard Gaussian
matrix. Let $\mathbf{P}^{\left(m\right)}\in\mathcal{G}_{n,1}^{\left(m\right)}\left(\mathbb{C}\right)$
be isotropically distributed. As $n,m\rightarrow\infty$ with a positive
ratio, the eigenvalue statistics of $\mathbf{P}^{\left(m\right)}\mathbf{P}^{\left(m\right)\dagger}$
and $\frac{1}{m}\mathbf{H}\mathbf{H}^{\dagger}$ are asymptotically
the same. Indeed, the Raleigh-Ritz criteria shows that the discrepancy
between corresponding eigenvalues of these two matrices is bounded
(multiplicatively) above and below by the minimum and maximum column
norms of $\frac{1}{m}\mathbf{H}\mathbf{H}^{\dagger}$, both of which
converge to one almost surely. Thus,\[
\underset{\left(n,m\right)\rightarrow\infty}{\lim}\frac{1}{n}\mathrm{E}\left[\log\left|\mathbf{I}+c\mathbf{P}^{\left(m\right)}\mathbf{P}^{\left(m\right)\dagger}\right|\right]=\underset{\left(n,m\right)\rightarrow\infty}{\lim}\frac{1}{n}\mathrm{E}\left[\log\left|\mathbf{I}+c\frac{m}{n}\frac{1}{m}\mathbf{H}\mathbf{H}^{\dagger}\right|\right].\]
 Now, it is a basic result in random matrix theory \cite[Eq. (1.10)]{Verdu_random_matrix_theory_wireless_communications}
(also see Appendix \ref{sub:Random-Matrix-Theory}) that the empirical
distribution of the eigenvalues of $\frac{1}{m}\mathbf{H}\mathbf{H}^{\dagger}$
converges to the Mar\u{c}enko-Pastur law given by \[
d\mu_{\lambda}=\left(1-\bar{m}\right)^{\dagger}\delta\left(\lambda\right)+\bar{m}\frac{\sqrt{\left(\lambda-\lambda^{-}\right)^{+}\left(\lambda^{+}-\lambda\right)^{+}}}{2\pi\lambda}\cdot d\lambda\]
 almost surely, where $\left(z\right)^{+}=\max\left(0,z\right)$.
Thus, \[
\underset{\left(n,m\right)\rightarrow\infty}{\lim}\frac{1}{n}\mathrm{E}\left[\log\left|\mathbf{I}+c\mathbf{P}^{\left(m\right)}\mathbf{P}^{\left(m\right)\dagger}\right|\right]\rightarrow\int\log\left(1+c\bar{m}\lambda\right)\cdot d\mu_{\lambda}\]
 since $\log\left(1+c\bar{m}\lambda\right)$ is a bounded continuous
function on the spectral support. The resulting integral is evaluated
in \cite{Verdu_IT1999_Spectral_efficiency_CDMA}, and the proof is
finished.

%
\begin{comment}
Compare the Shannon transform. In Verdu's paper, $\beta=\frac{n}{m}=\frac{1}{\bar{m}}$.
Further, $z=c\bar{m}$. Then we have \begin{align*}
 & \log\left(1+c\bar{m}-\frac{1}{4}\mathcal{F}\left(c\bar{m},\frac{1}{\bar{m}}\right)\right)+\frac{1}{\frac{1}{\bar{m}}}\log\left(1+c\bar{m}\frac{1}{\bar{m}}-\frac{1}{4}\mathcal{F}\left(c\bar{m},\frac{1}{\bar{m}}\right)\right)-\frac{\mathcal{F}\left(c\bar{m},\frac{1}{\bar{m}}\right)}{4c\bar{m}\frac{1}{\bar{m}}}\\
 & =\log\left(1+c\bar{m}-\frac{1}{4}\mathcal{F}\left(c,\bar{m}\right)\right)+\bar{m}\log\left(1+c-\frac{1}{4}\mathcal{F}\left(c,\bar{m}\right)\right)-\frac{\mathcal{F}\left(c,\bar{m}\right)}{4c},\end{align*}
where the fact that $\mathcal{F}\left(z\beta,\frac{1}{\beta}\right)=\mathcal{F}\left(z,\beta\right)$
is used several times. 
\end{comment}
{}
\end{proof}

For finite $n$ and $m$, we substitute $\bar{m}=\frac{m}{n}$ into
(\ref{eq:CGM-Shannon-Transform}) to approximate $\frac{1}{n}\mathrm{E}\left[\log\left|\mathbf{I}+c\mathbf{P}^{\left(m\right)}\mathbf{P}^{\left(m\right)\dagger}\right|\right]$.

\section{\label{sec:Suboptimal-Feedback-Strategies}Suboptimal Strategies
and the Sum Rate}

Given finite rate feedback, the optimal strategy (\ref{eq:sum_rate_optimal})
involves two coupled optimization problems: one is with respect to
the feedback function $\varphi$ and the other optimization is over
all possible covariance matrix codebooks. The corresponding design
and analysis are extremely complicated, and instead we study suboptimal
power on/off strategies. Motivated by the near optimal power on/off
strategy for single user MIMO systems \cite{Dai_ISIT05_Power_onoff_strategy_design,Dai_05_Power_onoff_strategy_design_finite_rate_feedback},
we assume:

\begin{description}
\item [{T1)}] Power on/off strategy: The $i$th user covariance matrix
is of the form $\mathbf{\Sigma}_{i}=P_{\mathrm{on}}\mathbf{Q}_{i}\mathbf{Q}_{i}^{\dagger}$,
where $P_{\mathrm{on}}$ is a fixed positive constant to denote on-power
and $\mathbf{Q}_{i}$ is the beamforming matrix for user $i$. Denote
each column of $\mathbf{Q}_{i}$ an \emph{on-beam} and the number
of the columns of $\mathbf{Q}_{i}$ by $s_{i}$ ($0\leq s_{i}\leq L_{T}$),
then $\mathbf{Q}_{i}^{\dagger}\mathbf{Q}_{i}=\mathbf{I}_{s_{i}}$.
Here, $s_{i}$ is the number of data streams (or on-beams) for user
$i$ ($s_{i}=0$ implies that the user $i$ is off). The user $i$
with $s_{i}>0$ is referred to as an on-user. 
\item [{T2)}] Constant number of on-beams: Let $s=\sum_{i=1}^{N}s_{i}$,
the total number of on-beams, be constant independent of the specific
channel realization for a given SNR. With this assumption, $P_{\mathrm{on}}=\rho/s$. 
\end{description}
\begin{remrk}
Using a constant number of on-beams is motivated by the fact that
it is asymptotically optimal to turn on a constant fraction of all
eigen-channels as $L_{T},L_{R}\rightarrow\infty$ with a positive
ratio, see \cite{Dai_05_Power_onoff_strategy_design_finite_rate_feedback}
which also demonstrates the good performance of this strategy. While
the number of on-beams is independent of channel realizations, it
is a function of SNR. Realize though that typically SNR changes on
a much larger time scale than block fading. Keeping the number of
on-beams constant enables the base station to keep the feedback and
decoding processing from one fading block to another, and therefore
reduces complexity of real-world systems.

%
\begin{comment}
The contant number of on-beams is on a time scale of fading block
to fading block. It does not prevent changing the number of on-beams
in a longer time scale when the SNR changes. The constant number of
on-beams is motivated by the asymptotical result \cite{Dai_05_Power_onoff_strategy_design_finite_rate_feedback}
that asymptotic number of on-beams is only a function of long term
SNR, not a function of each channel realization. Its good performance
has been demonstrated for single user MIMO systems \cite{Dai_05_Power_onoff_strategy_design_finite_rate_feedback}.
With this assumption, the feedback and decoding processing at the
base station would not change from one fading block to another. It
reduces complexity of real-world systems.
\end{comment}
{}
\end{remrk}

These two assumptions essentially add extra structure to the input
covariance matrix $\mathbf{\Sigma}$. Given this structure, we propose
a joint quantization and feedback strategy in Section \ref{sub:General-Beamforming-Strategy},
which we term {}``general beamforming strategy\char`\"{}. As we shall
see in Section \ref{sub:Comments}, antenna selection can be viewed
as a special case of general beamforming. Due to the simplicity of
antenna selection, we next discuss its main features.

\subsection{\label{sub:Antenna-Selection}Antenna Selection}

The antenna selection strategy is described as follows. Index all
$NL_{T}$ antennas by $i$ ($i=1,\cdots,NL_{T}$). Then \[
\mathbf{Y}=\sum_{i=1}^{NL_{T}}\mathbf{h}_{i}X_{i}+\mathbf{W},\]
where $\mathbf{h}_{i}$ is the $i^{\mathrm{th}}$ column of the overall
channel state matrix $\mathbf{H}$ (defined in Section \ref{sec:System-Model}),
and $X_{i}$ is the Gaussian data source corresponding to the antenna
$i$. Power on/off assumptions (T1) and (T2) imply that either $\mathrm{E}\left[X_{i}^{2}\right]=\frac{\rho}{s}$
or $\mathrm{E}\left[X_{i}^{2}\right]=0$. Indeed, for a specific user,
its input covariance matrix can be written as $\frac{\rho}{s}\mathbf{Q}\mathbf{Q}^{\dagger}$
where $\mathbf{Q}$ is obtained from intercepting some columns from
the identity matrix. Given a channel realization $\mathbf{H}$, the
base station selects $s$ many antennas according to 

\begin{description}
\item [{F1)}] Antenna selection criterion. Sort the channel state vectors
$\mathbf{h}_{i}$'s increasingly according to their Frobenius norms
such that $\left\Vert \mathbf{h}_{\left(1:NL_{T}\right)}\right\Vert \le\left\Vert \mathbf{h}_{\left(2:NL_{T}\right)}\right\Vert \le\cdots\le\left\Vert \mathbf{h}_{\left(NL_{T}:NL_{T}\right)}\right\Vert $,
where $\left\Vert \cdot\right\Vert $ denotes the Frobenius norm.
Then the antennas corresponding to $\mathbf{h}_{\left(NL_{T}-s+1:NL_{T}\right)},\cdots,\mathbf{h}_{\left(NL_{T}:NL_{T}\right)}$
are selected to be turned on. 
\end{description}
To feedback the antenna selection information, totally $\log_{2}{NL_{T} \choose s}$
many bits are needed. The corresponding signal model then reduces
to \[
\mathbf{Y}=\sum_{k=1}^{s}\mathbf{h}_{\left(NL_{T}-k+1:NL_{T}\right)}X_{k}+\mathbf{W}.\]

Let $\mathbf{h}_{\left(NL_{T}-k+1:NL_{T}\right)}=n_{k}\bm{\xi}_{k}$
where $n_{k}=\left\Vert \mathbf{h}_{\left(NL_{T}-k+1:NL_{T}\right)}\right\Vert $
and $\bm{\xi}_{k}=\mathbf{h}_{\left(NL_{T}-k+1:NL_{T}\right)}/n_{k}$.
Define $\mathbf{\Xi}:=\left[\bm{\xi}_{1}\cdots\bm{\xi}_{s}\right]$.
Then the sum rate $\mathcal{I}$ is upper bounded by \begin{align}
\mathcal{I} & :=\mathrm{E}_{\mathbf{H}}\left[\log\left|\mathbf{I}_{L_{R}}+\frac{\rho}{s}\mathbf{\Xi}\mathrm{diag}\left[n_{1}^{2},\cdots,n_{s}^{2}\right]\mathbf{\Xi}^{\dagger}\right|\right]\nonumber \\
 & \le\mathrm{E}_{\mathbf{\Xi}}\left[\log\left|\mathbf{I}_{L_{R}}+\frac{\rho}{s}\eta L_{R}\mathbf{\Xi}\mathbf{\Xi}^{\dagger}\right|\right],\label{eq:sum-rate-ub}\end{align}
 where \begin{equation}
\eta:=\frac{1}{sL_{R}}\mathrm{E}_{\mathbf{n}^{2}}\left[\sum_{k=1}^{s}n_{k}^{2}\right],\label{eq:power-efficiency-factor}\end{equation}
 and the inequality comes from the concavity of $\log\left|\cdot\right|$
function \cite{Cover_book91_information_theory} and the fact that
$\mathbf{\Xi}$ and $\mathbf{n}^{2}:=\left[n_{1}^{2}\cdots n_{s}^{2}\right]^{\dagger}$
are independent \cite[Eq. (3.9)]{Edelman_Rao_2005_Random_Matrix_Theory}.
We refer to $\eta$ as \emph{the power efficiency factor} as it describes
the power gain of choosing the strongest antennas against random antenna
selection: if antennas are selected randomly with the total power
constraint increased to $\rho\eta$, the average received signal power
is the same as that of our antenna selection strategy.

Based on the upper bound (\ref{eq:sum-rate-ub}), the sum rate can
be approximately quantified. Note that $\left\Vert \mathbf{h}_{i}\right\Vert $'s
are i.i.d. r.v. with PDF $f\left(x\right)=\frac{1}{\left(L_{R}-1\right)!}x^{L_{R}-1}e^{-x}$.
An application of (\ref{eq:approx-order-statistics}) provides an
accurate approximation of $\eta$. Furthermore, note that $\mathbf{\Xi}\in\mathcal{M}_{L_{R},1}^{\left(s\right)}\left(\mathbb{C}\right)$
is isotropically distributed. Substituting $c=\frac{\rho}{s}\eta L_{R}$
and $\bar{m}=\frac{s}{L_{R}}$ into (\ref{eq:CGM-Shannon-Transform})
estimates the upper bound (\ref{eq:sum-rate-ub}). Simulations in
Section \ref{sec:Simulations-and-Discussion} show that this theoretical
calculation gives a good approximation to the true sum rate.

\subsection{\label{sub:General-Beamforming-Strategy}General Beamforming Strategy}

In this subsection, we propose a power on/off strategy with general
beamforming: the base station selects the strongest users, jointly
quantizes their strongest eigen-channel vectors and broadcasts a common
feedback to all the users; then the on-users transmit along the fedback
beamforming vectors.

\begin{remrk}
We consider this suboptimal strategy for its implementational simplicity
and tractable performance analysis. The user selection is only based
on the Frobenius norm of the channel realization, which does not require
complicated matrix computations. Note that only a few users are chosen
among a large number of total users available and that singular value
decomposition is performed only after user selection in our strategy.
The computation complexity is much lower than a user selection strategy
depending on eigenvalues of the channel matrices. For each selected
user, only the strongest eigen-channel is used. This assumption imposes
a nice symmetric structure and makes analysis tractable. 
\end{remrk}

In particular, for transmission, along with assumptions T1) and T2),
we add one more constraint:

\begin{description}
\item [{T3)}] There is at most one on-beam per user, that is, $s_{i}=0$
or $s_{i}=1$. Note that this also implies that the total number of
on-streams $s$ is the same as the number of on-users. 
\end{description}
For a given channel realization $\mathbf{H}$, we select the on-users
according to 

\begin{description}
\item [{F2)}] User selection criterion. Sort the channel state matrices
$\mathbf{H}_{i}$'s such that $\left\Vert \mathbf{H}_{\left(1:N\right)}\right\Vert \le\left\Vert \mathbf{H}_{\left(2:N\right)}\right\Vert \le\cdots\le\left\Vert \mathbf{H}_{\left(N:N\right)}\right\Vert $,
where $\left\Vert \cdot\right\Vert $ is the Frobenius norm. Then
the users corresponding to $\mathbf{H}_{\left(N-k+1:N\right)},$ $\cdots,\mathbf{H}_{\left(N:N\right)}$
are selected to be turned on.
\end{description}
After selecting the on-users, the base stations also quantizes their
strongest eigen-channel vectors. Consider the singular value decomposition
$\mathbf{H}_{\left(N-k+1:N\right)}=\mathbf{U}_{k}\mathbf{\Lambda}_{k}\mathbf{V}_{k}^{\dagger}$
where the diagonal elements of $\mathbf{\Lambda}_{k}$ are decreasingly
ordered. Let $\mathbf{v}_{k}$ be the column of $\mathbf{V}_{k}$
corresponding to the largest singular value of $\mathbf{\Lambda}_{k}$.
Then the matrix \[
\mathbf{V}:=\left[\mathbf{v}_{1}\cdots\mathbf{v}_{s}\right]\in\mathcal{M}_{L_{T},1}^{\left(s\right)}\left(\mathbb{C}\right),\]
where $\mathcal{M}_{L_{T},1}^{\left(s\right)}\left(\mathbb{C}\right)$
is the set of composite Grassmann matrices (defined in Section \ref{sub:CGMatrix}).
In order to quantize $\mathbf{V}$, the base station constructs a
codebook $\mathcal{B}\subset\mathcal{M}_{L_{T},1}^{\left(s\right)}\left(\mathbb{C}\right)$
with $\left|\mathcal{B}\right|=2^{R_{\mathrm{q}}}$ where $R_{\mathrm{q}}$
is the feedback bits available for eigen-channel vector quantization.
\emph{Note that random codebooks are asymptotically optimal in probability
(Theorem \ref{thm:DRF-CGM-Asymptotics}), we assume that $\mathcal{B}$
is randomly generated from the isotropic distribution.} For a given
eigen-channel vector matrix $\mathbf{V}$, the base station quantizes
$\mathbf{V}$ via the

\begin{description}
\item [{F3)}] Eigen-channel vector quantization function \begin{equation}
\varphi\left(\mathbf{V}\right)=\underset{\mathbf{B}\in\mathcal{B}}{\arg\;\max}\;\sum_{k=1}^{s}\left|\mathbf{v}_{k}^{\dagger}\mathbf{b}_{k}\right|^{2},\label{eq:quantization-fn-eigenchannels}\end{equation}
where $\mathbf{b}_{k}$ is the $k^{\mathrm{th}}$ column of $\mathbf{B}\in\mathcal{B}$.
Indeed, let $P^{\left(m\right)},Q^{\left(m\right)}\in\mathcal{G}_{L_{T},1}^{\left(s\right)}\left(\mathbb{C}\right)$
be the composite planes generated by $\mathbf{V}$ and $\mathbf{B}$
respectively. Then (\ref{eq:quantization-fn-eigenchannels}) is equivalent
to the quantization function on the composite Grassmann manifold defined
in (\ref{eq:quantization-fn-CGM}). 
\end{description}
After quantization, the base station broadcasts the user selection
information (requiring $\log_{2}{N \choose s}$ many feedback bits)
and the index of eigen-channel vector quantization to the users. The
corresponding signal model is now reduced to \begin{align*}
\mathbf{Y} & =\sum_{k=1}^{s}\mathbf{H}_{\left(N-k+1:N\right)}\mathbf{b}_{k}X_{k}+\mathbf{W}\\
 & =\sum_{k=1}^{s}\tilde{\mathbf{h}}_{k}X_{k}+\mathbf{W},\end{align*}
where $\tilde{\mathbf{h}}_{k}:=\mathbf{H}_{\left(N-k+1:N\right)}\mathbf{b}_{k}$
is the equivalent channel for the on-user $k$. 

The point is that the joint quantization (\ref{eq:quantization-fn-eigenchannels})
efficiently employs the feedback resource. It differs from an individual
quantization where each $\mathbf{v}_{k}$ is quantized independently:
separate codebooks $\mathcal{B}_{1},\cdots,\mathcal{B}_{s}$ are constructed
for quantization of $\mathbf{v}_{1},\cdots,\mathbf{v}_{s}$ respectively,
and the quantization function is \[
\varphi^{\prime}\left(\mathbf{V}\right)=\prod_{k=1}^{s}\underset{\mathbf{b}\in\mathcal{B}_{k}}{\arg\;\max}\;\left|\mathbf{v}_{k}^{\dagger}\mathbf{b}\right|\]
where $\prod$ is the Cartesian product. Indeed, individual quantization
is a special case of joint quantization obtained by restricting the
codebook to be a Cartesian product of several individual codebooks.
It is thus obvious that joint quantization achieves a gain tied to
that of vector over scalar quantization.

Certainly the sum rate depends on the codebook. Still, when random
codebooks are considered, it is reasonable to focus upon the ensemble
average sum rate. Let $\tilde{\mathbf{h}}_{k}=n_{k}\bm{\xi}_{k}$
and $\mathbf{\Xi}=\left[\bm{\xi}_{1}\cdots\bm{\xi}_{s}\right]$, where
$n_{k}=\left\Vert \tilde{\mathbf{h}}_{k}\right\Vert $ and $\bm{\xi}_{k}=\tilde{\mathbf{h}}_{k}/n_{k}$.
Then the average sum rate satisfies \begin{align}
\bar{\mathcal{I}}_{\mathrm{rand}} & =\mathrm{E}_{\mathcal{B}}\left[\log\left|\mathbf{I}_{L_{R}}+\frac{\rho}{s}\mathbf{\Xi}\mathrm{diag}\left[n_{1}^{2},\cdots,n_{s}^{2}\right]\mathbf{\Xi}^{\dagger}\right|\right]\nonumber \\
 & \le\mathrm{E}_{\mathbf{\Xi}}\left[\log\left|\mathbf{I}_{L_{R}}+\frac{\rho}{s}\mathrm{E}_{\mathcal{B}}\left[\eta\right]L_{R}\mathbf{\Xi\Xi}^{\dagger}\right|\right],\label{eq:average-sum-rate-ub}\end{align}
 where $\eta$ is defined in (\ref{eq:power-efficiency-factor}).
The inequality in the second line follows from Jensen's inequality
and the next fact.

\begin{thm}
\label{thm:prob-properties}$\bm{\xi}_{k}$'s $1\le k\le s$ are independent
and isotropically distributed. Furthermore, $\bm{\xi}_{k}$'s are
independent of $n_{k}$'s. 
\end{thm}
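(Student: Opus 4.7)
The plan is to exploit the bi-unitary invariance of i.i.d.\ complex Gaussian matrices. Write the SVD $\mathbf{H}_{(N-k+1:N)} = \mathbf{U}_k\mathbf{\Lambda}_k\mathbf{V}_k^{\dagger}$ for each selected user, so that
\[
\tilde{\mathbf{h}}_k = \mathbf{H}_{(N-k+1:N)}\mathbf{b}_k = \mathbf{U}_k\mathbf{\Lambda}_k\mathbf{V}_k^{\dagger}\mathbf{b}_k.
\]
For a standard complex Gaussian $\mathbf{H}_i$, it is classical that the triple $(\mathbf{U}_i,\mathbf{\Lambda}_i,\mathbf{V}_i)$ is mutually independent, with both $\mathbf{U}_i$ and $\mathbf{V}_i$ Haar-distributed on their respective unitary groups. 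Combined with independence of the $\mathbf{H}_i$ across $i$, the collection $\{\mathbf{U}_i\}_{i=1}^N$ is independent of $\{\mathbf{\Lambda}_j,\mathbf{V}_j\}_{j=1}^N$.

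Next I would verify that conditioning on the user selection and the quantization step preserves this structure. The criterion F2 orders users by Frobenius norm, which depends only on the singular values $\mathbf{\Lambda}_i$; the quantization F3 picks the codeword using only $\mathbf{V}=[\mathbf{v}_1,\ldots,\mathbf{v}_s]$, where $\mathbf{v}_k$ is the principal right singular vector of $\mathbf{V}_k$. Let $\mathcal{F}$ denote the $\sigma$-algebra generated by $\{\mathbf{\Lambda}_i,\mathbf{V}_i\}_{i=1}^N$ together with the (independent) random codebook $\mathcal{B}$. Then the sorting, the identities of the selected users, and each $\mathbf{b}_k$ are all $\mathcal{F}$-measurable, while the $\mathbf{U}_k$ attached to the selected users remain mutually independent Haar elements of $U(L_R)$ conditional on $\mathcal{F}$.

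To conclude, set the $\mathcal{F}$-measurable unit vector $\mathbf{a}_k := \mathbf{\Lambda}_k\mathbf{V}_k^{\dagger}\mathbf{b}_k/n_k$, noting that $n_k = \|\mathbf{\Lambda}_k\mathbf{V}_k^{\dagger}\mathbf{b}_k\|$ is itself $\mathcal{F}$-measurable by unitarity of $\mathbf{U}_k$. Then $\bm{\xi}_k = \mathbf{U}_k\mathbf{a}_k$. Since a Haar-distributed unitary applied to any deterministic unit vector is uniform on the unit sphere of $\mathbb{C}^{L_R}$, conditional on $\mathcal{F}$ the $\bm{\xi}_k$ are i.i.d.\ isotropic. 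The conditional law does not depend on $\mathcal{F}$, so unconditionally the $\bm{\xi}_k$ are independent, each isotropic, and jointly independent of the $\mathcal{F}$-measurable vector $(n_1,\ldots,n_s)$, which is exactly the claim.

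The one technical wrinkle is the measure-zero event where singular values coincide (making the SVD non-unique) and the column-phase ambiguity in $\mathbf{U}_k$; both are handled by fixing a measurable convention and do not affect the distributional conclusions. The main conceptual step, which drives everything else, is simply that $\mathbf{U}_k$ enters neither the sorting nor the quantization rule, so its Haar law survives all the data-dependent conditioning in the protocol.
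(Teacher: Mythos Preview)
Your proof is correct and follows essentially the same argument as the paper: both exploit the SVD-based independence of $\mathbf{U}_k$ from $(\mathbf{\Lambda}_k,\mathbf{V}_k)$, observe that user selection and quantization depend only on the latter, and conclude by writing $\bm{\xi}_k=\mathbf{U}_k\mathbf{a}_k$ with $\mathbf{a}_k$ a unit vector measurable with respect to that data. Your phrasing via the explicit $\sigma$-algebra $\mathcal{F}$ and the remark on SVD ambiguities make the conditioning step slightly more precise than the paper's version, but the route is the same.
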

\begin{proof}
Consider the singular value decomposition of a standard Gaussian matrix
$\mathbf{H}=\mathbf{U}\mathbf{\Lambda}\mathbf{V}^{\dagger}$. It is
well known that $\mathbf{U}$ and $\mathbf{V}$ are independent and
isotropically distributed, and both of them are independent of $\mathbf{\Lambda}$
\cite[Eq. (3.9)]{Edelman_Rao_2005_Random_Matrix_Theory}. Now let
$\mathbf{U}_{k}\mathbf{\Lambda}_{k}\mathbf{V}_{k}^{\dagger}$ be the
singular value decomposition of $\mathbf{H}_{\left(N-k+1:N\right)}$
$1\le k\le s$. Since we choose users only according to their Frobenius
norms, the choice of $\mathbf{H}_{\left(N-k+1:N\right)}$ only depends
on $\mathbf{\Lambda}$ but is independent of $\mathbf{U}_{k}$ and
$\mathbf{V}_{k}$. The independence among $\mathbf{U}_{k}$, $\mathbf{V}_{k}$
and $\mathbf{\Lambda}_{k}$ still holds. Note that the equivalent
channel vector $\tilde{\mathbf{h}}_{k}=\mathbf{U}_{k}\mathbf{\Lambda}_{k}\mathbf{V}_{k}^{\dagger}\mathbf{b}_{k}=\mathbf{U}_{k}\hat{\bm{\xi}}_{k}n_{k}$
where $\mathbf{\Lambda}_{k}\mathbf{V}_{k}^{\dagger}\mathbf{b}_{k}=\hat{\bm{\xi}}_{k}n_{k}$.
Since $\mathbf{b}_{k}$ depends only on $\mathbf{V}_{k}$, $\mathbf{U}_{k}$
is independent of $\hat{\bm{\xi}}_{k}$. Thus $\bm{\xi}_{k}=\mathbf{U}_{k}\hat{\bm{\xi}}_{k}$
is isotropically distributed \cite{James_54_Normal_Multivariate_Analysis_Orthogonal_Group}.
Now the fact that $\mathbf{U}_{k}$'s are independent across $k$
implies that $\bm{\xi}_{k}$'s are independent across $k$ \cite{James_54_Normal_Multivariate_Analysis_Orthogonal_Group}. 

Next realize that $n_{k}$ is only a function of $\mathbf{\Lambda}_{k}$
and $\mathbf{V}_{k}$, both of which are independent of $\mathbf{U}_{k}$.
$\mathbf{U}_{k}$'s are independent of $n_{k}$'s (and isotropically
distributed). It follows that $\bm{\xi}_{k}$'s and $n_{k}$'s are
independent \cite{James_54_Normal_Multivariate_Analysis_Orthogonal_Group}. 
\end{proof}

The calculation of $\mathrm{E}_{\mathcal{B}}\left[\eta\right]$ proceeds
as follows. To simplify notation, let $\mathbf{H}_{\left(k\right)}=\mathbf{H}_{\left(N-k+1:N\right)}$
and $n_{\left(k\right)}^{2}=\left\Vert \mathbf{H}_{\left(k\right)}\right\Vert ^{2}$.
Let $\bar{n}_{\left(\cdot\right)}^{2}=\frac{1}{s}\sum_{k=1}^{s}\mathrm{E}\left[n_{\left(k\right)}^{2}\right]$.
Let $\lambda_{k,j}$ ($1\le k\le s$ and $1\le j\le L_{R}$) be the
decreasingly ordered eigenvalues of $\mathbf{H}_{\left(k\right)}\mathbf{H}_{\left(k\right)}^{\dagger}$,
and $\zeta_{j}=\mathrm{E}\left[\left.\lambda_{k,j}\right|n_{\left(k\right)}^{2}=1\right]$,
defined in Lemma \ref{lem:conditional-expectation-Wishart}. For a
quantization codebook $\mathcal{B}$, let $\mathbf{v}_{k}$ be the
$k^{\mathrm{th}}$ column of $\mathbf{V}$, and $\mathbf{b}_{k}$
be the $k^{\mathrm{th}}$ column of $\mathbf{B}=\varphi\left(\mathbf{V}\right)\in\mathcal{B}$.
Define \begin{align}
\gamma & :=\mathrm{E}_{\mathcal{B},\mathbf{V}}\left[\sum_{k=1}^{s}\left|\mathbf{v}_{k}^{\dagger}\mathbf{b}_{k}\right|^{2}\right].\label{eq:def-gamma}\end{align}
$\mathrm{E}_{\mathcal{B}}\left[\eta\right]$ is a function of $\gamma$.

\begin{thm}
\label{thm:n_bar}Let the random codebook $\mathcal{B}$ follows the
isotropic distribution. Then\begin{align}
\mathrm{E}_{\mathcal{B}}\left[\eta\right] & =\mathrm{E}_{\mathcal{B},\mathbf{V}}\left[\sum_{k=1}^{s}\left|\mathbf{v}_{k}^{\dagger}\mathbf{b}_{k}\right|^{2}\right]\nonumber \\
 & =L_{R}\left(\frac{\gamma}{s}\zeta_{1}+\frac{s-\gamma}{s}\frac{1-\zeta_{1}}{L_{T}-1}\right)\bar{n}_{\left(\cdot\right)}^{2}.\label{eq:power-efficiency-calculation}\end{align}

\end{thm}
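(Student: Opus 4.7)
The plan is to compute $\mathrm{E}_{\mathcal{B}}[\eta] = (sL_{R})^{-1} \sum_{k=1}^{s} \mathrm{E}[n_{k}^{2}]$ by diagonalising each selected channel and separating the energy captured along the strongest eigen-channel from the leakage into the other $L_{T} - 1$ directions. Writing the SVD $\mathbf{H}_{(k)} = \mathbf{U}_{k} \mathbf{\Lambda}_{k} \mathbf{V}_{k}^{\dagger}$ and using unitary invariance of the Euclidean norm gives
\[
n_{k}^{2} = \|\mathbf{\Lambda}_{k} \mathbf{V}_{k}^{\dagger} \mathbf{b}_{k}\|^{2} = \sum_{j} \lambda_{k,j}\,|\tilde{b}_{k,j}|^{2},
\]
where $\tilde{\mathbf{b}}_{k} := \mathbf{V}_{k}^{\dagger} \mathbf{b}_{k}$. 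Since $\mathbf{v}_{k}$ is the first column of $\mathbf{V}_{k}$, the $j=1$ summand is $\lambda_{k,1} \alpha_{k}$ with $\alpha_{k} := |\mathbf{v}_{k}^{\dagger} \mathbf{b}_{k}|^{2}$.

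Two ingredients drive the calculation. First, because the user-selection rule involves $\mathbf{H}_{(k)}$ only through its Frobenius norm, the conditional law of $\mathbf{H}_{(k)}$ given $n_{(k)}^{2}$ is uniform on the corresponding sphere, and Lemma \ref{lem:conditional-expectation-Wishart} yields $\mathrm{E}[\lambda_{k,j} \,|\, n_{(k)}^{2}] = \zeta_{j} n_{(k)}^{2}$ with $\sum_{j} \zeta_{j} = 1$, so that $\sum_{j \ge 2} \zeta_{j} = 1 - \zeta_{1}$. Second, I would show that conditional on $\alpha_{k}$ the component of $\mathbf{b}_{k}$ orthogonal to $\mathbf{v}_{k}$ is isotropic on the $(L_{T}-1)$-dimensional orthogonal complement, so that $\mathrm{E}[|\tilde{b}_{k,j}|^{2} \,|\, \alpha_{k}] = (1-\alpha_{k})/(L_{T}-1)$ for every $j \ge 2$. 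This rests on the product structure of the isotropic measure on $\mathcal{M}_{L_{T},1}^{(s)}(\mathbb{C})$: each column of each codeword is independently uniform on the sphere in $\mathbb{C}^{L_{T}}$, so one may apply any unitary $\mathbf{R}_{k}$ fixing $\mathbf{v}_{k}$ to only the $k$-th column of every codeword without altering the codebook law; such a rotation leaves the score in (\ref{eq:quantization-fn-eigenchannels}) invariant, and hence the selected $\mathbf{b}_{k}$ transforms equivariantly under $\mathbf{R}_{k}$.

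Combining these with the independence of $\mathbf{\Lambda}_{k}$ from $(\mathbf{V}_{k}, \mathbf{b}_{k})$ — a consequence of Theorem \ref{thm:prob-properties} together with the codebook being independent of $\mathbf{H}$ — delivers
\[
\mathrm{E}[n_{k}^{2} \,|\, n_{(k)}^{2}, \alpha_{k}] = n_{(k)}^{2} \left[\zeta_{1}\alpha_{k} + \frac{(1-\zeta_{1})(1-\alpha_{k})}{L_{T}-1}\right].
\]
A permutation-symmetry argument — simultaneously permuting the $\mathbf{v}_{k}$'s and the corresponding columns of every codeword leaves both the codebook law and the score invariant — then forces $\mathrm{E}[\alpha_{k}]$ to be independent of $k$, and hence equal to $\gamma/s$ by the definition (\ref{eq:def-gamma}) of $\gamma$. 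Summing over $k$ and using $\bar{n}_{(\cdot)}^{2} = s^{-1}\sum_{k} \mathrm{E}[n_{(k)}^{2}]$ assembles the identity (\ref{eq:power-efficiency-calculation}). The main obstacle I anticipate is the isotropy claim for the orthogonal residual: although intuitive, it must be established carefully because $\mathbf{b}_{k}$ is chosen via a score that couples all columns of a codeword, so one has to exploit the product nature of the isotropic measure on $\mathcal{M}_{L_{T},1}^{(s)}(\mathbb{C})$ rather than the isotropy of an individual column in isolation.
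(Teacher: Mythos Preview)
Your proposal is correct and follows essentially the same route as the paper's proof: decompose $n_k^2$ via the SVD, use the independence of $\mathbf{\Lambda}_k$ from $(\mathbf{V}_k,\mathbf{b}_k)$, invoke Lemma~\ref{lem:conditional-expectation-Wishart} for the eigenvalue factor, and exploit rotational and permutation symmetry to identify $\mathrm{E}[|\tilde b_{k,1}|^2]=\gamma/s$ and $\mathrm{E}[|\tilde b_{k,j}|^2]=(1-\gamma/s)/(L_T-1)$ for $j\ge 2$.

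Two small differences are worth noting. First, the paper exploits the symmetry by rotating $\bar{\mathbf V}_k$ (the last $L_T-1$ columns of the full right singular matrix) rather than the $k$-th column of every codeword; these are dual ways of using the same invariance, and your version is arguably cleaner because it keeps $\mathbf V_k$ fixed. Second, the paper works entirely at the level of the unconditional expectation matrix $\mathrm{E}[\mathbf V_k^\dagger \mathbf b_k \mathbf b_k^\dagger \mathbf V_k]$ and only shows it equals $\mathrm{diag}[\gamma/s,c,\dots,c]$, whereas you aim for the stronger statement that the orthogonal residual is isotropic \emph{conditionally} on $\alpha_k$. Your rotation argument does deliver that stronger fact, but it is more than you need: the final assembly only uses $\mathrm{E}[|\tilde b_{k,j}|^2]$, so equality of expectations across $j\ge 2$ (plus the trace constraint) already suffices. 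This resolves the obstacle you flagged without any additional work.
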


The proof is contained in Appendix \ref{sub:Proof-of-Theorem-n_bar}.

To make use of this formula, the constant $\zeta_{1}$ can be well
approximated by $\bar{\zeta}_{1/L_{R}}$ using our results in Section
\ref{sub:Conditioned-Eigen}, and $\bar{n}_{\left(\cdot\right)}^{2}$
can be estimated by (\ref{eq:approx-order-statistics}). Let $R_{q}$
be the quantization rate on eigen-channel vector quantization. As
a function of $R_{q}$, an approximation of $\gamma$ is provided
at the end of Section \ref{sub:CGManifold}. Put together we have
our estimate of $\mathrm{E}_{\mathcal{B}}\left[\eta\right]$. And
to estimate the average sum rate, we only need to substitute the value
of $\mathrm{E}_{\mathcal{B}}\left[\eta\right]$ into the bound (\ref{eq:average-sum-rate-ub})
and then evaluate it via (\ref{eq:CGM-Shannon-Transform}).

\subsection{\label{sub:Comments}Comments}

\subsubsection{Choice of $s$}

The number of on-beams $s$ should be chosen to maximize the sum rate
keeping in mind that it is a function of SNR $\rho$. Given that our
proved bound accurately approximates the sum rate (when $s\ll N$
and $R_{\mathrm{q}}$ are large enough), the optimal number of on-beams
$s^{*}$ can be found by a simple search.

\subsubsection{Antenna Selection and General Beamforming}

The antenna selection can be viewed as a special case of general beamforming
where a beamforming vector has a particular structure - it must be
a column of the identity matrix. Note that general beamforming requires
total feedback rate $\log_{2}{N \choose s}+R_{\mathrm{q}}$ bits while
antenna selection needs $\log_{2}{NL_{T} \choose s}=\log_{2}{N \choose s}+s\log_{2}L_{T}+O\left(\frac{1}{N}\right)$
bits for feedback. Antenna selection can be viewed as general beamforming
with $R_{\mathrm{q}}=s\log_{2}L_{T}$. One difference between antenna
selection and general beamforming is that antenna selection does not
assume one on-beam per on-user (Assumption T3)). In antenna selection,
multiple antennas corresponding to the same user can be turned on
simultaneously. As a result, the sum rate achieved by antenna selection
is expected to be better than that of general beamforming with $R_{\mathrm{q}}=s\log_{2}L_{T}$.
This is supported in our simulations.

\section{\label{sec:Simulations-and-Discussion}Simulations and Discussion}

Simulations for antenna selection and general beamforming strategies
are presented in Fig. \ref{cap:Antenna-Selection} and \ref{cap:Beamforming}
respectively. Fig. \ref{cap:Antenna-Selection} shows the sum rate
of antenna selection versus SNR. The circles are simulated sum rates,
the solid lines are simulated upper bounds (\ref{eq:sum-rate-ub}),
the plus markers are the sum rates calculated by theoretical approximation,
and the dotted lines are the sum rates corresponding to the case where
there is no CSIT at all. In the simulations, the value of $s$ is
chosen to maximize the sum rate according to our theoretical analysis.
Fig. \ref{cap:Beamforming} illustrates how the sum rate increases
as the eigen-channel vectors quantization rate $R_{\mathrm{q}}$ increases.
Here, the $s$ is fixed to be $4$. The dash-dot lines denote perfect
beamforming, which corresponds to $R_{\mathrm{q}}=+\infty$. The circles
are for our proposed joint strategy, the solid lines are simulated
upper bounds (\ref{eq:sum-rate-ub}), the up-triangles are for antenna
selection and the down-triangles are for individual eigen-channel
vectors quantization (recall the detailed discussion in Section \ref{sub:General-Beamforming-Strategy}).
We observe the following.

\begin{itemize}
\item The upper bounds (\ref{eq:sum-rate-ub}) and (\ref{eq:average-sum-rate-ub})
appear to be good approximations to the sum rate. 
\item The sum rate increases as the number of users $N$ increases. Fig.
\ref{cap:Antenna-Selection} compares the $N=32$ and $N=256$ cases.
Our analysis bears out that increasing $N$ results in an increase
in the equivalent channel norms according to extreme order statistics.
The power efficiency factor increases and therefore the sum rate performance
improves. 
\item The loss due to eigen-channel vector quantization decreases exponentially
as $R_{\mathrm{q}}$ increases. According to Theorem \ref{thm:DRF-CGM},
the decay rate is $\frac{1}{s\left(L_{T}-1\right)}R_{\mathrm{q}}$.
When $L_{T}$ is not large (which is often true in practice), a relatively
small $R_{\mathrm{q}}$ may be good enough. In Fig. \ref{cap:Beamforming},
as $L_{T}=2$ and $s=4$, $R_{\mathrm{q}}=12$ bits is almost as good
as perfect beamforming. 
\item Our proposed joint strategy achieves better performance than individual
quantization. Note that the effect of eigen-channel vectors quantization
is characterized by a single parameter $\gamma$. Joint quantization
yields larger $\gamma$, larger power efficiency factor, and therefore
better performance. 
\item Antenna selection is only slightly better than general beamforming
with $R_{\mathrm{q}}=s\log_{2}L_{T}$. As has been discussed in Section
\ref{sub:Comments}, the performance improvement is due to excluding
the assumption T3). 
\end{itemize}
\begin{figure}[tbh]
\begin{minipage}[c][1\totalheight][t]{0.5\columnwidth}%
\includegraphics[clip,scale=0.6]{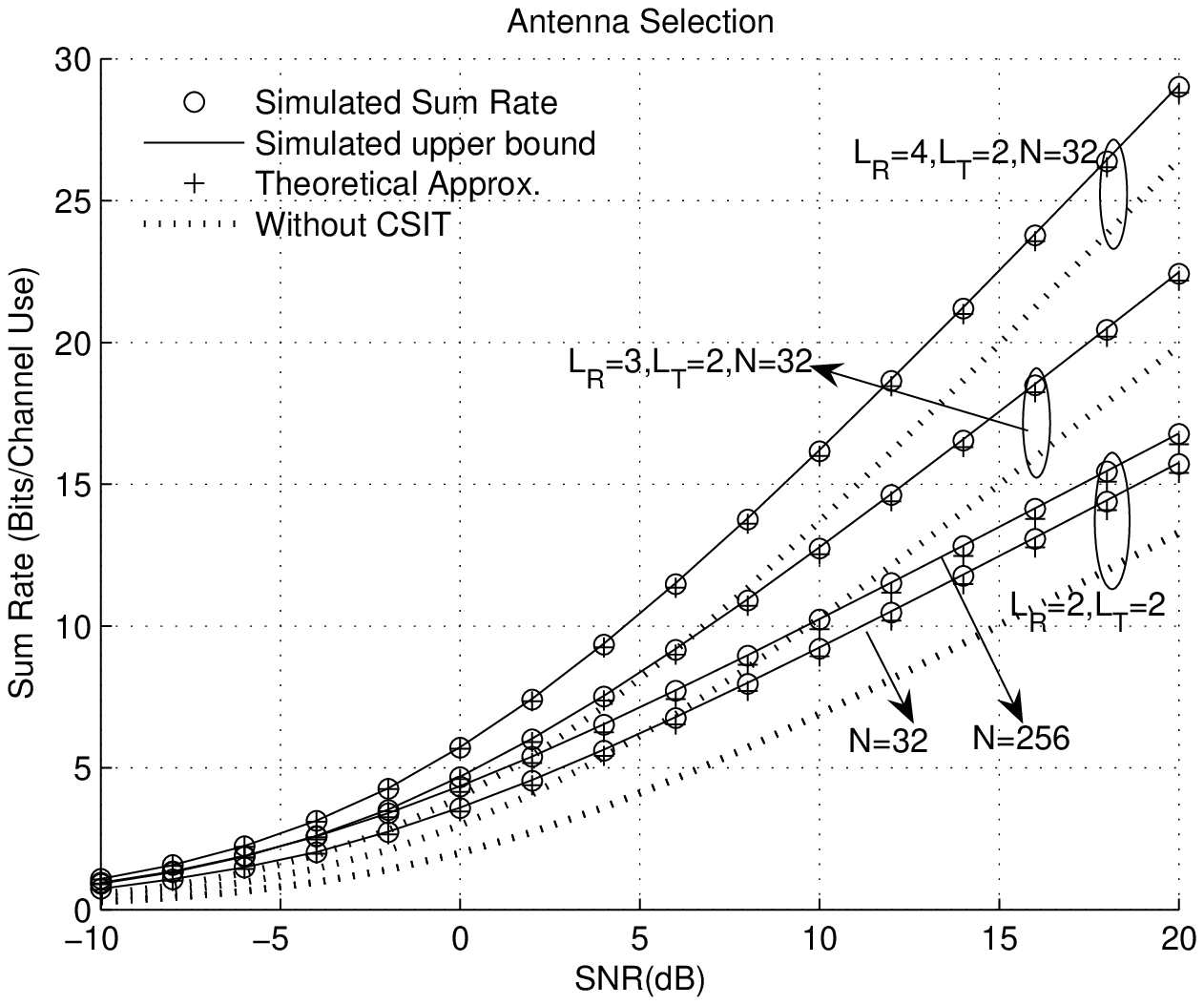}

\caption{\label{cap:Antenna-Selection}Antenna Selection: Sum Rate versus SNR.}
\end{minipage}%
\begin{minipage}[c][1\totalheight][t]{0.5\columnwidth}%
\includegraphics[clip,scale=0.6]{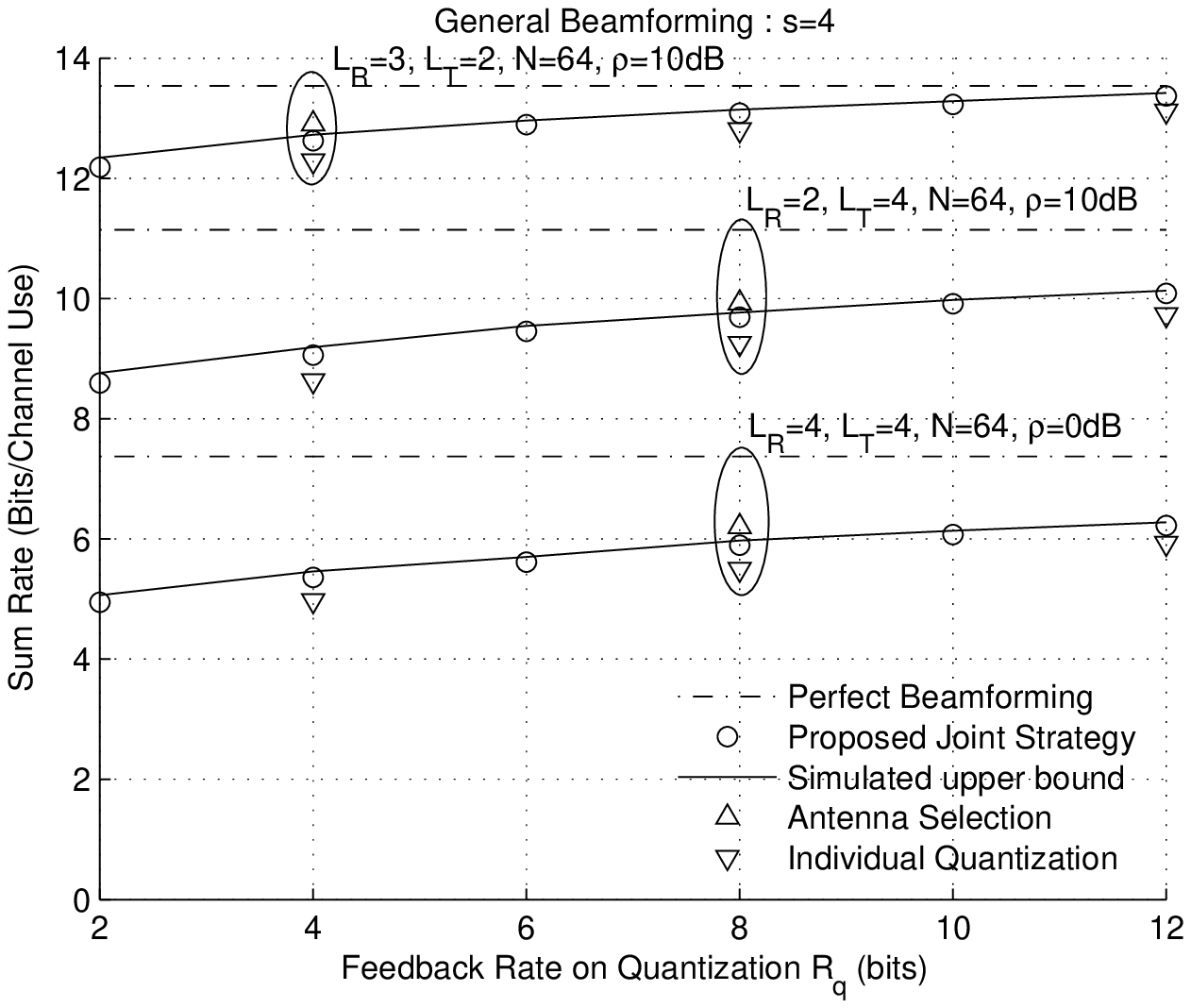}

\caption{\label{cap:Beamforming}General Beamforming: Sum Rate versus $R_{\mathrm{q}}$.}
\end{minipage}
\end{figure}

\section{\label{sec:Conclusion}Conclusion}

This paper proposes a joint quantization and feedback strategy for
multiaccess MIMO systems with finite rate feedback. The effect of
user choice is analyzed by extreme order statistics and the effect
of eigen-channel vector quantization is quantified by analysis on
the composite Grassmann manifold. By asymptotic random matrix theory,
the sum rate is well approximated. Due to its simple implementation
and solid performance analysis, the proposed scheme provides a benchmark
for multiaccess MIMO systems with finite rate feedback. 

\appendix

\subsection{\label{sub:Random-Matrix-Theory}Random Matrix Theory}

Let $\mathbf{H}\in\mathbb{L}^{n\times m}$ be a standard Gaussian
random matrix, where $\mathbb{L}$ is either $\mathbb{R}$ or $\mathbb{C}$.
Let $\lambda_{1},\cdots,\lambda_{n}$ be the $n$ singular values
of $\frac{1}{m}\mathbf{H}\mathbf{H}^{\dagger}$. Define the empirical
distribution of the singular values \[
\mu_{n,\bm{\lambda}}\left(\lambda\right)\triangleq\frac{1}{n}\left|\left\{ j:\;\lambda_{j}\le\lambda\right\} \right|.\]
 As $n,m\rightarrow\infty$ with $\frac{m}{n}\rightarrow\bar{m}\in\mathbb{R}^{+}$,
the empirical measure converges to the Mar\u{c}enko-Pastur law \begin{equation}
d\mu_{\lambda}=\left(\left(1-\bar{m}\right)^{+}\delta\left(\lambda\right)+\frac{\bar{m}\sqrt{\left(\lambda-\lambda^{-}\right)^{+}\left(\lambda^{+}-\lambda\right)^{+}}}{2\pi\lambda}\right)\, d\lambda\label{eq:spectrum-pdf}\end{equation}
 almost surely, where $\lambda^{\pm}=\left(1\pm\sqrt{\frac{1}{\bar{m}}}\right)^{2}$
and $\left(x\right)^{+}=\max\left(x,0\right)$ (A good reference for
this type of result is \cite[Eq. (1.10)]{Verdu_random_matrix_theory_wireless_communications}).
Define\[
\lambda_{t}^{-}\triangleq\begin{cases}
0 & \mathrm{if}\;\beta\ge1\\
\lambda^{-} & \mathrm{if}\;\beta<1\end{cases}.\]
 Consider as well a linear spectral statistic \[
g\left(\frac{1}{m}\mathbf{H}\mathbf{H}^{\dagger}\right)=\frac{1}{n}\sum_{i=1}^{n}g\left(\lambda_{i}\right).\]
 If $g$ is Lipschitz on $\left[\lambda_{t}^{-},\lambda^{+}\right]$,
then we also have that \[
\underset{\left(n,m\right)\rightarrow\infty}{\lim}g\left(\frac{1}{m}\mathbf{H}\mathbf{H}^{\dagger}\right)=\int g\left(\lambda\right)d\mu_{\lambda}\]
 almost surely, see for example \cite{Guionnet_Zeitouni_2000} for
a modern approach.

The asymptotic properties of the maximum eigenvalue will figure into
our analysis. Denote the largest eigenvalue by $\lambda_{1}$. 

\begin{prop}
\label{prop:Asymptotic-Lambda-Max}Let $n,m\rightarrow\infty$ linearly
with $\frac{m}{n}\rightarrow\bar{m}\in\mathbb{R}^{+}$. 
\begin{enumerate}
\item $\lambda_{1}\rightarrow\lambda^{+}$ almost surely. 
\item All moments of $\lambda_{1}$ also converge. 
\end{enumerate}
\end{prop}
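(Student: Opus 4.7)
\textbf{Proof plan for Proposition \ref{prop:Asymptotic-Lambda-Max}.} The plan is to treat the two claims by the standard moment method, splitting the almost sure convergence into matching lower and upper bounds and then upgrading to moments via uniform integrability.

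For the lower bound on $\lambda_1$, I would appeal directly to the Mar\u{c}enko--Pastur limit recalled in (\ref{eq:spectrum-pdf}). Since the limiting density has positive mass on every subinterval of the form $[\lambda^+-\epsilon,\lambda^+]$, the weak a.s.\ convergence $\mu_{n,\bm\lambda}\Rightarrow\mu_\lambda$ gives, for any $\epsilon>0$,
\[
\liminf_{n\to\infty}\frac{1}{n}\bigl|\{j:\lambda_j\ge\lambda^+-\epsilon\}\bigr|\;\ge\;\mu_\lambda\bigl([\lambda^+-\epsilon,\lambda^+]\bigr)\;>\;0
\]
almost surely. In particular at least one $\lambda_j$ lies in this interval for all large $n,m$, so $\liminf\lambda_1\ge\lambda^+-\epsilon$ a.s., and letting $\epsilon\downarrow0$ along a countable sequence yields $\liminf\lambda_1\ge\lambda^+$ a.s.

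The upper bound $\limsup\lambda_1\le\lambda^+$ is the real work. Here I would run the classical Bai--Yin moment argument. Note $\lambda_1^{k}\le\mathrm{tr}\bigl((\tfrac{1}{m}\mathbf{HH}^\dagger)^{k}\bigr)$, so it suffices to bound $\mathrm{E}[\mathrm{tr}((\tfrac{1}{m}\mathbf{HH}^\dagger)^{k_n})]$ for a sequence $k_n\to\infty$ chosen slowly (e.g.\ $k_n=\lfloor(\log n)^2\rfloor$). Expanding the trace as a sum over closed walks on a bipartite graph and using the Gaussian moment formula, the leading contribution comes from the non-crossing pairings, which are counted by the Narayana numbers and assemble into $(\lambda^+)^{k_n}$; all other terms contribute a factor of $n^{-1}$ per extra edge. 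A careful bookkeeping (identical to Bai--Yin for complex Wishart matrices) yields
\[
\mathrm{E}\!\left[\mathrm{tr}\bigl((\tfrac{1}{m}\mathbf{HH}^\dagger)^{k_n}\bigr)\right]\;\le\;C\,n\,(\lambda^+)^{k_n}\bigl(1+o(1)\bigr).
\]
Markov then gives $\Pr(\lambda_1>\lambda^++\epsilon)\le C n(\lambda^+/(\lambda^++\epsilon))^{k_n}$, which is summable in $n$ for the chosen $k_n$, and Borel--Cantelli closes the upper bound.

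For claim 2 I would invoke uniform integrability. Having shown $\lambda_1\to\lambda^+$ a.s., convergence of $\mathrm{E}[\lambda_1^p]$ for every fixed $p$ follows once $\sup_{n,m}\mathrm{E}[\lambda_1^{p+1}]<\infty$. But the same trace bound with any fixed integer $q\ge p+1$ gives
\[
\mathrm{E}[\lambda_1^{q}]\;\le\;\mathrm{E}\!\left[\mathrm{tr}\bigl((\tfrac{1}{m}\mathbf{HH}^\dagger)^{q}\bigr)\right]\;=\;O(n),
\]
which is not yet uniform. To sharpen to $O(1)$ I would use the Bai--Yin style inequality $\mathrm{E}[\lambda_1^{q}]\le\mathrm{E}[\mathrm{tr}(\cdot)^{k_n}]^{q/k_n}$ with a slowly growing $k_n$, or alternatively apply a Gaussian concentration bound on the largest singular value of $\mathbf{H}/\sqrt{m}$ (it is a $1/\sqrt{m}$-Lipschitz function of the Gaussian entries), which yields sub-Gaussian tails around the median and hence uniform boundedness of all moments. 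Either route suffices to upgrade a.s.\ convergence to moment convergence. The main obstacle is the combinatorial edge bound in the upper half of claim 1; everything else is either a direct consequence of the Mar\u{c}enko--Pastur limit or a routine uniform integrability argument.
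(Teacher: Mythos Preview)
Your sketch is sound. Note, though, that the paper does not actually prove this proposition: it simply attributes part~1 to Bai--Yin--Krishnaiah and Bai--Yin, and part~2 to the tail estimates of Ledoux. Your argument for part~1 is precisely the Bai--Yin moment method those citations point to, so on that count you are reconstructing the referenced proof rather than departing from it.

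Where you differ is part~2. The paper leans on Ledoux's orthogonal-polynomial tail bounds, which give very sharp (Tracy--Widom scale) exponential decay for $\Pr(\lambda_1>\lambda^++\epsilon)$ and hence moment convergence immediately. Your two alternatives---bootstrapping the trace-moment bound through Lyapunov's inequality with a growing exponent, or invoking Gaussian concentration for the Lipschitz map $\mathbf{H}\mapsto\|\mathbf{H}\|_{\mathrm{op}}/\sqrt{m}$---are more elementary and entirely adequate for uniform integrability, though they yield cruder tail information. One small point: in your Lyapunov step you wrote $\mathrm{E}[\lambda_1^{q}]\le\mathrm{E}[\mathrm{tr}(\cdot)^{k_n}]^{q/k_n}$; strictly this goes through the intermediate $\mathrm{E}[\lambda_1^{q}]\le(\mathrm{E}[\lambda_1^{k_n}])^{q/k_n}$, after which the trace bound $\mathrm{E}[\lambda_1^{k_n}]\le Cn(\lambda^+)^{k_n}(1+o(1))$ and the choice $k_n\gg\log n$ drive the prefactor $(Cn)^{q/k_n}$ to~$1$. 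With that clarification the argument closes.
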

The almost sure convergence goes back to \cite{Bai_Yin_Krishnaih_1988,Bai_Yin_1993}.
The convergence of moments is implied by the tail estimates in \cite{Ledoux_2005_Orthogonal_Polynomials}.
A direct application of this proposition is that for $\forall A_{n}\subset\mathbb{R}^{n}$
such that $\mu_{n,\bm{\lambda}}\left(A_{n}\right)\rightarrow0$, $\mathrm{E}_{\bm{\lambda}}\left[\lambda_{1},\; A_{n}\right]\rightarrow0$.

\begin{thm}
\label{thm:truncate-function-RMT}Let $\mathbf{H}\in\mathbb{L}^{n\times m}$
($\mathbb{L}=\mathbb{R}/\mathbb{C}$) be standard Gaussian matrix
and $\lambda_{i}$ be the $i^{\mathrm{th}}$ largest eigenvalue of
$\frac{1}{m}\mathbf{H}\mathbf{H}^{\dagger}$. 
\begin{enumerate}
\item Let $g\left(\lambda\right)=f\left(\lambda\right)\cdot\chi_{\left[a,\lambda^{+}\right]}\left(\lambda\right)$
for some $a<\lambda^{+}$ where $f\left(\lambda\right)$ is Lipschitz
continuous on $\left[\lambda^{-},\lambda^{+}\right]$ and $\chi_{\left[a,\lambda^{+}\right]}\left(\lambda\right)$
is the indicator function on the set $\left[a,\lambda^{+}\right]$,
then as $n,m\rightarrow\infty$ with $\frac{m}{n}\rightarrow\bar{m}\in\mathbb{R}^{+}$,
\[
\underset{\left(n,m\right)\rightarrow\infty}{\lim}\int g\left(\lambda\right)\cdot d\mu_{n,\bm{\lambda}}\left(\lambda\right)=\int g\left(\lambda\right)\cdot d\mu_{\lambda}\]
almost surely and \[
\underset{\left(n,m\right)\rightarrow\infty}{\lim}\mathrm{E}_{\bm{\lambda}}\left[\frac{1}{n}\sum_{i=1}^{n}g\left(\lambda_{i}\right)\right]=\int g\left(\lambda\right)\cdot d\mu_{\lambda}.\]

\item For $\forall a\in\left(\lambda_{t}^{-},\lambda^{+}\right)$, \[
\mathrm{E}_{\bm{\lambda}}\left[\frac{1}{n}\left|\left\{ \lambda_{i}:\;\lambda_{i}\ge a\right\} \right|\right]=\int_{a}^{\lambda^{+}}d\mu_{\lambda}.\]

\item For $\forall\tau\in\left(0,\min\left(1,\bar{m}\right)\right)$, \[
\underset{\left(n,m\right)\rightarrow\infty}{\lim}\mathrm{E}\left[\frac{1}{n}\left(\sum_{1\le i\le n\tau}\lambda_{i}\right)\right]=\int_{a}^{\lambda^{+}}\lambda\cdot d\mu_{\lambda},\]
where $a\in\left(\lambda^{-},\lambda^{+}\right)$ satisfies \[
\tau=\int_{a}^{\lambda^{+}}d\mu_{\lambda}.\]

\end{enumerate}
\end{thm}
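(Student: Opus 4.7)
The three parts stack, so the plan is to establish part~1 in detail and then deduce parts~2 and~3 from it. For part~1, the obstruction to directly applying the Lipschitz linear-spectral-statistic convergence quoted in Appendix~\ref{sub:Random-Matrix-Theory} is the jump of $g$ at $\lambda=a$; the drop to zero past $\lambda^{+}$ is harmless because $\lambda_{1}\to\lambda^{+}$ almost surely by Proposition~\ref{prop:Asymptotic-Lambda-Max}. I would sandwich $g$ between two globally Lipschitz functions $g_{\epsilon}^{\pm}$ that agree with $g$ outside $(a-\epsilon,a+\epsilon)$ and interpolate linearly inside. Then almost surely $\int g_{\epsilon}^{-}\,d\mu_{n,\bm{\lambda}} \le \int g\,d\mu_{n,\bm{\lambda}} \le \int g_{\epsilon}^{+}\,d\mu_{n,\bm{\lambda}}+o(1)$, where the $o(1)$ only absorbs sporadic eigenvalues slightly exceeding $\lambda^{+}$. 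Applying the Lipschitz convergence to each $g_{\epsilon}^{\pm}$ and using that the Mar\u{c}enko--Pastur law has no atoms inside $(\lambda^{-},\lambda^{+})$, so that $\int g_{\epsilon}^{\pm}\,d\mu_{\lambda}\to\int g\,d\mu_{\lambda}$ as $\epsilon\to 0$, yields the a.s. statement.

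To upgrade the a.s. limit to a limit in expectation, I would invoke the moment convergence half of Proposition~\ref{prop:Asymptotic-Lambda-Max}: $\frac{1}{n}\sum_{i}|g(\lambda_{i})|$ is dominated (up to constants depending only on $f$) by $\sup_{[\lambda^{-},\lambda^{+}]}|f|+\mathrm{Lip}(f)\,\lambda_{1}$, whose $p$-th moments stay bounded uniformly in $n,m$. This delivers the uniform integrability needed to apply Vitali's convergence theorem and exchange limit with expectation.

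Part~2 (read as a limit as $(n,m)\to\infty$) falls out of part~1 with $f\equiv 1$. For part~3 the strategy is a sandwich on the number of retained eigenvalues. Given $\tau$ and its associated $a$, for any small $\delta>0$ pick $a_{\pm}\in(\lambda^{-},\lambda^{+})$ with $a_{-}<a<a_{+}$ so that $\int_{a_{\mp}}^{\lambda^{+}}d\mu_{\lambda}=\tau\pm\delta$. The almost-sure version of part~2, applied to both $a_{-}$ and $a_{+}$, gives $|\{i:\lambda_{i}\ge a_{+}\}|\le\lfloor n\tau\rfloor\le|\{i:\lambda_{i}\ge a_{-}\}|$ almost surely for all large $n$. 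Consequently,
\begin{equation*}
\frac{1}{n}\sum_{i:\lambda_{i}\ge a_{+}}\lambda_{i}\;\le\;\frac{1}{n}\sum_{1\le i\le n\tau}\lambda_{i}\;\le\;\frac{1}{n}\sum_{i:\lambda_{i}\ge a_{-}}\lambda_{i}.
\end{equation*}
Applying part~1 with $f(\lambda)=\lambda$ to the two extremes and then letting $\delta\downarrow 0$ yields the claim, by continuity of $a\mapsto\int_{a}^{\lambda^{+}}\lambda\,d\mu_{\lambda}$ on $(\lambda^{-},\lambda^{+})$.

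The only delicate step is the a.s.-to-$L^{1}$ passage in part~1; without the moment control furnished by Proposition~\ref{prop:Asymptotic-Lambda-Max} one could not interchange limit and expectation against the indicator-truncated integrand. Everything else---the Lipschitz sandwich, atomlessness of $\mu_{\lambda}$, and the reductions of parts~2 and~3 to part~1---is routine once the Lipschitz linear-statistic convergence is available almost surely, which it is.
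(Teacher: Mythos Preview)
Your approach is essentially the paper's: a Lipschitz sandwich for part~1, the $f\equiv 1$ specialization for part~2, and a threshold sandwich $a_{-}<a<a_{+}$ for part~3. One small point to tighten in part~3: the inequality $\frac{1}{n}\sum_{\lambda_i\ge a_{+}}\lambda_i\le \frac{1}{n}\sum_{i\le n\tau}\lambda_i\le \frac{1}{n}\sum_{\lambda_i\ge a_{-}}\lambda_i$ only holds \emph{eventually} almost surely, so you cannot directly take expectations of it; the paper handles this by restricting to the high-probability event where the count inequalities hold and bounding the complementary contribution by $\mathrm{E}[\lambda_{1},\cdot]$ via the moment control in Proposition~\ref{prop:Asymptotic-Lambda-Max}---precisely the tool you already invoked for the $L^{1}$ passage in part~1, so the fix is immediate.
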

\begin{proof}
\begin{enumerate}
\item Though $g\left(\lambda\right)$ is not Lipschitz continuous on $\left[\lambda_{t}^{-},\lambda^{+}\right]$,
we are able to construct sequences of Lipschitz functions $g_{k}^{+}\left(\lambda\right)$
and $g_{k}^{-}\left(\lambda\right)$ such that $g_{k}^{\pm}\left(\lambda\right)$'s
are Lipschitz continuous on $\left[\lambda_{t}^{-},\lambda^{+}\right]$
for all $k$, $g_{k}^{+}\left(\lambda\right)\ge g\left(\lambda\right)$
and $g_{k}^{-}\left(\lambda\right)\le g\left(\lambda\right)$ for
$\lambda\in\left[\lambda_{t}^{-},\lambda^{+}\right]$, and $g_{k}^{\pm}\left(\lambda\right)\rightarrow g\left(\lambda\right)$
pointwisely as $k\rightarrow\infty$. Due to their Lipschitz continuity,
$g_{k}^{\pm}\left(\lambda\right)$'s are integrable with respect to
$\mu_{\lambda}$. Then we have \begin{align*}
\underset{k\rightarrow\infty}{\lim}\underset{\left(n,m\right)\rightarrow\infty}{\lim}\int g_{k}^{-}\left(\lambda\right)\cdot d\mu_{n,\bm{\lambda}}\left(\lambda\right) & \le\underset{\left(n,m\right)\rightarrow\infty}{\lim}\int g\left(\lambda\right)\cdot d\mu_{n,\bm{\lambda}}\left(\lambda\right)\\
 & \le\underset{k\rightarrow\infty}{\lim}\underset{\left(n,m\right)\rightarrow\infty}{\lim}\int g_{k}^{+}\left(\lambda\right)\cdot d\mu_{n,\bm{\lambda}}\left(\lambda\right),\end{align*}
 while \[
\underset{k\rightarrow\infty}{\lim}\underset{\left(n,m\right)\rightarrow\infty}{\lim}\int g_{k}^{-}\left(\lambda\right)\cdot d\mu_{n,\bm{\lambda}}\left(\lambda\right)=\underset{k\rightarrow\infty}{\lim}\int g_{k}^{-}\left(\lambda\right)\cdot d\mu\left(\lambda\right)=\int g\left(\lambda\right)\cdot d\mu\left(\lambda\right)\]
 and \[
\underset{k\rightarrow\infty}{\lim}\underset{\left(n,m\right)\rightarrow\infty}{\lim}\int g_{k}^{+}\left(\lambda\right)\cdot d\mu_{n,\bm{\lambda}}\left(\lambda\right)=\int g\left(\lambda\right)\cdot d\mu\left(\lambda\right)\]
 almost surely. This proves the almost sure statement, and the convergence
of the expectation follows from dominated convergence.
\item follows from the first part upon setting $g\left(\lambda\right)=\chi_{\left[a,\lambda^{+}\right]}\left(\lambda\right)$. 
\item Since $a\in\left(\lambda^{-},\lambda^{+}\right)$, there exists an
$\epsilon>0$ such that $\left(a-\epsilon,a+\epsilon\right)\subset\left(\lambda^{-},\lambda^{+}\right)$.
For any $\delta>0$, define the events\[
A_{n,a+\epsilon}=\left\{ \bm{\lambda}:\;\frac{\left|\left\{ \lambda_{i}:\;\lambda_{i}\ge a+\epsilon\right\} \right|}{n}<\tau\right\} ,\]
\[
A_{n,a-\epsilon}=\left\{ \bm{\lambda}:\;\frac{\left|\left\{ \lambda_{i}:\;\lambda_{i}\ge a-\epsilon\right\} \right|}{n}>\tau\right\} ,\]
\[
B_{n,a+\epsilon,\delta}=\left\{ \bm{\lambda}:\;\left|\frac{1}{n}\sum_{\lambda_{i}\ge a+\epsilon}\lambda_{i}-\int_{a+\epsilon}^{\lambda^{+}}\lambda\cdot d\mu_{\lambda}\right|<\delta\right\} ,\]
and \[
B_{n,a-\epsilon,\delta}=\left\{ \bm{\lambda}:\;\left|\frac{1}{n}\sum_{\lambda_{i}\ge a-\epsilon}\lambda_{i}-\int_{a-\epsilon}^{\lambda^{+}}\lambda\cdot d\mu_{\lambda}\right|<\delta\right\} .\]
According to the first part of this theorem, it can be verified that
$\forall\epsilon>0$, as $\left(n,m\right)\rightarrow\infty$, $\mu_{n,\bm{\lambda}}\left(A_{n,a+\epsilon}\right)\rightarrow1$,
$\mu_{n,\bm{\lambda}}\left(A_{n,a-\epsilon}\right)\rightarrow1$,
$\mu_{n,\bm{\lambda}}\left(B_{n,a+\epsilon,\delta}\right)\rightarrow1$,
and $\mu_{n,\bm{\lambda}}\left(B_{n,a-\epsilon,\delta}\right)\rightarrow1$.
Then for sufficiently large $n$, \begin{align}
 & \mathrm{E}_{\bm{\lambda}}\left[\frac{1}{n}\sum_{i\le n\tau}\lambda_{i}\right]\nonumber \\
 & \ge\mathrm{E}_{\bm{\lambda}}\left[\frac{1}{n}\sum_{i\le n\tau}\lambda_{i},\; A_{n,a+\epsilon}\cap B_{n,a+\epsilon,\delta}\right]\nonumber \\
 & \overset{\left(a\right)}{\ge}\mathrm{E}_{\bm{\lambda}}\left[\frac{1}{n}\sum_{\lambda_{i}\ge a+\epsilon}\lambda_{i},\; A_{n,a+\epsilon}\cap B_{n,a+\epsilon,\delta}\right]\nonumber \\
 & \overset{\left(b\right)}{\ge}\mathrm{E}_{\bm{\lambda}}\left[\int_{a+\epsilon}^{\lambda^{+}}\lambda\cdot d\mu_{\lambda}-\delta,\; A_{n,a+\epsilon}\cap B_{n,a+\epsilon,\delta}\right]\nonumber \\
 & =\left(\int_{a+\epsilon}^{\lambda^{+}}\lambda\cdot d\mu_{\lambda}-\delta\right)\mu_{n,\bm{\lambda}}\left(A_{n,a+\epsilon}\cap B_{n,a+\epsilon,\delta}\right)\nonumber \\
 & \ge\left(\int_{a+\epsilon}^{\lambda^{+}}\lambda\cdot d\mu_{\lambda}-\delta\right)\left(1-\delta\right),\label{eq:E-eigenvalue-lb}\end{align}
where $\mathrm{E}\left[\cdot,A\right]$ denotes the expectation operation
on the measurable set $A$, $\left(a\right)$ and $\left(b\right)$
follow from the definition of $A_{n,a+\epsilon}$ and $B_{n,a+\epsilon,\delta}$
respectively. Similarly, when $n$ is large enough, \begin{align}
 & \mathrm{E}_{\bm{\lambda}}\left[\frac{1}{n}\sum_{i\le n\tau}\lambda_{i}\right]\nonumber \\
 & \le\mathrm{E}_{\bm{\lambda}}\left[\frac{1}{n}\sum_{i\le n\tau}\lambda_{i},\; A_{n,a-\epsilon}\right]+\mathrm{E}_{\bm{\lambda}}\left[\lambda_{1},\; A_{n,a-\epsilon}^{c}\right]\nonumber \\
 & \overset{\left(c\right)}{\le}\mathrm{E}_{\bm{\lambda}}\left[\frac{1}{n}\sum_{\lambda_{i}\ge a-\epsilon}\lambda_{i},\; A_{n,a-\epsilon}\right]+\delta\nonumber \\
 & \le\mathrm{E}_{\bm{\lambda}}\left[\int_{a-\epsilon}^{\lambda^{+}}\lambda\cdot d\mu_{\lambda}+\delta,\; A_{n,a-\epsilon}\cap B_{n,a-\epsilon,\delta}\right]\nonumber \\
 & \quad\quad+\mathrm{E}_{\bm{\lambda}}\left[\lambda_{1},\; A_{n,a-\epsilon}\cap B_{n,a-\epsilon,\delta}^{c}\right]+\delta\nonumber \\
 & \overset{\left(d\right)}{\le}\left(\int_{a-\epsilon}^{\lambda^{+}}\lambda\cdot d\mu_{\lambda}+\delta\right)\mu_{n,\bm{\lambda}}\left(A_{n,a-\epsilon}\cap B_{n,a-\epsilon,\delta}\right)+2\delta\nonumber \\
 & \le\int_{a-\epsilon}^{\lambda^{+}}\lambda\cdot d\mu_{\lambda}+3\delta,\label{eq:E-eigenvalue-ub}\end{align}
where $\left(c\right)$ and $\left(d\right)$ are an application of
Proposition \ref{prop:Asymptotic-Lambda-Max}. Now let $\delta\downarrow0$
and then $\epsilon\downarrow0$. Then we have proved that \[
\underset{\left(n,m\right)\rightarrow\infty}{\lim}\mathrm{E}\left[\frac{1}{n}\left(\sum_{1\le i\le n\tau}\lambda_{i}\right)\right]=\int_{a}^{\lambda^{+}}\lambda\cdot d\mu_{\lambda}.\]

\end{enumerate}
\end{proof}

\subsection{\label{sub:Proof-of-Extreme-Order-Statistics}Proof of Lemma \ref{lem:Expectation-extreme-chi2}}

As the first step, we compute  the asymptotic distribution and expectation
of $X_{\left(n:n\right)}$. It can be verified that \[
1-F_{X}\left(y\right)=\int_{y}^{+\infty}f_{X}\left(x\right)dx=e^{-y}\left(\sum_{i=0}^{L-1}\frac{1}{i!}y^{i}\right),\]
and for $\forall a>0$,\begin{align*}
 & \int_{a}^{+\infty}1-F_{X}\left(y\right)dy\\
 & =e^{-a}\left(\sum_{i=0}^{L-1}\frac{1}{i!}a^{i}+\sum_{i=0}^{L-2}\frac{1}{i!}a^{i}+\cdots+\sum_{i=0}^{0}\frac{1}{i!}a^{i}\right)\\
 & =e^{-a}\left(\sum_{i=0}^{L-1}\frac{L-i}{i!}a^{i}\right).\end{align*}
For $0<t<+\infty$, define \[
R\left(t\right)=\frac{\int_{t}^{+\infty}\left(1-F_{X}\left(y\right)\right)dy}{1-F_{X}\left(t\right)}.\]
Then \begin{equation}
\underset{t\rightarrow+\infty}{\lim}R\left(t\right)=\underset{t\rightarrow+\infty}{\lim}\frac{\sum_{i=0}^{L-1}\frac{L-i}{i!}t^{i}}{\sum_{i=0}^{L-1}\frac{1}{i!}t^{i}}=1.\label{eq:_R(t)_limit}\end{equation}
Now let \[
a_{n}=\inf\left\{ x:1-F_{X}\left(x\right)\leq\frac{1}{n}\right\} ,\]
and \[
b_{n}=R\left(a_{n}\right)=\frac{\sum_{i=0}^{L-1}\frac{L-i}{i!}a_{n}^{i}}{\sum_{i=0}^{L-1}\frac{1}{i!}a_{n}^{i}}.\]
It can be verified that $a_{n}\overset{n\rightarrow\infty}{\longrightarrow}+\infty$,
and that $b_{n}\overset{n\rightarrow\infty}{\longrightarrow}1$ by
(\ref{eq:_R(t)_limit}). Furthermore,\begin{align}
 & \underset{n\rightarrow\infty}{\lim}n\left[1-F_{X}\left(a_{n}+xb_{n}\right)\right]\nonumber \\
 & =\underset{n\rightarrow\infty}{\lim}\frac{1-F_{X}\left(a_{n}+xb_{n}\right)}{1-F_{X}\left(a_{n}\right)}\nonumber \\
 & =\underset{n\rightarrow\infty}{\lim}e^{-xb_{n}}\frac{\sum_{i=0}^{L-1}\frac{1}{i!}\left(a_{n}+xb_{n}\right)^{i}}{\sum_{i=0}^{L-1}\frac{1}{i!}\left(a_{n}\right)^{i}}\nonumber \\
 & =e^{-x}.\label{eq:_exponential_element}\end{align}
Therefore, for all $x\in\mathbb{R}$ and sufficiently large $n$,
\begin{align*}
 & P\left(X_{\left(n:n\right)}<a_{n}+b_{n}x\right)\\
 & =\left[1-\frac{1}{n}n\left(1-F_{X}\left(a_{n}+b_{n}x\right)\right)\right]^{n}\\
 & =\exp\left(n\cdot\log\left(1-\frac{1}{n}e^{-x}\left(1+o\left(1\right)\right)\right)\right)\\
 & =\exp\left(-e^{-x}\left(1+o\left(1\right)\right)\right)\\
 & \overset{n\rightarrow\infty}{\longrightarrow}\exp\left(-e^{-x}\right).\end{align*}
This identifies the limiting distribution, and the tail is of sufficient
decay to conclude that \[
\underset{n\rightarrow+\infty}{\lim}\mathrm{E}\left[\frac{X_{\left(n:n\right)}-a_{n}}{b_{n}}\right]=\int_{-\infty}^{+\infty}xde^{-e^{-x}}:=\mu_{1}.\]

Given the law of the first maxima $X_{\left(n:n\right)}$, the distribution
and the expectation of the $k^{\mathrm{th}}$ maxima follow easily.
With $z_{n}=a_{n}+b_{n}x$, \[
P\left(X_{\left(n-k+1:n\right)}\le z_{n}\right)=\sum_{t=0}^{k-1}{n \choose t}\left(1-F_{X}\left(z_{n}\right)\right)^{t}F_{X}^{n-t}\left(z_{n}\right).\]
According to (\ref{eq:_exponential_element}), ${n \choose t}\left(1-F_{X}\left(z_{n}\right)\right)^{t}\overset{n\rightarrow\infty}{\longrightarrow}\frac{1}{t!}e^{-tx}$
and $F_{X}^{n-t}\left(z_{n}\right)\overset{n\rightarrow\infty}{\longrightarrow}e^{-e^{-x}}$.
Thus\[
P\left(\frac{X_{\left(n-k+1:n\right)}-a_{n}}{b_{n}}\le x\right)\overset{n\rightarrow\infty}{\longrightarrow}\exp\left(-e^{-x}\right)\sum_{t=0}^{k-1}\frac{1}{t!}e^{-tx}.\]
Denote it by $H_{k}\left(x\right)$. The corresponding PDF is given
by \begin{equation}
h_{k}\left(x\right)=H_{k}^{'}\left(x\right)=e^{-e^{-x}}\frac{1}{\left(k-1\right)!}e^{-kx}.\label{eq:h_k_(x)}\end{equation}
Define $\mu_{k}=\int_{-\infty}^{+\infty}xh_{k}\left(x\right)dx$.
It can be verified that Evaluating $\mu_{k}^{x}$ gives an iterative
formula\begin{align*}
\mu_{k} & =\frac{1}{\left(k-1\right)!}\int_{-\infty}^{+\infty}xe^{-kx}e^{-e^{-x}}dx\\
 & =0-\frac{1}{\left(k-1\right)!}\int_{-\infty}^{+\infty}e^{-e^{-x}}d\left(xe^{-(k-1)x}\right)\\
 & =\frac{1}{\left(k-2\right)!}\int_{-\infty}^{+\infty}xe^{-\left(k-1\right)x}e^{-e^{-x}}dx\\
 & \quad\quad-\frac{1}{\left(k-1\right)!}\int_{-\infty}^{+\infty}e^{-\left(k-1\right)x}e^{-e^{-x}}dx\\
 & =\mu_{k-1}^{x}-\frac{1}{k-1},\end{align*}
where the last step follows the fact that $\frac{1}{\left(k-2\right)!}e^{-\left(k-1\right)}\exp\left(-e^{-x}\right)$
is the asymptotic pdf of $\left(k-1\right)^{\mathrm{th}}$ maxima.
Therefore,\[
\underset{n\rightarrow+\infty}{\lim}\mathrm{E}\left[\frac{X_{\left(n-k+1:n\right)}-a_{n}}{b_{n}}\right]=\mu_{k}=\mu_{1}-\sum_{i=1}^{k-1}\frac{1}{i}.\]
and so also,\[
\underset{n\rightarrow+\infty}{\lim}\mathrm{E}\left[\frac{\sum_{k=1}^{s}X_{\left(n-k+1:n\right)}-sa_{n}}{b_{n}}\right]=\sum_{k=1}^{s}\mu_{k}=s\mu_{1}-\sum_{i=1}^{s}\frac{s-i}{i}.\]

\subsection{\label{sub:proof_DRF_composite_GM}Proof of Theorem \ref{thm:DRF-CGM}}

The proof of Theorem \ref{thm:DRF-CGM} is similar to that of Theorem
2 in our earlier paper \cite{Dai_IT2008_Quantization_Grassmannian_manifold};
the difference being that the composite Grassmann manifold is of interest
here while the {}``single'' Grassmann manifold is the focus in that
work. The key step of this proof is the volume calculation of a small
ball in the composite Grassmann manifold. Given the volume formula,
the upper and lower bounds follow from the exact arguments in \cite{Dai_IT2008_Quantization_Grassmannian_manifold}.

A metric ball in $\mathcal{G}_{n,p}^{\left(m\right)}\left(\mathbb{L}\right)$
centered at $P^{\left(m\right)}\in\mathcal{G}_{n,p}^{\left(m\right)}\left(\mathbb{L}\right)$
with radius $\delta\ge0$ is defined as \[
B_{P^{\left(m\right)}}\left(\delta\right):=\left\{ Q^{\left(m\right)}\in\mathcal{G}_{n,p}^{\left(m\right)}\left(\mathbb{L}\right):\; d_{c}\left(P^{\left(m\right)},Q^{\left(m\right)}\right)\le\delta\right\} .\]
 The volume of $B_{P^{\left(m\right)}}\left(\delta\right)$ as the
probability of an isotropically distributed $Q^{\left(m\right)}\in\mathcal{G}_{n,p}^{\left(m\right)}\left(\mathbb{L}\right)$
in this ball: \[
\mu\left(B_{P^{\left(m\right)}}\left(\delta\right)\right):=\Pr\left(Q^{\left(m\right)}\in B_{P^{\left(m\right)}}\left(\delta\right)\right).\]
 Since $\mu\left(B_{P^{\left(m\right)}}\left(\delta\right)\right)$
is independent of the choice of the center $P^{\left(m\right)}$,
we simply denote it by $\mu^{\left(m\right)}\left(\delta\right)$.
We have:

\begin{thm}
\label{thm:volume-CGM}When $\delta\le1$, \begin{equation}
\mu^{\left(m\right)}\left(\delta\right)=\frac{\Gamma^{m}\left(\frac{t}{2}+1\right)}{\Gamma\left(m\frac{t}{2}+1\right)}c_{n,p,p,\beta}^{m}\delta^{mt},\label{eq:volume-CGM}\end{equation}
where $c_{n,p,p,\beta}$ and $t$ are defined in Lemma \ref{lem:DRF-GM}. 
\end{thm}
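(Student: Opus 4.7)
The plan is to reduce the volume calculation on the composite Grassmann manifold to an $m$-dimensional Dirichlet integral, using the single-manifold ball volume formula already established in \cite{Dai_IT2008_Quantization_Grassmannian_manifold} together with the fact that the isotropic measure on $\mathcal{G}_{n,p}^{(m)}(\mathbb{L})$ is the product of the isotropic measures on the $m$ composed copies of $\mathcal{G}_{n,p}(\mathbb{L})$. The key conceptual observation is that, because of this product structure and the definition $d_c^2(P^{(m)},Q^{(m)})=\sum_{j=1}^{m}d_c^2(P_j,Q_j)$, the squared chordal distances $X_j:=d_c^2(P_j,Q_j)$ are i.i.d. whenever $Q^{(m)}$ is isotropically distributed, and $\mu^{(m)}(\delta)=\Pr\bigl(\sum_{j=1}^{m}X_j\le\delta^{2}\bigr)$.

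First I would recall, from Lemma \ref{lem:DRF-GM} and the volume computation in \cite{Dai_IT2008_Quantization_Grassmannian_manifold}, that the volume of a single ball on $\mathcal{G}_{n,p}(\mathbb{L})$ is $\mu^{(1)}(r)=c_{n,p,p,\beta}\,r^{t}$ for $r$ in the regime where this quantity is at most $1$. Hence the CDF of $X_j=d_c^2(P_j,Q_j)$ satisfies $\Pr(X_j\le s)=c_{n,p,p,\beta}\,s^{t/2}$ on the corresponding interval, so $X_j$ has density
\[
f_{X}(s)=\frac{t}{2}\,c_{n,p,p,\beta}\,s^{t/2-1}
\]
there. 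Since the condition $\delta\le 1$ forces each $X_j\le\delta^2\le 1$ to lie in the valid range of the single-manifold formula, this explicit density may be used throughout the integration domain $\{\sum s_j\le \delta^{2}\}$.

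Next I would write
\[
\mu^{(m)}(\delta)=\Bigl(\tfrac{t}{2}\Bigr)^{m}c_{n,p,p,\beta}^{m}\int_{\substack{s_j\ge 0\\ \sum_j s_j\le \delta^{2}}}\prod_{j=1}^{m}s_j^{t/2-1}\,ds_1\cdots ds_m,
\]
which is precisely a Dirichlet integral. Applying the classical Dirichlet formula $\int_{\sum s_j\le R,\,s_j\ge 0}\prod s_j^{\alpha_j-1}ds=\frac{\prod\Gamma(\alpha_j)}{\Gamma(\sum\alpha_j+1)}R^{\sum\alpha_j}$ with $\alpha_j=t/2$ and $R=\delta^{2}$ gives
\[
\mu^{(m)}(\delta)=\Bigl(\tfrac{t}{2}\Bigr)^{m}c_{n,p,p,\beta}^{m}\frac{\Gamma^{m}(t/2)}{\Gamma(mt/2+1)}\,\delta^{mt},
\]
and the functional equation $\tfrac{t}{2}\Gamma(t/2)=\Gamma(t/2+1)$ converts this to the stated form \eqref{eq:volume-CGM}.

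The main technical hurdle, I anticipate, is the bookkeeping around the validity of the single-ball formula. I need each coordinate $X_j$ to remain within the range where $F_X(s)=c_{n,p,p,\beta}s^{t/2}$ actually holds, rather than saturating to $1$; the hypothesis $\delta\le 1$ (or, more precisely, $c_{n,p,p,\beta}\delta^{t}\le 1$) must be leveraged to guarantee this. Beyond that the argument is essentially a clean Dirichlet computation, and the structure of the answer, a product of $m$ single-ball volumes corrected by the Beta/Dirichlet factor $\Gamma^{m}(t/2+1)/\Gamma(mt/2+1)$, reflects exactly the product-plus-spherical-sum geometry of the composite manifold.
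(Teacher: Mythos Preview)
Your approach is essentially the same as the paper's: both reduce the composite volume to an iterated convolution of the single-manifold power law $f_X(s)\propto s^{t/2-1}$, evaluated via a Beta/Dirichlet identity. The paper does the $m=2$ case explicitly (a Beta integral after the substitution $\tau=xy$) and then invokes induction on $m$, whereas you go directly to the $m$-fold Dirichlet integral; these are equivalent computations.

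One point where the paper is slightly more careful: it carries the correction $\mu^{(1)}(d_c^2\le x)=c\,x^{t/2}\bigl(1+O(x)\bigr)$ through the convolution, so that the conclusion reads $\mu^{(m)}(d_c^2\le x)=\frac{\Gamma^{m}(t/2+1)}{\Gamma(mt/2+1)}c^{m}x^{mt/2}\bigl(1+O(x)\bigr)$. You treat the single-ball formula as exact on $[0,1]$, which it is not; the hypothesis $\delta\le 1$ only guarantees the leading-order expression is valid, not that the remainder vanishes. Since the theorem (as used downstream in the distortion-rate bounds) is really a leading-order statement, your argument goes through once you reinstate the $(1+O(s))$ factor in $f_X(s)$ and observe that it propagates harmlessly through the Dirichlet integral as an overall $(1+O(\delta^2))$.
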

\begin{proof}
Let us drop the subscript of $c_{n,p,p,\beta}$ during the proof.
In \cite{Dai_IT2008_Quantization_Grassmannian_manifold}, we proved
that for a single Grassmann manifold, $\mu^{\left(1\right)}\left(d_{c}^{2}\le x\right)=\mu^{\left(1\right)}\left(\sqrt{x}\right)=cx^{\frac{t}{2}}\left(1+O\left(x\right)\right)$
when $x\le1$, and it can be verified that \[
d\mu\left(d_{c}^{2}\le x\right)=\frac{t}{2}cx^{\frac{t}{2}-1}\left(1+O\left(x\right)\right)\cdot dx.\]
 By the definition of the volume, $d\mu^{\left(2\right)}\left(x\right)/dx$
is a convolution of $d\mu\left(x\right)/dx$ and $d\mu\left(x\right)/dx$.
So, \begin{align*}
\frac{d\mu^{\left(2\right)}\left(x\right)}{dx} & =\int_{0}^{x}\frac{t^{2}}{4}c^{2}\tau^{\frac{t}{2}-1}\left(x-\tau\right)^{\frac{t}{2}-1}\left(1+O\left(\tau\right)\right)\left(1+O\left(x-\tau\right)\right)d\tau\\
 & \overset{\left(a\right)}{=}\frac{t^{2}}{4}c^{2}x^{t-1}\int_{0}^{1}y^{\frac{t}{2}-1}\left(1-y\right)^{\frac{t}{2}-1}\left(1+O\left(xy\right)+O\left(x\left(1-y\right)\right)\right)dy\\
 & =\frac{t^{2}}{4}c^{2}x^{t-1}\frac{\Gamma\left(\frac{t}{2}\right)\Gamma\left(\frac{t}{2}\right)}{\Gamma\left(t\right)}\left(1+O\left(x\right)\right),\end{align*}
 where $\left(a\right)$ follows from the variable change $\tau=xy$.
A calculation produces \[
\mu^{\left(2\right)}\left(d_{c}^{2}\le x\right)=\frac{\Gamma\left(\frac{t}{2}+1\right)\Gamma\left(\frac{t}{2}+1\right)}{\Gamma\left(t+1\right)}c^{2}x^{t}\left(1+O\left(x\right)\right).\]
By mathematical induction, we reach (\ref{eq:volume-CGM}). Note that
$\delta\le1$ is required in every step. 
\end{proof}

Based on the volume formula, an upper bound on the distortion rate
function $D^{*}\left(K\right)$ on the composite Grassmann manifold
\[
D^{*}\left(K\right)\le\frac{2}{mt}\Gamma\left(\frac{2}{mt}\right)\frac{\Gamma^{\frac{2}{mt}}\left(m\frac{t}{2}+1\right)}{\Gamma^{\frac{2}{t}}\left(\frac{t}{2}+1\right)}c_{n,p,p,\beta}^{-\frac{2}{t}}2^{-\frac{2\log_{2}K}{mt}}\left(1+o\left(1\right)\right)\]
 is derived by calculating the average distortion of random codes
(see \cite{Dai_IT2008_Quantization_Grassmannian_manifold} for details).
Furthermore, by the sphere packing/covering argument (again see \cite{Dai_IT2008_Quantization_Grassmannian_manifold}
for details), the lower bound \[
\frac{mt}{mt+2}\frac{\Gamma^{\frac{2}{mt}}\left(mt+1\right)}{\Gamma^{\frac{2}{t}}\left(t+1\right)}c_{n,p,p,\beta}^{-\frac{2}{t}}2^{-\frac{2\log_{2}K}{mt}}\left(1+o\left(1\right)\right)\le D^{*}\left(K\right)\]
 is arrived at. Theorem \ref{thm:DRF-CGM} is proved.

\subsection{\label{sub:Proof-of-Theorem-n_bar}Proof of Theorem \ref{thm:n_bar}}

The key step is to prove that $\mathrm{E}_{\mathcal{B},\mathbf{H}}\left[\mathbf{V}_{k}^{\dagger}\mathbf{b}_{k}\mathbf{b}_{k}^{\dagger}\mathbf{V}_{k}\right]=\mathrm{diag}\left[\frac{\gamma}{s},\frac{s-\gamma}{s\left(L_{T}-1\right)},\cdots,\frac{s-\gamma}{s\left(L_{T}-1\right)}\right]$,
where $\mathbf{V}_{k}$ is from the singular value decomposition $\mathbf{H}_{\left(k\right)}=\mathbf{U}_{k}\mathbf{\Lambda}_{k}\bar{\mathbf{V}}_{k}^{\dagger}$.
Let $\mathbf{V}_{k}=\left[\mathbf{v}_{k}\bar{\mathbf{V}}_{k}\right]$
where $\bar{\mathbf{V}}_{k}\in\mathbb{C}^{L_{T}\times\left(L_{T}-1\right)}$
is composed of all the columns of $\mathbf{V}_{k}$ except $\mathbf{v}_{k}$.
Let $\mathbf{V}=\left[\mathbf{v}_{1}\cdots\mathbf{v}_{s}\right]$.
Recall our feedback function $\varphi\left(\mathbf{V}\right)$ in
(\ref{eq:quantization-fn-eigenchannels}) and definition of $\gamma$
in (\ref{eq:def-gamma}). Then the fact that \[
\mathrm{E}_{\mathcal{B},\mathbf{H}}\left[\mathbf{v}_{k}^{\dagger}\mathbf{b}_{k}\mathbf{b}_{k}^{\dagger}\mathbf{v}_{k}\right]=\frac{\gamma}{s}\]
is implied by the following lemma.

\begin{lemma}
\label{lem:correlation_v_k_b_k}Let $\mathbf{V}\in\mathcal{M}_{L_{T},1}^{\left(s\right)}$
be isotropically distributed and $\mathcal{B}\subset\mathcal{M}_{L_{T},1}^{\left(s\right)}$
be randomly generated from the isotropic distribution. Let $\mathbf{B}=\varphi\left(\mathbf{V}\right)$
where $\varphi\left(\cdot\right)$ is given in (\ref{eq:quantization-fn-eigenchannels})
and $\gamma$ is given by (\ref{eq:def-gamma}). Then \[
\mathrm{E}_{\mathcal{B},\mathbf{V}}\left[\mathbf{V}^{\dagger}\mathbf{B}\mathbf{B}^{\dagger}\mathbf{V}\right]=\frac{\gamma}{s}\mathbf{I}_{s}.\]

\end{lemma}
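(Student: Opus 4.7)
The plan is to exploit the joint permutation invariance of $(\mathbf{V}, \mathcal{B})$ under the symmetric group $S_{s}$, together with the phase-rotation invariance of the quantization objective. The key observation is that the maximization defining $\varphi$ is symmetric under simultaneously relabeling the $s$ coordinate slots in $\mathbf{V}$ and in every codeword of $\mathcal{B}$, so that the quantizer output transforms equivariantly.

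First I would set up the symmetry. For each $\pi\in S_{s}$, let $\sigma_{\pi}$ denote the operation $[\mathbf{v}_{1}\cdots\mathbf{v}_{s}]\mapsto[\mathbf{v}_{\pi(1)}\cdots\mathbf{v}_{\pi(s)}]$, extended to act on every codeword of $\mathcal{B}$ separately. Because each column of $\mathbf{V}$ and of every codeword is independently isotropic on the unit sphere of $\mathbb{C}^{L_{T}}$, the joint law of $(\mathbf{V},\mathcal{B})$ is invariant under $(\mathbf{V},\mathcal{B})\mapsto(\sigma_{\pi}\mathbf{V},\sigma_{\pi}\mathcal{B})$. Moreover the quantization cost $\sum_{k=1}^{s}|\mathbf{v}_{k}^{\dagger}\mathbf{b}_{k}|^{2}$ is itself $S_{s}$-invariant, so $\varphi(\sigma_{\pi}\mathbf{V};\sigma_{\pi}\mathcal{B})=\sigma_{\pi}\bigl(\varphi(\mathbf{V};\mathcal{B})\bigr)$. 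This equivariance instantly yields $\mathrm{E}[|\mathbf{v}_{k}^{\dagger}\mathbf{b}_{k}|^{2}]=\mathrm{E}[|\mathbf{v}_{k'}^{\dagger}\mathbf{b}_{k'}|^{2}]$ for all $k,k'$. Combined with $\gamma=\mathrm{E}[\sum_{k}|\mathbf{v}_{k}^{\dagger}\mathbf{b}_{k}|^{2}]$, each matched diagonal term equals $\gamma/s$, which pins down the diagonal of $\mathbf{V}^{\dagger}\mathbf{B}\mathbf{B}^{\dagger}\mathbf{V}$.

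Next I would kill the off-diagonal entries with a phase-rotation argument. Fix $i\neq j$ and replace $\mathbf{v}_{i}$ by $e^{\mathrm{i}\theta}\mathbf{v}_{i}$, leaving every other $\mathbf{v}_{\ell}$ unchanged. Because the quantization cost depends on $\mathbf{v}_{i}$ only through $|\mathbf{v}_{i}^{\dagger}\mathbf{b}_{i}|^{2}$, the selected codeword $\mathbf{B}=\varphi(\mathbf{V};\mathcal{B})$ is unchanged. However the $(i,j)$-entry $(\mathbf{V}^{\dagger}\mathbf{B}\mathbf{B}^{\dagger}\mathbf{V})_{ij}=\sum_{k}(\mathbf{v}_{i}^{\dagger}\mathbf{b}_{k})(\mathbf{b}_{k}^{\dagger}\mathbf{v}_{j})$ picks up the factor $e^{-\mathrm{i}\theta}$. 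Since $e^{\mathrm{i}\theta}\mathbf{v}_{i}\stackrel{d}{=}\mathbf{v}_{i}$ by isotropy (all other columns being untouched), the expectation is multiplied by $e^{-\mathrm{i}\theta}$ for every $\theta$, so it must vanish. Putting the diagonal and off-diagonal parts together gives $\mathrm{E}[\mathbf{V}^{\dagger}\mathbf{B}\mathbf{B}^{\dagger}\mathbf{V}]=\frac{\gamma}{s}\mathbf{I}_{s}$.

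The main obstacle, and the point where care is needed, is the bookkeeping that guarantees $\varphi$ genuinely commutes with $\sigma_{\pi}$: one must recall that $\mathcal{B}$ is itself a random \emph{set} of composite matrices and that $\sigma_{\pi}$ must be applied consistently to every element of $\mathcal{B}$ before invoking distributional invariance; otherwise the maximizer need not transform in the right way. Likewise, in the phase-rotation step one must verify that the invariance of the scalar cost under $\mathbf{v}_{i}\mapsto e^{\mathrm{i}\theta}\mathbf{v}_{i}$ implies the invariance of $\mathbf{B}$ itself (rather than merely of its $i$-th column), because the cross-terms $(\mathbf{v}_{i}^{\dagger}\mathbf{b}_{k})(\mathbf{b}_{k}^{\dagger}\mathbf{v}_{j})$ for $k\notin\{i,j\}$ depend on the other columns of $\mathbf{B}$ as well. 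Once these invariances are justified, everything else is elementary manipulation.
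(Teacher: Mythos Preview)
Your proposal is correct and is essentially the same argument as the paper's: both use a phase rotation $\mathbf{v}_{i}\mapsto e^{\mathrm{i}\theta}\mathbf{v}_{i}$ (which leaves the quantizer output $\mathbf{B}$ unchanged) to force the off-diagonal entries of $\mathrm{E}[\mathbf{V}^{\dagger}\mathbf{B}\mathbf{B}^{\dagger}\mathbf{V}]$ to vanish, and a simultaneous permutation of the columns of $\mathbf{V}$ and of every codeword in $\mathcal{B}$ (under which $\varphi$ is equivariant) to equalize the diagonal entries, after which the trace condition $\mathrm{tr}(\mathbf{Z})=\gamma$ finishes the job. The only difference is that the paper carries out the two steps in the opposite order and phrases them via the matrix identities $\mathbf{Z}=\mathbf{A}_{k}^{\dagger}\mathbf{Z}\mathbf{A}_{k}$ and $\mathbf{Z}=\mathbf{P}^{\dagger}\mathbf{Z}\mathbf{P}$, which is purely cosmetic.
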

\begin{proof}
Let $\mathbf{Z}=\mathrm{E}_{\mathcal{B},\mathbf{V}}\left[\mathbf{V}^{\dagger}\mathbf{B}\mathbf{B}^{\dagger}\mathbf{V}\right]$.
For any $\theta\in\left[0,2\pi\right)$, let $\mathbf{A}_{k}=\mathrm{diag}\left[1,\cdots,1,e^{j\theta},1,\cdots,1\right]$
be obtained by replacing the $k^{\mathrm{th}}$ diagonal element of
$\mathbf{I}$ with $e^{j\theta}$. It can be verified that $\mathbf{V}\mathbf{A}_{k}\in\mathcal{M}_{L_{T},1}^{\left(s\right)}$
is isotropically distributed, and $\varphi\left(\mathbf{V}\mathbf{A}_{k}\right)=\varphi\left(\mathbf{V}\right)=\mathbf{B}$.
We have \begin{align*}
\mathbf{Z} & =\mathrm{E}_{\mathcal{B},\mathbf{V}\mathbf{A}_{k}}\left[\mathbf{A}_{k}^{\dagger}\mathbf{V}^{\dagger}\mathbf{B}\mathbf{B}^{\dagger}\mathbf{V}\mathbf{A}_{k}\right]\\
 & =\mathbf{A}_{k}^{\dagger}\mathrm{E}_{\mathcal{B},\mathbf{V}}\left[\mathbf{V}^{\dagger}\mathbf{B}\mathbf{B}^{\dagger}\mathbf{V}\right]\mathbf{A}_{k}\\
 & =\mathbf{A}_{k}^{\dagger}\mathbf{Z}\mathbf{A}_{k},\end{align*}
where the first equality is obtained by changing the variable from
$\mathbf{V}$ to $\mathbf{VA}_{k}$, and the second equality is obtained
by replacing the measure of $\mathbf{VA}_{k}$ with the measure of
$\mathbf{V}$. Then $\left(\mathbf{Z}\right)_{k,j}=e^{-j\theta}\left(\mathbf{Z}\right)_{k,j}$
for $j\ne k$, which is only possible if $\left(\mathbf{Z}\right)_{k,j}=0$.
Therefore, $\mathbf{Z}$ is a diagonal matrix. 

Now let $\mathbf{P}\in\mathbb{R}^{s\times s}$ be a permutation matrix
generated by permutating rows/columns of the identity matrix. Let
$\mathcal{B}\mathbf{P}=\left\{ \mathbf{BP}:\;\mathbf{B}\in\mathcal{B}\right\} $.
Then $\mathbf{VP}\in\mathcal{M}_{L_{T},1}^{\left(s\right)}$ and $\mathbf{BP}\in\mathcal{M}_{L_{T},1}^{\left(s\right)}$
are isotropically distributed. It can be verified that $\varphi_{\mathcal{B}\mathbf{P}}\left(\mathbf{VP}\right)=\mathbf{BP}=\varphi_{\mathcal{B}}\left(\mathbf{V}\right)\mathbf{P}$,
where the subscript $\varphi$ emphasizes the choice of codebook.
Then,\begin{align*}
\mathbf{Z} & =\mathrm{E}_{\mathcal{B}\mathbf{P},\mathbf{VP}}\left[\left(\mathbf{VP}\right)^{\dagger}\varphi_{\mathcal{B}\mathbf{P}}\left(\mathbf{VP}\right)\varphi_{\mathcal{B}\mathbf{P}}\left(\mathbf{VP}\right)^{\dagger}\left(\mathbf{VP}\right)\right]\\
 & =\mathbf{P}^{\dagger}\mathrm{E}_{\mathcal{B}\mathbf{P},\mathbf{V}}\left[\mathbf{V}^{\dagger}\varphi_{\mathcal{B}\mathbf{P}}\left(\mathbf{VP}\right)\varphi_{\mathcal{B}\mathbf{P}}\left(\mathbf{VP}\right)^{\dagger}\mathbf{V}\right]\mathbf{P}\\
 & =\mathbf{P}^{\dagger}\mathrm{E}_{\mathcal{B}\mathbf{P},\mathbf{V}}\left[\mathbf{V}^{\dagger}\varphi_{\mathcal{B}}\left(\mathbf{V}\right)\mathbf{P}\mathbf{P}^{\dagger}\varphi_{\mathcal{B}}\left(\mathbf{V}\right)^{\dagger}\mathbf{V}\right]\mathbf{P}\\
 & =\mathbf{P}^{\dagger}\mathrm{E}_{\mathcal{B},\mathbf{V}}\left[\mathbf{V}^{\dagger}\varphi_{\mathcal{B}}\left(\mathbf{V}\right)\varphi_{\mathcal{B}}\left(\mathbf{V}\right)^{\dagger}\mathbf{V}\right]\mathbf{P}\\
 & =\mathbf{P}^{\dagger}\mathbf{Z}\mathbf{P},\end{align*}
where the first equality is obtained by variables change, and the
second and fourth equality follows from measure replacement. It follows
that $\left(\mathbf{Z}\right)_{i,i}=\left(\mathbf{Z}\right)_{j,j}$
for $1\le i,j\le s$.

Finally, $\mathbf{Z}=\frac{\gamma}{s}\mathbf{I}$ follows from the
fact that $\mathrm{tr}\left(\mathbf{Z}\right)=\mathrm{E}\left[\mathrm{tr}\left(\mathbf{V}^{\dagger}\mathbf{B}\mathbf{B}^{\dagger}\mathbf{V}\right)\right]=\gamma$. 
\end{proof}

We evaluate \[
\mathrm{E}\left[\mathbf{V}_{k}^{\dagger}\mathbf{b}_{k}\mathbf{b}_{k}^{\dagger}\mathbf{V}_{k}\right]=\left[\begin{array}{cc}
\mathrm{E}\left[\mathbf{v}_{k}^{\dagger}\mathbf{b}_{k}\mathbf{b}_{k}^{\dagger}\mathbf{v}_{k}\right] & \mathrm{E}\left[\mathbf{v}_{k}^{\dagger}\mathbf{b}_{k}\mathbf{b}_{k}^{\dagger}\bar{\mathbf{V}}_{k}\right]\\
\mathrm{E}\left[\bar{\mathbf{V}}_{k}^{\dagger}\mathbf{b}_{k}\mathbf{b}_{k}^{\dagger}\mathbf{v}_{k}\right] & \mathrm{E}\left[\bar{\mathbf{V}}_{k}^{\dagger}\mathbf{b}_{k}\mathbf{b}_{k}^{\dagger}\bar{\mathbf{V}}_{k}\right]\end{array}\right].\]
For any unitary matrix $\mathbf{U}_{r}\in\mathbb{C}^{\left(L_{T}-1\right)\times\left(L_{T}-1\right)}$,
$\left[\mathbf{v}_{k},\bar{\mathbf{V}}_{k}\mathbf{U}\right]$ is also
isotropically distributed. Employ the method in the proof of Lemma
\ref{lem:correlation_v_k_b_k} to find that\[
\mathrm{E}\left[\mathbf{v}_{k}^{\dagger}\mathbf{b}_{k}\mathbf{b}_{k}^{\dagger}\bar{\mathbf{V}}_{k}\right]=\mathrm{E}\left[\mathbf{v}_{k}^{\dagger}\mathbf{b}_{k}\mathbf{b}_{k}^{\dagger}\bar{\mathbf{V}}_{k}\right]\mathbf{U},\]
\[
\mathrm{and}\quad\mathrm{E}\left[\bar{\mathbf{V}}_{k}^{\dagger}\mathbf{b}_{k}\mathbf{b}_{k}^{\dagger}\bar{\mathbf{V}}_{k}\right]=\mathbf{U}^{\dagger}\mathrm{E}\left[\bar{\mathbf{V}}_{k}^{\dagger}\mathbf{b}_{k}\mathbf{b}_{k}^{\dagger}\bar{\mathbf{V}}_{k}\right]\mathbf{U}.\]
Therefore, $\mathrm{E}\left[\mathbf{v}_{k}^{\dagger}\mathbf{b}_{k}\mathbf{b}_{k}^{\dagger}\bar{\mathbf{V}}_{k}\right]=\mathbf{o}^{\dagger}$
and $\mathrm{E}\left[\bar{\mathbf{V}}_{k}^{\dagger}\mathbf{b}_{k}\mathbf{b}_{k}^{\dagger}\bar{\mathbf{V}}_{k}\right]=c\mathbf{I}_{L_{T}-1}$
for some constant $c$. Note that $\mathrm{E}\left[\mathbf{v}_{k}^{\dagger}\mathbf{b}_{k}\mathbf{b}_{k}^{\dagger}\mathbf{v}_{k}\right]=\frac{\gamma}{s}$
and $\mathrm{E}\left[\mathrm{tr}\left(\mathbf{V}_{k}^{\dagger}\mathbf{b}_{k}\mathbf{b}_{k}^{\dagger}\mathbf{V}_{k}\right)\right]=1$.
Hence, $c=\frac{s-\gamma}{s\left(L_{T}-1\right)}$ and $\mathrm{E}\left[\mathbf{V}_{k}^{\dagger}\mathbf{b}_{k}\mathbf{b}_{k}^{\dagger}\mathbf{V}_{k}\right]=\mathrm{diag}\left[\frac{\gamma}{s},\frac{s-\gamma}{s\left(L_{T}-1\right)},\cdots,\frac{s-\gamma}{s\left(L_{T}-1\right)}\right]$.

Finally, \begin{align*}
\mathrm{E}_{\mathcal{B}}\left[\eta\right] & =\frac{1}{sL_{R}}\mathrm{E}_{\mathcal{B},\mathbf{H}}\left[\sum_{k=1}^{s}n_{k}^{2}\right]\\
 & =\frac{1}{sL_{R}}\sum_{k=1}^{s}\mathrm{E}_{\mathcal{B},\mathbf{H}}\left[\mathrm{tr}\left(\mathbf{H}_{\left(k\right)}\mathbf{b}_{k}\mathbf{b}_{k}^{\dagger}\mathbf{H}_{\left(k\right)}^{\dagger}\right)\right]\\
 & =\frac{1}{sL_{R}}\sum_{k=1}^{s}\mathrm{tr}\left(\mathrm{E}_{\mathcal{B},\mathbf{H}}\left[\mathbf{V}_{k}^{\dagger}\mathbf{b}_{k}\mathbf{b}_{k}^{\dagger}\mathbf{V}_{k}\right]\mathrm{E}_{\mathbf{H}}\left[\mathbf{\Lambda}_{k}^{\dagger}\mathbf{\Lambda}_{k}\right]\right)\\
 & =\frac{1}{sL_{R}}\sum_{k=1}^{s}\left(\frac{\gamma}{s}\zeta_{1}\mathrm{E}_{\mathbf{H}}\left[n_{\left(k\right)}^{2}\right]+\frac{s-\gamma}{s}\frac{\left(1-\zeta_{1}\right)\mathrm{E}_{\mathbf{H}}\left[n_{\left(k\right)}^{2}\right]}{L_{T}-1}\right)\\
 & =\frac{1}{L_{R}}\left(\frac{\gamma}{s}\zeta_{1}+\frac{s-\gamma}{s}\frac{1-\zeta_{1}}{L_{T}-1}\right)\bar{n}_{\left(\cdot\right)}^{2},\end{align*}
where the third line follows from the fact that $\mathbf{\Lambda}_{k}$
is independent of $\mathbf{V}_{k}$ and $\mathbf{b}_{k}$. 

\bibliographystyle{IEEEtran}
\bibliography{Bib/_Heath,Bib/_love,Bib/_Rao,Bib/_Tse,Bib/FeedbackMIMO_append,Bib/MIMO_basic,Bib/_Dai,Bib/RandomMatrix,Bib/_Verdu,Bib/Books,Bib/Multi_Access,Bib/Math,Bib/_Jindal,Bib/_BC_Feedback}

% Generated by IEEEtran.bst, version: 1.12 (2007/01/11)
\begin{thebibliography}{10}
\providecommand{\url}[1]{#1}
\csname url@samestyle\endcsname
\providecommand{\newblock}{\relax}
\providecommand{\bibinfo}[2]{#2}
\providecommand{\BIBentrySTDinterwordspacing}{\spaceskip=0pt\relax}
\providecommand{\BIBentryALTinterwordstretchfactor}{4}
\providecommand{\BIBentryALTinterwordspacing}{\spaceskip=\fontdimen2\font plus
\BIBentryALTinterwordstretchfactor\fontdimen3\font minus
  \fontdimen4\font\relax}
\providecommand{\BIBforeignlanguage}[2]{{%
\expandafter\ifx\csname l@#1\endcsname\relax
\typeout{** WARNING: IEEEtran.bst: No hyphenation pattern has been}%
\typeout{** loaded for the language `#1'. Using the pattern for}%
\typeout{** the default language instead.}%
\else
\language=\csname l@#1\endcsname
\fi
#2}}
\providecommand{\BIBdecl}{\relax}
\BIBdecl

\bibitem{Telatar_EuroTele99_Capacity_MIMO}
I.~Telatar, ``Capacity of multi-antenna gaussian channels,'' \emph{Euro. Trans.
  Telecommun.}, vol.~10, pp. 585--595, 1999.

\bibitem{Sabharwal_IT03_Beamforming_MIMO}
K.~K. Mukkavilli, A.~Sabharwal, E.~Erkip, and B.~Aazhang, ``On beamforming with
  finite rate feedback in multiple-antenna systems,'' \emph{IEEE Trans. Info.
  Theory}, vol.~49, no.~10, pp. 2562--2579, 2003.

\bibitem{Love_IT03_Grassman_Beamforming_MIMO}
D.~J. Love, J.~Heath, R.~W., and T.~Strohmer, ``Grassmannian beamforming for
  multiple-input multiple-output wireless systems,'' \emph{IEEE Trans. Info.
  Theory}, vol.~49, no.~10, pp. 2735--2747, 2003.

\bibitem{Rao_SP07_Feedback_High_Resolution}
J.~Zheng, E.~R. Duni, and B.~D. Rao, ``Analysis of multiple-antenna systems
  with finite-rate feedback using high-resolution quantization theory,''
  \emph{IEEE Trans. Signal Processing}, vol.~55, no.~4, pp. 1461--1476, 2007.

\bibitem{Honig_Allerton03_Benefits_Limited_Feedback_Wireless_Channels}
W.~Santipach, Y.~Sun, and M.~L. Honig, ``Benefits of limited feedback for
  wireless channels,'' in \emph{Proc. Allerton Conf. on Commun., Control, and
  Computing}, 2003.

\bibitem{Rao_icc05_MIMO_spatial_multiplexing_limit_feedback}
J.~C. Roh and B.~D. Rao, ``{MIMO} spatial multiplexing systems with limited
  feedback,'' in \emph{Proc. IEEE International Conference on Communications
  (ICC)}, 2005.

\bibitem{Love_IT2005_limited_feedback_unitary_precoding}
D.~Love and J.~Heath, R.W., ``Limited feedback unitary precoding for spatial
  multiplexing systems,'' \emph{IEEE Trans. Info. Theory}, vol.~51, no.~8, pp.
  2967--2976, 2005.

\bibitem{Heath_ICASSP05_Quantization_Grassmann_Manifold}
B.~Mondal, R.~W.~H. Jr., and L.~W. Hanlen, ``Quantization on the {G}rassmann
  manifold: Applications to precoded {MIMO} wireless systems,'' in \emph{Proc.
  IEEE International Conference on Acoustics, Speech, and Signal Processing
  (ICASSP)}, 2005, pp. 1025--1028.

\bibitem{Dai_ISIT05_Power_onoff_strategy_design}
W.~Dai, Y.~Liu, B.~Rider, and V.~Lau, ``On the information rate of {MIMO}
  systems with finite rate channel state feedback and power on/off strategy,''
  in \emph{Proc. IEEE International Symposium on Information Theory (ISIT)},
  2005, pp. 1549--1553.

\bibitem{Dai_05_Power_onoff_strategy_design_finite_rate_feedback}
\BIBentryALTinterwordspacing
W.~Dai, Y.~Liu, V.~K.~N. Lau, and B.~Rider, ``On the information rate of {MIMO}
  systems with finite rate channel state feedback using beamforming and power
  on/off strategy,'' \emph{IEEE Trans. Info. Theory}, submitted, 2005.
  [Online]. Available: \url{http://arxiv.org/abs/cs/0603040}
\BIBentrySTDinterwordspacing

\bibitem{Dai_IT2008_Quantization_Grassmannian_manifold}
W.~Dai, Y.~Liu, and B.~Rider, ``Quantization bounds on {G}rassmann manifolds
  and applications to {MIMO} systems,'' \emph{IEEE Trans. Inform. Theory},
  vol.~54, no.~3, pp. 1108--1123, March 2008.

\bibitem{Cover_Elements_Information_Theory}
T.~M. Cover and J.~A. Thomas, \emph{Elements of Information Theory},
  2nd~ed.\hskip 1em plus 0.5em minus 0.4em\relax John Wiley and Sons, 2006.

\bibitem{Vishwanath_IT2003_Duality_BC_MAC}
S.~Vishwanath, N.~Jindal, and A.~Goldsmith, ``Duality, achievable rates, and
  sum-rate capacity of gaussian mimo broadcast channels,'' \emph{Information
  Theory, IEEE Transactions on}, vol.~49, no.~10, pp. 2658--2668, 2003.

\bibitem{Sharif_IT05_MIMO_BC_Feedback}
M.~Sharif and B.~Hassibi, ``On the capacity of mimo broadcast channels with
  partial side information,'' \emph{Information Theory, IEEE Transactions on},
  vol.~51, no.~2, pp. 506--522, 2005.

\bibitem{Jindal_IT06_BC_Feedback}
N.~Jindal, ``{MIMO} broadcast channels with finite-rate feedback,'' \emph{IEEE
  Trans. Info. Theory}, vol.~52, no.~11, pp. 5045--5060, 2006.

\bibitem{Dai_CISS2007_broadcast_multiaccess_channels_single_antenna_users}
W.~Dai, Y.~Liu, and B.~Rider, ``How many users should be turned on in a
  multi-antenna broadcast channel?'' in \emph{Conf. on Info. Sciences and
  Systems (CISS)}, 2007.

\bibitem{Lau_IT04_Capacity_Memoryless_Block_Fading}
V.~K.~N. Lau, L.~Youjian, and T.~A. Chen, ``Capacity of memoryless channels and
  block-fading channels with designable cardinality-constrained channel state
  feedback,'' \emph{IEEE Trans. Info. Theory}, vol.~50, no.~9, pp. 2038--2049,
  2004.

\bibitem{JanosGalambos1987_extreme_order_statistics}
J.~Galambos, \emph{The asymptotic theory of extreme order statistics},
  2nd~ed.\hskip 1em plus 0.5em minus 0.4em\relax Roberte E. Krieger Publishing
  Company, 1987.

\bibitem{Muirhead_book82_multivariate_statistics}
R.~J. Muirhead, \emph{Aspects of multivariate statistical theory}.\hskip 1em
  plus 0.5em minus 0.4em\relax New York: John Wiley and Sons, 1982.

\bibitem{Conway_96_PackingLinesPlanes}
J.~H. Conway, R.~H. Hardin, and N.~J.~A. Sloane, ``Packing lines, planes, etc.,
  packing in {G}rassmannian spaces,'' \emph{Exper. Math.}, vol.~5, pp.
  139--159, 1996.

\bibitem{Haar_1933_Haar_Measure}
A.~Haar, ``{Der Massbegriff in der Theorie der kontinuierlichen Gruppen},''
  \emph{Ann. Math.}, vol.~34, 1933.

\bibitem{Dai_Globecom05_Quantization_bounds_Grassmann_manifold}
W.~Dai, Y.~Liu, and B.~Rider, ``Quantization bounds on {G}rassmann manifolds of
  arbitrary dimensions and {MIMO} communications with feedback,'' in \emph{IEEE
  Global Telecommunications Conference (GLOBECOM)}, 2005.

\bibitem{Verdu_random_matrix_theory_wireless_communications}
A.~M. Tulino and S.~Verdú, ``Random matrices and wireless communications,''
  \emph{Foundations and Trends in Communications and Information Theory},
  vol.~1, no.~1, June 2004.

\bibitem{Verdu_IT1999_Spectral_efficiency_CDMA}
S.~Verdu and S.~Shamai, ``Spectral efficiency of cdma with random spreading,''
  \emph{Information Theory, IEEE Transactions on}, vol.~45, no.~2, pp.
  622--640, 1999.

\bibitem{Cover_book91_information_theory}
T.~M. Cover and J.~A. Thomas, \emph{Elements of Information Theory}.\hskip 1em
  plus 0.5em minus 0.4em\relax New York: John Wiley and Sons, 1991.

\bibitem{Edelman_Rao_2005_Random_Matrix_Theory}
A.~Edelman and N.~R. Rao, ``Random matrix theory,'' \emph{Acta Numerica},
  vol.~14, pp. 233--297, 2005.

\bibitem{James_54_Normal_Multivariate_Analysis_Orthogonal_Group}
A.~T. James, ``Normal multivariate analysis and the orthogonal group,''
  \emph{Ann. Math. Statist.}, vol.~25, no.~1, pp. 40 -- 75, 1954.

\bibitem{Guionnet_Zeitouni_2000}
A.~Guionnet and O.~Zeitouni, ``Concentration of the spectral measure for large
  matrices,'' \emph{Electronic Communications in Probability}, vol.~5, pp.
  119--136, 2000.

\bibitem{Bai_Yin_Krishnaih_1988}
Z.~D. Bai, Y.~Q. Yin, and P.~R. Krishnaih, ``On limit of the largest eigenvalue
  of the large dimensional sample covariance matrix,'' \emph{Probability Theory
  and Related Fields}, vol.~78, pp. 509--521, 1988.

\bibitem{Bai_Yin_1993}
Z.~D. Bai and Y.~Yin, ``Limit of the smallest eigenvalue of a large dimensional
  sample covariance matrix,'' \emph{Annals of Probability}, vol.~21, pp.
  1275--1294, 1993.

\bibitem{Ledoux_2005_Orthogonal_Polynomials}
M.~Ledoux, ``Differential operators and spectral distributions of invariant
  ensembles from the classical orthogonal polynomials. the continuous case,''
  \emph{Electron. J. Probab}, vol.~10, no.~34, pp. 1116--1146, 2005.

\end{thebibliography}

\end{document}